\documentclass[11pt]{article}
\usepackage{amsmath}
\usepackage{graphicx}
\usepackage{enumerate}
\usepackage{natbib}
\usepackage{url} 

\usepackage[normalem]{ulem}

\usepackage{color}

\usepackage{xcolor}

\usepackage{floatrow}
\usepackage{graphicx}
\graphicspath{{images/}}
\usepackage{caption}
\usepackage{subcaption}

\usepackage{dirtytalk}

\usepackage{newtxtext}
\usepackage[subscriptcorrection]{newtxmath}

\usepackage{amsthm}

\newtheorem{theorem}{Theorem}
\newtheorem{proposition}{Proposition}

\newtheorem{corollary}{Corollary}

\newtheorem{lemma}{Lemma}
\newtheorem{remark}{Remark}
\newtheorem{assumption}{Assumption}

\usepackage[plain,noend]{algorithm2e}

\makeatletter
\renewcommand{\algocf@captiontext}[2]{#1\algocf@typo. \AlCapFnt{}#2} 
\def\@algocf@capt@plain{top}
\renewcommand{\algocf@makecaption}[2]{%
  \addtolength{\hsize}{\algomargin}%
  \sbox\@tempboxa{\algocf@captiontext{#1}{#2}}%
  \ifdim\wd\@tempboxa >\hsize
    \hskip .5\algomargin%
    \parbox[t]{\hsize}{\algocf@captiontext{#1}{#2}}
  \else%
    \global\@minipagefalse%
    \hbox to\hsize{\box\@tempboxa}
  \fi%
  \addtolength{\hsize}{-\algomargin}%
}
\makeatother

\usepackage{dirtytalk}
\usepackage{comment}

\usepackage{hyperref}
\usepackage{xr-hyper}

\usepackage{multirow}

\newcommand{\blind}{1}

\usepackage{makecell}

\begin{document}

\def\spacingset#1{\renewcommand{\baselinestretch}%
{#1}\small\normalsize} \spacingset{1}


\if1\blind
{
  \title{\bf Overfitted high-dimensional matrix factorizations via adaptive spectral shrinkage}
  \author{Lorenzo Mauri
  \hspace{.2cm}\\
    Department of Statistical Science, Duke University\\
    \small{\texttt{lorenzo.mauri@duke.edu}}\\
    David B. Dunson\\
        Department of Statistical Science, Duke University\\
    }
  \maketitle
} \fi

\if0\blind
{
  \bigskip
  \bigskip
  \bigskip
  \begin{center}
    {\LARGE\bf Overfitted high-dimensional matrix factorizations via adaptive spectral shrinkage}
\end{center}
  \medskip
} \fi

\bigskip
\begin{abstract}
Factor models are popular approaches for analyzing high-dimensional data to extract low-rank signals and estimate covariances. They decompose the covariance matrix as the sum of low-rank and  diagonal components. A key issue is how to choose the latent dimension $k$, which is particularly challenging when the factor model only holds approximately and in low signal-to-noise scenarios. Bayesian overfitted factor models specify an upper bound on $k$ and rely on structured shrinkage priors to effectively remove extra components. Such approaches are popular and effective, but computationally expensive. We propose a much faster \texttt{EigenBayes} approach that provides valid uncertainty quantification, based on spectral estimation of latent factors and adaptive empirical Bayes calibration of key hyperparameters. The resulting posterior distribution factorizes across outcomes and is analytically tractable, bypassing Markov chain Monte Carlo. We show that \texttt{EigenBayes} adapts to the signal-to-noise ratio of each outcome and latent dimension, while  shrinking superfluous latent components to zero. We establish favorable asymptotic properties and demonstrate strong empirical performance in numerical experiments and a genomics application, where EigenBayes outperforms state-of-the-art alternatives. 
\end{abstract}

\noindent%
{\it Keywords:} 
Empirical Bayes;
Factor analysis; 
High-dimensional; 
Latent variable model; 
Scalable Bayesian computation;
Singular value decomposition
\vfill

\section{Introduction}

Estimating covariances in high-dimensions is a challenging statistical and computational problem, generally requiring $\mathcal O(p^2)$ parameters to be estimated, where $p$ is the ambient dimension of the data. A popular approach introduces latent factors and models the $i$-th observation as
\begin{equation} \label{eq:fact_model}     \begin{aligned}     y_{i}  =  \Lambda \eta_{i} &+\epsilon_{i},  \quad \epsilon_{i} \sim N_p(0, \Sigma), 
\end{aligned}  \quad (i=1, \dots, n),\end{equation}
where $\eta_i \in \mathbb R^k$ denotes the latent factors for the $i$-th observation, $\Lambda \in \mathbb R^{p \times k}$ is the factor loadings matrix, $ \Sigma = \text{diag}(\sigma_{1}^2, \dots, \sigma_{p}^2)$ is the matrix of residual error variances, $n$ is the sample size, $k$ is the latent dimension, and, generally $k \ll p$. 
Assuming $\eta_{i}  \sim N_{k}(0, {I}_{k})$, and integrating out the latent factors, we obtain an equivalent representation: \begin{equation}
y_{i}  \sim N_p\left(0, \Theta \right),\quad (i=1, \dots, n), \quad \Theta = \Lambda \Lambda^\top + \Sigma.\label{eq:fact_model_int}\end{equation}
Model \eqref{eq:fact_model_int} decomposes the covariance  $\Theta$ as a sum of low-rank and diagonal matrices.
A rich literature has developed corresponding estimators. Letting $ M = \begin{bmatrix}
        \eta_1, \cdots, \eta_n
    \end{bmatrix}^\top \in \mathbb R^{n \times k}$ be the matrix of latent factors, spectral approaches let $M^\top M = nI_k$ and estimate the loading matrix $\Lambda$ with $\hat \Lambda = V_{1:k} D_{1:k} / \sqrt{n}$, where $D_{1:k}$ is a diagonal matrix with the $k$ largest singular values of data matrix $Y$ on the diagonal and $V_{1:k}$ is the matrix of left singular vectors \citep{bai_03, Bai2020SimplerProofs, Doz2012QuasiMaximum}.
    A number of consistent information criteria have been proposed to infer $k$ \citep{ic_bn, ic_ah, chen_jic}. 
In our numerical experiments, information criteria tend to underestimate $k$ in challenging signal-to-noise scenarios. Even when $k$ is specified correctly, spectral estimates can suffer from poor estimation accuracy when the magnitude of some latent components is not large compared to the noise. 
\citet{Barigozzi2018ConsistentEO} propose an estimator that addresses this issue based on scaling the entries of the eigenvectors of the sample covariance matrix for an overfitted rank. However, 
their method lacks uncertainty quantification and is outperformed by alternatives in our experiments.


Early Bayesian approaches \citep{fa_rjmcmc} placed priors on $k$ 
and conducted posterior sampling via inefficient reversible jump Markov chain Monte Carlo sampling. More recently, overfitted factor models have been popular, specifying an upper-bound on $k$ and using shrinkage priors on loadings to remove unnecessary factors \citep{sparse_infinite, durante_note, cusp}. Numerous extensions of overfitted factor models have been proposed and applied in a variety of settings \citep{bmsfa,  kowal2023semiparametric, zorzetto2025multivariate, bayesNMF}. These approaches can be sensitive to hyperparameters and Gibbs samplers can suffer from slow convergence and mixing, particularly when $p$ is large. Alternatively, variational approximations have been proposed \citep{stoch_vi, bb_vi, vi_review, ad_vi, single_cell_fm}, but can massively underestimate uncertainty. \citet{rotate} infer sparsity in the loading matrix via an Indian buffet process \citep{ibp} prior, but focus on point estimation. 

Particularly relevant to our development is 
\citet{fable}, who combine the spectral estimation of latent factors with Bayesian inference for the remaining model parameters, including the loadings and residual error variances. \citet{fable} is related to a rich literature combining pre-estimation of latent factors, via spectral methods or joint maximum likelihood estimation, with Bayesian inference (\cite{fan1, Fan2023BridgingFactor, xie2024eigenvector, wu2025sanvi, wu_xie_graph}). Related approaches tend to have excellent empirical performance and strong theoretical support in high-dimensional asymptotic scenarios \citep{chen_jmle, chen_identfiability, lee24, flair, blast, fama, basil}. However, existing approaches in this area do not provide a mechanism for adaptive overfitted rank selection or adaptation to outcomes with varying signal-to-noise ratios.

Inspired by these considerations, we propose 
\texttt{EigenBayes}, which combines the computational efficiency of spectral estimators with data-adaptive shrinkage and uncertainty quantification of Bayesian approaches. We obtain a spectral estimate of the latent factors using an overestimated latent dimension and infer loadings and residual error variances via surrogate Bayesian linear regression tasks. We adopt normal-inverse gamma priors with outcome- and latent component-dependent shrinkage parameters. Crucially, conditioning on the spectral estimate leads to a factorized posterior across outcomes, reducing inference to $p$  Bayesian regressions, enabling embarrassingly parallel computation. While the point estimators can be interpreted as regularized spectral estimators, sampling from the posterior distribution arising from the surrogate regressions allows for full uncertainty quantification, yielding a computationally efficient alternative to the popular overfitted Bayesian factor models mentioned above. We develop a data-adaptive tuning procedure for shrinkage hyperparameters, which admits interpretable closed-form expressions in terms of the singular value decomposition of the observed data.  The resulting estimates adapt to the signal-to-noise ratio of each outcome and latent dimension. Superfluous components are appropriately shrunk to zero, automatically adapting to the true rank.
Further, we prove posterior contraction around the true parameter, a central limit theorem for point estimators, and a Bernstein-von Mises result characterising the asymptotic behaviour of the induced posterior distribution on the entries of the covariance. To our knowledge, this provides one of the first formal frequentist theoretical justifications for empirical Bayes inference in overfitted  factor models. We show substantial gains in accuracy and uncertainty quantification in numerical experiments and demonstrate utility on a  gene expression dataset.

\section{Method}

\subsection{Notation}
For a matrix $A$, we denote by $a_r$ and $a^{(c)}$ its $r$-th row and $c$-th column, respectively. We denote by $s_l(A)$ the $l$-th largest singular value of $A$ and by $||A||$ and $||A||_\infty$ its spectral and entrywise infinity norms, respectively. For a vector $v$, we denote by $||v||$ its Euclidean norm. For two sequences $(a_n)_{n \geq 1}$ and $(b_n)_{n \geq 1}$, we write $a_n \lesssim b_n$, if $a_n \leq C b_n$ eventually for some $C < \infty $.

\subsection{General methodology}\label{subsec:general_methodology}
We first rewrite the model in its matrix form:
\begin{equation}
    Y = M \Lambda^\top + E, \quad \text{where} \quad Y= \begin{bmatrix}
        y_1, \cdots, y_n
    \end{bmatrix}^\top, \quad M = \begin{bmatrix}
        \eta_1, \cdots, \eta_n
    \end{bmatrix}^\top, \quad E= \begin{bmatrix}
        \epsilon_1, \cdots, \epsilon_n
    \end{bmatrix}^\top
\end{equation}

Let $D_{1:H} = \text{diag}(d_1, \dots, d_H)$ where $d_l = s_l(Y)$ is the $l$-th largest singular value of $Y$, and $U_{1:H} \in \mathbb R^{n \times H}, V_{1:H} \in \mathbb R^{p \times H}$ are the matrices of the corresponding left and right singular vectors, respectively. We select $H$ as a conservative upper bound to the true latent dimension $k$. More details are provided in Section \ref{subsec:hyperparams}. As described in the introduction, 
a spectral estimator for the latent factors is given by the matrix of left singular vectors associated to the leading singular values scaled by $\sqrt{n}$ \citep{bai_03}. Following this approach, we let $\hat M = \sqrt{n} U_{1:H}$ as our initial estimate of the latent factors. 
Given $\hat M$, as in \citet{fable}, we infer the loadings via the surrogate regression 
\begin{equation}\label{eq:surrogate_regression}
    Y = \hat M \tilde \Lambda^\top + \tilde E, \quad \tilde E = \begin{bmatrix}
        \tilde e_1 & \cdots & \tilde e_n 
    \end{bmatrix}^\top, \quad \tilde e_i \sim N(0, \tilde \Sigma), 
\end{equation}
introducing the new parameters $\tilde \Lambda \in \mathbb R^{p \times H}$ and $\tilde \Sigma = \text{diag}\big(\tilde \sigma_1^2, \dots, \tilde \sigma_p^2\big)$. We specify a conjugate normal-inverse gamma prior on the loadings and residual error variance for the $j$-th outcome:
\begin{equation}\label{eq:prior}
   \tilde \lambda_j \mid \tilde \sigma_j^2, \tau_j^2, \tilde \Psi \sim N_H(0, \tilde \sigma_j^2 \tau_j^2 \tilde \Psi), \quad \tilde \sigma_j^2 \sim IG(v/2, vs^2/2), \quad (j = 1, \dots, p),
\end{equation}
where $\Psi = \text{diag}(\psi_1^2, \dots, \psi_H^2)$ with $\psi_l^2$ controlling the strength of the $l$-th latent component globally, while $\tau_j^2$ modulates the signal-to-noise ratio for the $j$-th outcome. This allows latent component and outcome dependent shrinkage. In the following, we describe a data-adaptive procedure for tuning such hyperparameters. Crucially, we set $\psi_l$'s to decreasing values so that the shrinkage increases with the latent component dimension. In Section \ref{sec:theory}, we show that when $H > k$, our estimates of the $\psi_l$'s for $l>k$ converge to $0$, appropriately shrinking superfluous components and automatically adapting to the true rank. The role of the $\tau_j^2$'s is to capture heterogeneity in signal-to-noise ratios across variables. For instance, in genomics it is common that some genes display a strong signal, while others have larger idiosyncratic noise. 
Under the prior specification in \eqref{eq:prior}, we obtain the following conjugate update:
\begin{equation}\label{eq:posterior_update}
    \begin{aligned}
       \tilde \lambda_j \mid \tilde \sigma_j^2, Y, \hat M, \tau_j, \Psi &\sim N_H\big(\hat \lambda_j, \tilde \sigma_j^2 \Psi_{n,j}^{-1}\big),\\
        \tilde \sigma_j^2 \mid Y, \hat M, \tau_j, \Psi &\sim IG\Big(\frac{v + n}{2}, \frac{v s^2 + ||y^{(j)} - \hat M \hat \lambda_j||^2}{2}\Big),
    \end{aligned}\quad (j=1, \dots, p),
\end{equation}
where 
\begin{equation}\label{eq:posterior_params}
       \hat \lambda_j = E[\lambda_j \mid Y, \hat M, \tau_j, \Psi] = \sqrt{n}  \Psi_{n,j}^{-1} D_{1:H} v_{j, 1:H} , \quad \Psi_{n,j} =\text{diag}\big(n+\tau_j^{-2} \psi_1^{-2}, \dots, n+\tau_j^{-2} \psi_H^{-2}  \big),
\end{equation}
and $v_{j, 1:H}$ denotes the $j$-th row of $V_{1:H}$.
We denote by $\hat \Lambda$ the matrix whose $j$-th row is given by $\hat \lambda_j$, that is $\hat \Lambda = [\hat \lambda_1 \cdots \hat \lambda_p]^\top$. 
The posterior mean is equivalent to the solution of the following least squares problem with adaptive $L_2$ regularisation:
\begin{equation}\label{eq:l2_ols}
    \hat \lambda_j = \arg_{\beta \in \mathbb R^{H}}\min ||y^{(j)} - \hat M \beta||^2 + \sum_{l=1}^{H} \frac{1}{\tau_j^2\psi_l^2}\beta_l^2, \quad \text{where} \quad \beta = (\beta_1, \dots, \beta_H)^\top.
\end{equation}
Equation \eqref{eq:l2_ols} shows that our point estimator corresponds to a ridge estimator in the surrogate regression with component-wise penalties, yielding adaptive shrinkage across both variables and latent directions.

\eqref{eq:posterior_update} induces a posterior distribution on the entries of $\Lambda \Lambda^\top$, which suffers from slight undercoverage. To solve a related problem,  \citet{fable} use a posterior variance 
inflation factor $\rho > 1$ having a closed form formula that guarantees asymptotic valid frequentist coverage on average across entries of $\Lambda \Lambda^\top$. The  coverage corrected posterior distribution for $\tilde \lambda_j$ is 
\begin{equation}\label{eq:posterior_update_cc}
    \begin{aligned}
        \tilde \lambda_j &\mid \rho^2,  \tilde \sigma_j^2, Y, \hat M, \tau_j, \Psi \sim N_H(\hat \lambda_j, \rho^2 \tilde \sigma_j^2 \Psi_{n,j}^{-1}).
    \end{aligned}
\end{equation}
Let $\tilde \Lambda = \begin{bmatrix}
    \tilde \lambda_1, \cdots, \tilde \lambda_p
\end{bmatrix}^\top$ and $\tilde \Sigma = \text{diag}(\tilde \sigma_1^2, \cdots, \tilde \sigma_p^2)$ be posterior samples for the loading matrix and residual variances, respectively. We denote by $\tilde L = \tilde \Lambda \tilde \Lambda^\top$ and $\tilde \Theta = \tilde L + \tilde \Sigma$ the corresponding samples for the low-rank component and the overall covariance, respectively.

The spectral estimator for $\Lambda$, $V_{1:H} D_{1:H} / \sqrt{n}$, corresponds to the ordinary least squares estimate of $\tilde \Lambda$ in the surrogate regression \eqref{eq:surrogate_regression}. Taking a Bayesian approach has three major advantages: (i) it leads to a regularized point estimate, which can have superior estimation accuracy, (ii) the prior specification in \eqref{eq:prior} allows for a simple yet effective hyperparameter specification detailed in Section \ref{subsec:hyperparams}, (iii) the induced posterior distribution can be used to propagate uncertainty about any functional of the parameters of interest.

Next, we can estimate $\Lambda \Lambda^\top$ via $\hat L = \hat \Lambda \hat \Lambda^\top$ and the covariance matrix via $\hat L + \hat \Sigma$, where $\hat \Sigma = \text{diag}(\delta_1^2, \dots, \delta_p^2)$, with $\delta_j^2 = \frac{v s^2 + ||y^{(j)} - \hat M \hat \lambda_j||^2}{v+n-2}$ the posterior mean for the $j$-th outcome error variance for $j=1, \dots, p$. We study asymptotic properties of this estimate in Section \ref{sec:theory}. Alternatively, one could estimate $\Lambda \Lambda^\top$ via its posterior mean arising from \eqref{eq:posterior_update_cc}, that is $E[\tilde \Lambda \tilde \Lambda^\top \mid \rho^2, Y, \hat M, \{\tau_j\}_{j=1}^p, \Psi] = \hat L + \rho^2\frac{\text{tr}(\Psi)}{n} \hat \Sigma \mathcal T$, with $\mathcal T = \text{diag}(\tau_1^2, \dots, \tau_p^2)$. 
In practice, we find negligible differences between this posterior mean and $\hat L$. 

While $\hat M$ serves as an initial estimate for $M$, posterior quantities of $M$, including posterior mean and interval estimates, can be easily obtained from the full conditional of $M$, using posterior samples of $\Lambda$ and $\Sigma$. For instance, for given samples of $\tilde \Lambda$ and $\tilde \Sigma$, we can sample from the posterior distribution of the latent factors of the $i$-th sample (the $i$-th row of $M$), via 
\begin{equation}\label{eq:eta_full_cond}
    \eta_i \mid y_i, \tilde \Lambda, \tilde \Sigma \sim N_H\bigg( \big(\tilde \Lambda^\top \tilde \Sigma^{-1} \tilde \Lambda + I_H\big)^{-1}\tilde \Lambda^\top \tilde \Sigma^{-1}y_i,   \big(\tilde \Lambda^\top \tilde \Sigma^{-1}  \tilde \Lambda + I_H\big)^{-1} \bigg).
\end{equation}
Given $N_{MC}$ samples of $\tilde \Lambda$ and $\tilde \Sigma$, $\{\tilde \Lambda^{(s)}\}_{s=1}^{N_{MC}}, \{\tilde \Sigma^{(s)}\}_{s=1}^{N_{MC}}$, an approximation to the posterior mean for $\eta_i$ is available via
\begin{equation}\label{eq:eta_post_mean_approx}
     E[\eta_i \mid y_i] \approx \frac{1}{N_{MC}}\sum_{s=1}^{N_{MC}}   \big(\tilde \Lambda^{(s)\top}\tilde \Sigma^{(s)-1} \tilde \Lambda^{(s)} + I_H\big)^{-1}\tilde \Lambda^{(s)\top}\tilde \Sigma^{(s)-1}  y_i.
\end{equation}
The estimate in \eqref{eq:eta_post_mean_approx} is a refinement of the $i$-th row of the initial estimate $\hat M$, that takes into account the heteroscedasticity across variables and uncertainty in model parameters. Sampling from \eqref{eq:eta_full_cond} allows full uncertainty quantification for the latent factors and differently from Gibbs sampling does not require any post-processing routine to resolve the rotational ambiguity of the Monte Carlo samples.


\subsection{Hyperparameter selection}\label{subsec:hyperparams}
We first describe a strategy for tuning the prior shrinkage hyperparameters. The prior distributions depend on the two sets of hyperparameters via the products $\{\tau_j^2 \Psi\}_{j=1}^p$. Dividing the $\tau_j^2$'s by some constant $c > 0$, and multiplying $\Psi$ by the same constant, we obtain the same specification. Hence, we impose the following normalizing condition: $\text{tr}(\Psi) = H$.
The expected \textit{a priori} squared norms of $\tilde \lambda_j$ and $\tilde  \lambda^{(l)}$, the $j$-th row and $l$-th column of $\tilde  \Lambda$, respectively, are given by $E[||\tilde  \lambda_j||_F^2 \mid \tilde \sigma_j^2, \tau_j^2, \Psi] = \tilde \sigma_j^2 \tau_j^2 \text{tr}(\Psi) = \tilde \sigma_j^2 \tau_j^2 H$, and $E[||\tilde \lambda^{(l)}||_F^2 \mid \{\tilde \sigma_j^2, \tau_j^2\}_{j=1}^p, \psi_l^2] = \psi_l^2 \sum_{j=1}^p\tilde \sigma_j^2 \tau_j^2$. We select hyperparameters via an empirical Bayes moment-matching strategy, calibrating prior expected magnitudes of loadings to their spectral estimates while correcting for overestimation of the latent dimension.
We estimate $\tilde \sigma_j^2$ via $\hat \sigma_j^2 = ||(I_n - U_H U_H^\top) y^{(j)}||_F^2 / (n-H) \vee 
c_{\sigma}$, where $c_{\sigma}$ is some small constant, $||\tilde \lambda_j||_F^2$ via $v_{j, 1:H}^\top (D_{1:H}^2 - d_{H+1}^2 I_H) v_{j, 1:H} / n$, and $||\tilde \lambda^{(l)}||_F^2$ via $(d_l^2 - d_{H+1}^2) / n$. 
Plugging everything together, we obtain 
\begin{equation}\label{eq:hyperparams}
\begin{aligned}
    \hat \psi_l^2 &= \frac{d_l^2 - d_{H+1}^2}{\sum_{l=1}^H(d_l^2 - d_{H+1}^2)} H, \quad &(l=1, \dots, H), \\
    \hat \tau_j^2 &= \frac{v_{j, 1:H}^\top (D_{1:H}^2 - d_{H+1}^2 I_H) v_{j, 1:H} }{n H \hat \sigma_j^2}, \quad &(j=1, \dots, p).
\end{aligned}
\end{equation}
These choices are motivated by the following rationale: if we neglected the term $d_{H+1}^2$ in \eqref{eq:hyperparams}, we would obtain  $E[||\tilde \lambda_j||^2 \mid \hat \sigma_j^2, \hat \tau_j^2, \Psi] = ||\hat \lambda_{svd, j}||^2$, and $E[||\tilde \lambda^{(l)}||_F^2 \mid \{\hat \sigma_j^2, \hat \tau_j^2\}_{j=1}^p, \hat \psi_l^2] = ||\hat \lambda_{svd}^{(l)}||_F^2$, with $\hat \lambda_{svd, j}$ and $\hat \lambda_{svd}^{(l)}$ denoting the $j$-th row and $l$-th column of $\hat \Lambda_{svd} = V_{1:H} D_{1:H} / \sqrt{n}$, the spectral estimate for $\Lambda$ with latent dimension $H$. However, this estimate does not take into account the fact that the true latent dimension is being overestimated. The subtraction of $d_{H+1}^2$ removes the bulk noise contribution associated with overfitted components. Under spiked covariance models, the first few singular values beyond the true rank concentrate around the upper edge of the Marchenko–Pastur distribution \citep{MP_law}. Hence, $d_{H+1}^2$ provides an estimate of the noise level, and subtracting it yields a better estimate of signal strength. A complete derivation is reported in the supplementary material.

Hyperparameter estimates have intuitive forms. $\hat \psi_l^2$ is proportional to the relative contribution of variability explained by the $l$-th latent component after subtracting the variability explained by the $H+1$ component.  $\hat \tau_j^2$ is proportional to the ratio of the amount of variability explained by the $H$ latent components in the $j$-th outcome ($v_{j, 1:H}^\top (D_{1:H}^2 - d_{H+1}^2 I_H) v_{j, 1:H} $) and an estimate of the residual error variance ($\hat \sigma_j^2$), allowing stronger shrinkage for noisy variables and weaker shrinkage for informative ones. We characterize the asymptotic behavior of such estimates in Section \ref{sec:theory} and show that $\psi_l^2 \overset{pr}{\to} 0$, whenever the $l$-th latent component is superfluous, ensuring automatic rank adaptation. As a result, leading singular directions receive moderate shrinkage, while components beyond the true rank are aggressively shrunk toward zero.

Here, we conduct a simple numerical experiment to illustrate the adaptivity of our shrinkage hyperparameters. We generate data via the scenario (a) described in section \ref{sec:numerical_experiments} with $n=1000$, $p=5000$.  
For simplicity, we set $H = 15$, while the true rank $k$ is 10.
\begin{figure}[t]
    \centering
    \begin{subfigure}{0.475\linewidth}
        \centering
        \includegraphics[width=\linewidth]{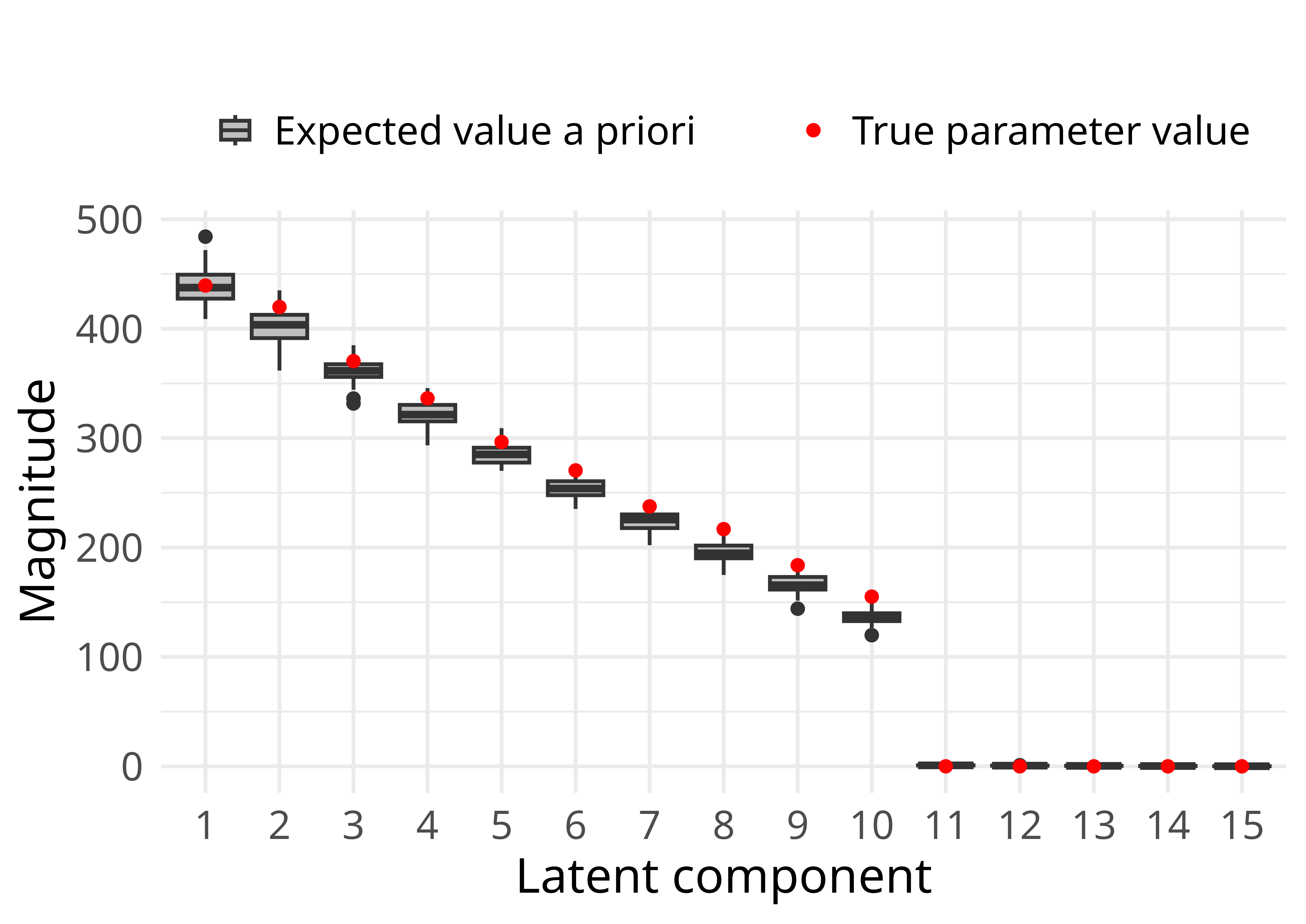}
    \end{subfigure}
    \hfill
    \begin{subfigure}{0.475\linewidth}
        \centering
        \includegraphics[width=\linewidth]{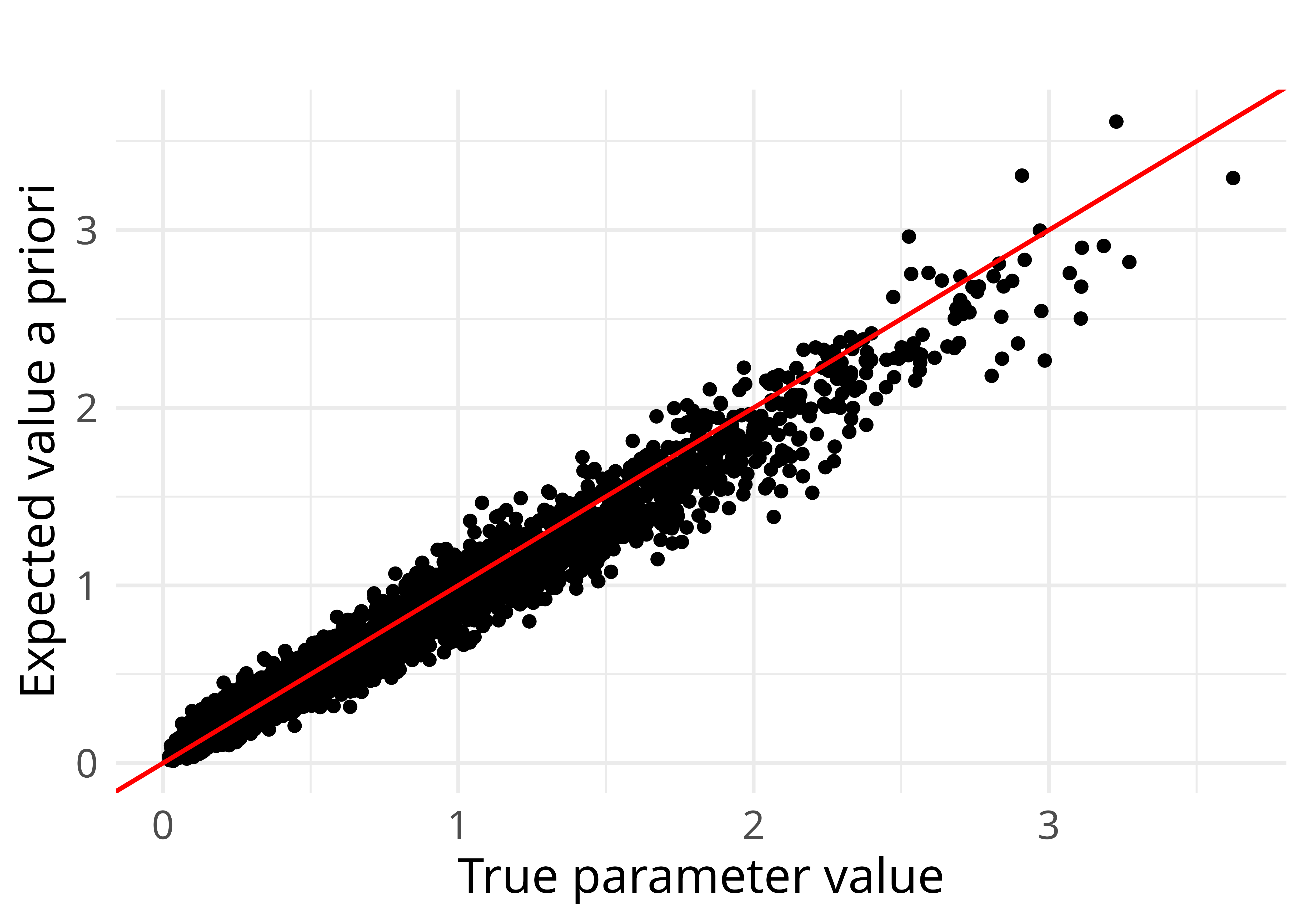}
    \end{subfigure} 
    \caption{Left panel: prior expected squared magnitude (across 50 replications) and true squared magnitude of the columns of the loading matrix. Right panel: squared magnitude of the rows of the loading matrix (x-axis) vs their prior expected value (y-axis) in one replication. }
    \label{fig:ada_shrinkage}
\end{figure}
In the left panel of Figure \ref{fig:ada_shrinkage}, histograms show $E[||\tilde \lambda^{(l)}||^2  \mid \{\hat \sigma_j^2, \hat \tau_j^2\}_{j=1}^p, \hat \psi_l^2 ] = \hat \psi_l^2 \sum_{j=1}^p \hat \tau_j^2 \hat \sigma_j^2$, the prior expected value of the squared magnitude of the $l$-th column of $\tilde \Lambda$ conditionally on the point estimates for the variances $\{\hat \sigma_j^2\}_{j=1}^p$, and the hyperparameters $\{\hat \tau_j^2\}_{j=1}^p$ and $\hat \psi_l^2$, for each component $l=1, \dots, H$ across 50 independent replications, while red dots indicate the squared singular values of the true loading matrix. First, we note that for the active components ($l=1, \dots, 10$), the prior expectation of the squared magnitude of each column is very close to the corresponding squared singular value, demonstrating that our tuning procedure correctly adapts to the amount of signal in the data. Moreover, for the superfluous components ($l=11, \dots, 15$), the prior expectations are very low and such components are successfully shrunk to $0$. The right panel compares the squared magnitude of the rows of $\Lambda$ (x-axis) with their respective prior expectations $\big\{E[||\tilde \lambda_j||^2 \mid \hat \sigma_j^2, \hat \tau_j^2, \{\hat \psi_l\}_{l=1}^H]\big\}_{j=1}^p$ (y-axis) for one random replication of the experiment, providing evidence of the ability of our procedure to adapt to each outcome signal-to-noise ratio.

As for the selection of the latent dimension, various criteria have been proposed, which can be divided into two categories: (a) information criteria \citep{ic_bn, ic_ah, chen_jic} and (b) \say{elbow methods}, based on the comparison of ratio or differences of adjacent pairs of (squared) singular values \citep{Fan2023BridgingFactor}. Our methodology relies on setting $H$ to some overfitted value. Theoretical results presented in Section \ref{sec:theory} require $H = \lesssim p^{2/5}$, however, in practice, we found that setting $H = \lfloor \sqrt{p} \rfloor$ or $H = \widehat{k}_{\mathrm{JIC}}+10$, where $\widehat{k}_{\mathrm{JIC}}$ is found minimizing the joint information criterion by \citet{chen_jic} to have good empirical performance in our numerical experiments and application.

\section{Theoretical Support}\label{sec:theory}
Prior to stating the results, we enumerate some regularity conditions. 
\begin{assumption}\label{assumption:model} The data are generated from model \eqref{eq:fact_model_int}, and the true loading matrix and residual error variance are denoted by $\Lambda_{0} = \begin{bmatrix}
    \lambda_{01}, \cdots, \lambda_{0p} 
\end{bmatrix}^\top$ and $\Sigma_0$, respectively. 
We also let $L_0 = \Lambda_0 \Lambda_0^\top$ and $\Theta_0 = L_0 + \Sigma_0$ be the true low-rank components and the true covariance matrix, respectively. 
\end{assumption}

\begin{assumption}\label{assumption:dimensions}
    We have $p\to \infty$ as $n \to \infty$ with $\frac{p}{n} \to c \in (0, \infty)$.
\end{assumption}

\begin{assumption}\label{assumption:Lambda}
The true loading matrix is such that $s_l(\Lambda_{0}) \asymp \sqrt{p}$ for $l=1, \dots, k$, $||\Lambda_{0}||_\infty \leq C_\Lambda <  \infty$ and $\min_{j=1, \dots, p} ||\lambda_{0j}|| \geq c_{\Lambda} >0$.    
\end{assumption}

\begin{assumption}\label{assumption:homoscedasticity}
    The idiosyncratic error variance is spherical, that is $\Sigma_0 = \sigma_0^2 I_p$, with $\sigma_0^2 < \infty$.
\end{assumption}


The first result establishes automatic rank adaptation. Although we do not explicitly model the latent dimension, the empirical Bayes variance components associated with redundant components vanish asymptotically.

\begin{proposition}\label{prop:psi_l}
    Suppose $H>k$, $H = \mathcal{O}(n^{2/5})$ and Assumptions \ref{assumption:model}--\ref{assumption:homoscedasticity} hold. Then, for each $k < l \leq H$, with probability at least $1-o(1)$, we have 
    \begin{equation*}
        \max_{j=1, \dots, p} \hat \psi_l^2 \hat \tau_j^2 \lesssim \frac{1}{n^{1 + 2/5}},
    \end{equation*}
     where $\hat \psi_l^2$ and $\hat \tau_j^2$ are defined in Section \ref{subsec:hyperparams}. 
\end{proposition}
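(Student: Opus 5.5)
We bound the product $\hat\psi_l^2\hat\tau_j^2$ by splitting it into three factors and controlling each on a high-probability event. By \eqref{eq:hyperparams},
\begin{equation*}
\hat\psi_l^2\hat\tau_j^2 = \frac{d_l^2-d_{H+1}^2}{\sum_{m=1}^H(d_m^2-d_{H+1}^2)}\cdot\frac{v_{j,1:H}^\top(D_{1:H}^2-d_{H+1}^2 I_H)v_{j,1:H}}{n\hat\sigma_j^2}.
\end{equation*}

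\textbf{The factor $1/\hat\sigma_j^2$.} Since $\hat\sigma_j^2\ge c_\sigma$, we have $1/\hat\sigma_j^2\le c_\sigma^{-1}$ deterministically.

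\textbf{The numerator of $\hat\tau_j^2$.} The row $v_{j,1:H}$ has norm at most one, so the numerator is at most $d_1^2-d_{H+1}^2\le d_1^2$. Under Assumptions \ref{assumption:model}--\ref{assumption:homoscedasticity}, $Y=M\Lambda_0^\top+E$ with $\|M\|\asymp\sqrt n$ (concentration of Gaussian matrices with $k$ fixed), $\|\Lambda_0\|\asymp\sqrt p$, and $\|E\|\lesssim\sqrt n+\sqrt p$. Hence $d_1^2\lesssim np\asymp n^2$ with high probability, and
\begin{equation*}
\max_j\hat\tau_j^2\lesssim \frac{n^2}{n H}\cdot\frac{1}{c_\sigma}\lesssim n .
\end{equation*}
This bound is uniform in $j$ because the only $j$-dependence is through $\|v_{j,1:H}\|\le1$.

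\textbf{The ratio $\hat\psi_l^2$ for $l>k$.} The denominator is at least $d_1^2-d_{H+1}^2$. By Weyl's inequality, $d_1\ge s_1(M\Lambda_0^\top)-\|E\|\gtrsim n$ while $d_{H+1}\le\|E\|\lesssim\sqrt n$, so the denominator is $\gtrsim n^2$. For the numerator, Weyl's inequality gives $d_l\le s_l(M\Lambda_0^\top)+s_1(E)=s_1(E)$ for $l>k$, which is only $O(\sqrt n)$. That bound is too crude: it would yield $\hat\psi_l^2\lesssim H/n$ and hence a product of order $H$ rather than the required $n^{-7/5}$.

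We therefore need the gap $d_l^2-d_{H+1}^2$ for $k<l\le H$, which is the main obstacle. The plan is to use interlacing: $d_l(Y)\le s_{l-k}(P^\perp_{M}E)$ after projecting out the rank-$k$ column space of $M$, and $d_{H+1}(Y)\ge s_{H+1}(E)$. Both are extreme singular values near the Marchenko--Pastur edge. We would then invoke rigidity/edge results for Gaussian sample covariance matrices: for indices $i\le H=O(n^{2/5})$, the top $i$-th squared singular value of $E$ satisfies
\begin{equation*}
s_i(E)^2=\sigma_0^2(\sqrt n+\sqrt p)^2\bigl(1+O(i^{2/3}n^{-2/3}\log n)\bigr).
\end{equation*}
Combined with the rank-$k$ perturbation, this gives
\begin{equation*}
d_l^2-d_{H+1}^2\lesssim n\cdot H^{2/3}n^{-2/3}\lesssim n^{1/3}H^{2/3}.
\end{equation*}

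\textbf{Combining and calibrating the exponent.} With these bounds,
\begin{equation*}
\hat\psi_l^2\hat\tau_j^2\lesssim \frac{H\, n^{1/3}H^{2/3}}{n^2}\cdot n .
\end{equation*}
Two refinements may be needed to reach the stated rate. First, a sharper bound on the $\hat\tau_j^2$ numerator: its superfluous-direction part is at most $\|v_{j,1:H}\|^2$ times the gap, and the signal part uses delocalization of $v_{j,1:k}$, namely $\max_j\|v_{j,1:k}\|^2\lesssim \log n/p$ via Assumption \ref{assumption:Lambda} (bounded entries). The latter gives $\max_j\hat\tau_j^2\lesssim 1/H$ rather than $n$. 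Second, the denominator lower bound $\gtrsim n^2$. With these, the product becomes $\lesssim n^{1/3}H^{2/3}/n^2\lesssim n^{1/3+4/15-2}$, which is below $n^{-7/5}=n^{-21/15}$ since $5/15+4/15-30/15=-21/15$. So the constraint $H=O(n^{2/5})$ is exactly what makes the exponents match.

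The delicate step is establishing the eigenvalue-gap bound uniformly over $l\le H$ growing with $n$. This requires edge rigidity of the Marchenko--Pastur law together with interlacing for the finite-rank perturbation; failure events must be controlled with probability $1-o(1)$.
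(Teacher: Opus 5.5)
\textbf{The gap: the numerator of $\hat\tau_j^2$.} Your skeleton matches the paper's proof. The factor $H$ cancels in $\hat\psi_l^2\hat\tau_j^2$, the denominator $\sum_{m=1}^H(d_m^2-d_{H+1}^2)$ is $\gtrsim n^2$, $\hat\sigma_j^2\ge c_\sigma$, and the gap $d_l^2-d_{H+1}^2\lesssim n^{1/3}H^{2/3}\lesssim n^{3/5}$ comes from Marchenko--Pastur edge rigidity. The step that fails to deliver the stated rate is the bound on the numerator of $\hat\tau_j^2$. Your repair uses $\max_j\|v_{j,1:k}\|^2\lesssim \log n/p$, which you assert without proof. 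It only gives $v_{j,1:k}^\top D_{1:k}^2 v_{j,1:k}\lesssim n\log n$, hence $\max_j\hat\tau_j^2\lesssim \log n/H$ rather than $1/H$. At $H\asymp n^{2/5}$ you therefore end with $n^{-7/5}\log n$, which is not the claimed bound.

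The missing observation, and the one the paper uses, is that $D_{1:H}v_{j,1:H}=U_{1:H}^\top y^{(j)}$, because $y^{(j)}=UDV^\top e_j$. Since $d_{H+1}^2\|v_{j,1:H}\|^2\ge 0$, you get
$v_{j,1:H}^\top(D_{1:H}^2-d_{H+1}^2I_H)v_{j,1:H}\le\|U_{1:H}^\top y^{(j)}\|^2\le\|y^{(j)}\|^2$.
Moreover $\max_j\|y^{(j)}\|^2\lesssim n$ with probability $1-o(1)$ (Lemma \ref{lemma:y_j}). Indeed $\|M\lambda_{0j}\|^2\le\|M\|^2\|\lambda_{0j}\|^2\lesssim n$, and $\max_j\|e^{(j)}\|^2/\sigma_0^2$ is a maximum of $p\asymp n$ variables distributed as $\chi^2_n$, so it is $n+\mathcal O(\sqrt{n\log n})$. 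This gives $\max_j\hat\tau_j^2\lesssim 1/H$ with no logarithm and no eigenvector perturbation theory. It also gives your delocalization for free: $\|v_{j,1:k}\|^2\le\|y^{(j)}\|^2/d_k^2\lesssim 1/n$.

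\textbf{The eigenvalue gap: a different route that needs fixes.} The paper first transfers the non-outlier eigenvalues of $YY^\top$ to those of $EE^\top$ via an eigenvalue-sticking theorem, then applies rigidity (Lemma \ref{lemma:sv_Y_diff}). Your interlacing route is more elementary and works, but three points need attention.
\begin{enumerate}
\item Your lower bound $d_{H+1}(Y)\ge s_{H+1}(E)$ is false in general, since a rank-$k$ perturbation can decrease singular values. Use instead $d_{H+1}(Y)\ge s_{H+1}(P_M^\perp Y)=s_{H+1}(P_M^\perp E)$, which holds because projections are contractions. Pair it with your correct bound $d_l(Y)\le d_{k+1}(Y)\le s_1(P_M^\perp E)$ for $l>k$.
\item Conditionally on $M$, $W=P_M^\perp E$ has the singular values of an $(n-k)\times p$ matrix with i.i.d.\ $N(0,\sigma_0^2)$ entries. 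Hence $d_l^2-d_{H+1}^2\le s_1(W)^2-s_{H+1}(W)^2$ holds simultaneously for all $k<l\le H$, so the uniformity you flag as delicate is automatic.
\item State rigidity in its additive form: normalized eigenvalues lie within $\mathcal O(n^{-2/3+\epsilon}i^{-1/3})$ of the quantiles, with $\gamma_+-\gamma_i\asymp(i/n)^{2/3}$. This gives $s_1(W)^2-s_{H+1}(W)^2\lesssim n^{1/3}H^{2/3}+n^{1/3+\epsilon}\lesssim n^{3/5}$. Your version, with a multiplicative $\log n$ on the main term, would again cost a logarithm at $H\asymp n^{2/5}$.
\end{enumerate}
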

Proofs of all theoretical results are reported in the supplemental. Proposition \ref{prop:psi_l} states that as the sample size and the outcome dimension diverge and the data are generated according to the model \ref{eq:fact_model}, the prior variance parameters for superfluous latent components (that is, those corresponding to indices strictly larger than $k$) converge to zero at a fast rate. As a result, the estimates for the corresponding elements in the loading matrix converge to $0$, which is formalized in the following corollary.
\begin{corollary}\label{corr:norm_Lambda_l}
    Under the assumptions of Proposition \ref{prop:psi_l}, for each $k < l \leq H$, we have 
     \begin{equation*}
     \big \| \hat \lambda_{j}^{(l)} \big \| \overset{pr}{\to} 0,
    \end{equation*} 
      as $n\to \infty$,  where $\hat \lambda_{j}^{(l)}$ is the $l$-th column of $\hat \Lambda$, which is defined in Section \ref{subsec:general_methodology}.
\end{corollary}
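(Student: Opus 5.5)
The plan is to control the whole $l$-th column of $\hat \Lambda$ at once. The argument has three ingredients: the explicit form of the posterior mean, the fact that the columns of $V_{1:H}$ have unit norm, and the uniform bound of Proposition \ref{prop:psi_l}. Throughout, fix $k<l\le H$.

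\textbf{Step 1: an entrywise bound by the hyperparameters.} By \eqref{eq:posterior_params}, the $j$-th entry of $\hat \lambda^{(l)}$ is
\begin{equation*}
\hat \lambda_{jl} = \frac{\sqrt{n}\, d_l\, v_{jl}}{n+\hat\tau_j^{-2}\hat\psi_l^{-2}} .
\end{equation*}
Since $n+x\ge x$ for $x>0$, this gives
\begin{equation*}
|\hat \lambda_{jl}| \le \sqrt{n}\, d_l\, |v_{jl}|\, \hat\tau_j^2\hat\psi_l^2 .
\end{equation*}
If $\hat\psi_l^2=0$ or $\hat\tau_j^2=0$, the prior variance is degenerate and $\hat\lambda_{jl}=0$ by the limiting convention, so the same bound holds. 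Squaring and summing over $j$, and using $\sum_{j=1}^p v_{jl}^2=1$ (the $l$-th column of $V_{1:H}$ is a unit vector), yields
\begin{equation*}
\big\|\hat\lambda^{(l)}\big\|^2 \le n\, d_l^2 \Big(\max_{j=1,\dots,p}\hat\psi_l^2\hat\tau_j^2\Big)^2 .
\end{equation*}
The key point is that the sum over $p$ outcomes is absorbed by the unit-norm column. We therefore never pay a factor of $p$, which is why this controls the Euclidean norm and not just the sup-norm.

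\textbf{Step 2: control of $d_l$ for superfluous components.} Write $Y=M\Lambda_0^\top+E$, where $M\Lambda_0^\top$ has rank $k$. By Weyl's inequality for singular values, $d_l\le d_{k+1}\le s_1(E)$ for every $l>k$. Under Assumption \ref{assumption:homoscedasticity}, $E$ has i.i.d.\ $N(0,\sigma_0^2)$ entries. Standard Gaussian matrix concentration (e.g.\ Davidson--Szarek/Gordon) gives $s_1(E)\le \sigma_0(\sqrt n+\sqrt p+t)$ with probability at least $1-2e^{-t^2/2}$. Combined with Assumption \ref{assumption:dimensions}, this shows $d_l^2\lesssim n$ with probability $1-o(1)$.

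\textbf{Step 3: conclusion.} Intersect the event of Step 2 with the event of Proposition \ref{prop:psi_l}, on which $\max_j \hat\psi_l^2\hat\tau_j^2\lesssim n^{-7/5}$; the intersection still has probability $1-o(1)$. On it, Step 1 gives
\begin{equation*}
\big\|\hat\lambda^{(l)}\big\|^2 \lesssim n\cdot n\cdot n^{-14/5}=n^{-4/5}\to 0,
\end{equation*}
hence $\|\hat\lambda^{(l)}\|\overset{pr}{\to}0$.

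\textbf{Main obstacle.} The substantive difficulty is the uniform rate in Proposition \ref{prop:psi_l}, which we take as given. Beyond that, the only care needed is that the singular-value bound in Step 2 holds for the data singular values $d_l$ themselves, not only asymptotically in ratio. This is handled by Weyl's inequality together with the non-asymptotic bound on $\|E\|$.
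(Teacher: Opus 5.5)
Your proof is correct and follows essentially the paper's route: the paper defers the corollary to Lemma \ref{lemma:norm_Lambda_hat}, which likewise bounds each entry by $\sqrt{n}\,d_l\,|v_{jl}|\,\hat\psi_l^2\hat\tau_j^2$, then uses Proposition \ref{prop:psi_l}, the bound $d_l\lesssim\sqrt n$ for $l>k$, and the unit-norm columns of $V_{1:H}$. The only differences are minor. You bound a single column and obtain $n^{-2/5}$, whereas the paper bounds the whole superfluous block and obtains $\sqrt{H}\,n^{-2/5}\lesssim n^{-1/5}$. You also justify $d_l\lesssim\sqrt n$ explicitly via Weyl's inequality and Gaussian concentration, rather than citing Lemma \ref{lemma:sv_Y}.
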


Next, we study the asymptotic performance of our estimator in recovering the covariance matrix. We find similar consistency and distributional results to the ones in \citet{fable}, without assuming that the latent dimension is known a priori, allowing the upper bound of the latent dimension $H$ to grow with $n$, and taking into account the effect of estimating the empirical Bayes hyperparameters from the same data.

\begin{theorem}\label{thm:consistency}
Let $\hat L = \hat \Lambda \hat \Lambda^\top$ and $\hat \Theta = \hat L + \hat \Sigma$ be the point estimates for the low-rank component and the overall covariance matrix defined in Section \ref{subsec:general_methodology}. Then, if $H = \mathcal O(n^{2/5})$ and Assumptions \ref{assumption:model}--\ref{assumption:homoscedasticity} hold, with probability at least $1-o(1)$, we have 
\begin{equation}
    \begin{aligned}
    \frac{||\hat L - L_0||}{||L_0||} & \lesssim \sqrt{\frac{\log n}{n}}, \quad \frac{||\hat \Theta - \Theta_0||}{||\Theta_0||} & \lesssim \sqrt{\frac{\log n}{n}}.
    \end{aligned}
\end{equation}
Moreover, with probability at least $1-o(1)$, 
\begin{equation}
    \begin{aligned}
E\left[  \Pi \left( \frac{||\tilde L - L_0||}{||L_0||} \geq C_1\sqrt{\frac{\log n}{n}} \right)\right]& \to 0, \quad E\left[  \Pi \left(\frac{||\tilde \Theta - \Theta_0||}{||\Theta_0||}  \geq C_2\sqrt{\frac{\log n}{n}} \right)\right]& \to 0,
    \end{aligned}
\end{equation}
for some constants $C_1, C_2 < \infty$.
\end{theorem}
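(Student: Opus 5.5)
The plan is to write $\hat\Lambda$ as a perturbation of the spectral estimator and decompose the error into the active block ($l\le k$) and the superfluous block ($k<l\le H$). Write $\hat\Lambda = \hat\Lambda_{svd} S$, where $\hat\Lambda_{svd}=V_{1:H}D_{1:H}/\sqrt n$ and $S$ is the row-dependent diagonal shrinkage with entries $s_{jl}= n/(n+\hat\tau_j^{-2}\hat\psi_l^{-2})$. Then
\begin{equation*}
\hat L = \hat\Lambda_{1:k}\hat\Lambda_{1:k}^\top + \hat\Lambda_{k+1:H}\hat\Lambda_{k+1:H}^\top .
\end{equation*}
For the first term I will rely on the known-rank analysis in \citet{fable}: $V_{1:k}D_{1:k}^2V_{1:k}^\top/n$ approximates $L_0+\sigma_0^2 P$, up to the standard eigenvalue bias. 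I will then show that the data-driven shrinkage factors nearly undo this bias.

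For the active block, standard spiked-model perturbation gives several facts. Because $s_l(\Lambda_0)\asymp\sqrt p$ and $p\asymp n$, we have $d_l^2\asymp np$ for $l\le k$. Also $d_{H+1}^2\le(\sqrt n+\sqrt p)^2\sigma_0^2(1+o(1))\asymp n$ with high probability, via Davidson--Szarek/Gaussian concentration of $\|E\|$. Hence $\sum_l(d_l^2-d_{H+1}^2)\asymp np$, and so $\hat\psi_l^2\asymp H$. Using Assumption \ref{assumption:Lambda} together with delocalization of $v_{j,1:k}$ (entrywise bounds $\|v_{j,1:k}\|\asymp\|\lambda_{0j}\|/\sqrt p$ up to $\sqrt{\log n/n}$ errors, via a leave-one-out or $\ell_{2,\infty}$ eigenvector bound) and $\hat\sigma_j^2\to\sigma_0^2$ uniformly, we obtain $\hat\tau_j^2\asymp 1/H$. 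Consequently $\hat\tau_j^{-2}\hat\psi_l^{-2}=O(1)$ uniformly, and $1-s_{jl}=O(1/n)$. The shrinkage on the active block therefore contributes only an $O(\|L_0\|/n)$ perturbation. This leaves the spectral estimator's error, which has the form $\|V_{1:k}D_{1:k}^2V_{1:k}^\top/n - L_0\|/\|L_0\|\lesssim \sqrt{\log n/n}$ plus a bias of order $\sigma_0^2/p$. I will take the operator-norm rate from Wedin/Davis--Kahan, with $\|E^\top M\|/n$ and $\|M^\top M/n-I\|$ both of order $\sqrt{\log n/n}$ relative to the signal.

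For the superfluous block, Proposition \ref{prop:psi_l} gives $\hat\psi_l^2\hat\tau_j^2\lesssim n^{-7/5}$, so $s_{jl}\lesssim n^{-2/5}$ uniformly in $j$. Then
\begin{equation*}
\|\hat\Lambda_{k+1:H}\|\le \max_{j,l} s_{jl}\, d_{k+1}/\sqrt n\lesssim n^{-2/5}.
\end{equation*}
This bound uses $\|V S\|\le \|V\|\max|s|$ column-wise; to apply it with row-dependent $S$, I bound $\|\hat\Lambda_{k+1:H}\|_F^2\le \sum_{l>k}\max_j s_{jl}^2 d_l^2/n\lesssim Hn^{-4/5}$. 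Thus the superfluous contribution to $\hat L$ is $\lesssim H n^{-4/5}\lesssim n^{-2/5}$, which is negligible relative to $\|L_0\|\asymp p$. This is exactly where $H=O(n^{2/5})$ enters.

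For $\hat\Sigma$, I will bound $\max_j|\delta_j^2-\sigma_0^2|\lesssim\sqrt{\log n/n}$. This follows from $\|(I-U_{1:H}U_{1:H}^\top)y^{(j)}\|^2/n$ concentrating (chi-square plus union bound over $p$), with the $H$-dimensional projection removing only $O(H/n)$ and the shrinkage residual adding $O(1/n)$. Since $\|\Theta_0\|\asymp p$, the $\Theta$ bound follows.

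For the posterior statements, conditional on the data, $\tilde\lambda_j-\hat\lambda_j$ is Gaussian with covariance $\rho^2\tilde\sigma_j^2\Psi_{n,j}^{-1}\preceq C n^{-1}I_H$, and the $\tilde\sigma_j^2$ concentrate uniformly around $\delta_j^2$ by inverse-gamma tail bounds. Hence $\|\tilde\Lambda-\hat\Lambda\|\lesssim(\sqrt p+\sqrt H)/\sqrt n$ with posterior probability tending to one, by a Gaussian matrix norm bound. It follows that $\|\tilde L-\hat L\|\le 2\|\hat\Lambda\|\|\tilde\Lambda-\hat\Lambda\|+\|\tilde\Lambda-\hat\Lambda\|^2\lesssim\sqrt p$, which is $\|L_0\|\cdot O(1/\sqrt n)$. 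Markov's inequality plus dominated convergence then gives the expectation statements.

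The main obstacle is the uniform-in-$j$ control of $\hat\tau_j^2$ from below on the active block, which requires $\ell_{2,\infty}$ eigenvector perturbation bounds with $\log n$ factors. I would first try a leave-one-column-out argument in the style of Abbe et al., exploiting that column $j$ is independent of the other columns given $M$.
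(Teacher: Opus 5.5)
Your argument has the same skeleton as the paper's proof and is correct in outline:
\begin{itemize}
\item split $\hat\Lambda$ into $\hat\Lambda_{1:k}$ and $\hat\Lambda_{k+1:H}$;
\item remove the superfluous block with Proposition~\ref{prop:psi_l} (your Frobenius bound is exactly Lemma~\ref{lemma:norm_Lambda_hat});
\item get $1-s_{jl}=O(1/n)$ on the active block from $\min_j\hat\psi_l^2\hat\tau_j^2\gtrsim 1$ (Lemma~\ref{lemma:hyperparams});
\item compare with the rank-$k$ spectral estimator;
\item treat the posterior as $\tilde\Lambda=\hat\Lambda+\tilde Z$ with $\|\tilde Z\|\lesssim 1$ and $\|\hat\Lambda\|\lesssim\sqrt p$.
\end{itemize}
Two sub-steps deserve comment.

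\textbf{The lower bound on $\hat\tau_j^2$.} The step you flag as the main obstacle needs no $\ell_{2,\infty}$ or leave-one-out eigenvector theory. Since $V_{1:k}D_{1:k}=Y^\top U_{1:k}$, you have $D_{1:k}v_{j,1:k}=U_{1:k}^\top y^{(j)}$. Dropping the nonnegative superfluous terms gives
\begin{equation*}
v_{j,1:H}^\top(D_{1:H}^2-d_{H+1}^2I_H)v_{j,1:H}\;\ge\;y^{(j)\top}U_{1:k}U_{1:k}^\top y^{(j)}-d_{H+1}^2\|v_{j,1:k}\|^2 .
\end{equation*}
Here $\|v_{j,1:k}\|^2\le\|y^{(j)}\|^2/d_k^2\lesssim 1/p$, so the correction term is $O(1)$.

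The paper then writes $y^{(j)\top}Py^{(j)}=y^{(j)\top}P_0y^{(j)}+y^{(j)\top}(P-P_0)y^{(j)}$. The first term is $n\|\lambda_{0j}\|^2(1+o(1))$ uniformly in $j$ by elementary Gaussian concentration. The second is at most $\max_j\|y^{(j)}\|^2\|P-P_0\|\lesssim\sqrt n$ by the operator-norm bound of Lemma~\ref{lemma:U}.

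Combine this with the trivial bound $\hat\sigma_j^2\le\|y^{(j)}\|^2/(n-H)\vee c_\sigma\lesssim 1$ to get $\min_j\hat\tau_j^2\gtrsim 1/H$. You do not need uniform consistency of $\hat\sigma_j^2$. That would in fact require controlling $\max_j\|U_{k+1:H}^\top y^{(j)}\|^2$, i.e.\ delocalization of the noise singular vectors. Your leave-one-out route would work, but it buys entrywise eigenvector control the proof never uses.

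\textbf{The bound on $\hat\Sigma$.} Your justification of $\max_j|\delta_j^2-\sigma_0^2|\lesssim\sqrt{\log n/n}$ miscounts.
\begin{itemize}
\item The fitted vector is $\hat M\hat\lambda_j=U_{1:H}\,\mathrm{diag}(s_{j1},\dots,s_{jH})\,U_{1:H}^\top y^{(j)}$. For $l>k$ you have $s_{jl}\approx 0$, so the ``shrinkage residual'' adds back essentially all of $\|U_{k+1:H}^\top y^{(j)}\|^2$, not $O(1/n)$.
\item $\|(I_n-U_{1:H}U_{1:H}^\top)y^{(j)}\|^2$ is not a chi-square, since $U_{1:H}$ depends on $y^{(j)}$.
\end{itemize}
The residual is really close to $\|(I_n-U_{1:k}U_{1:k}^\top)y^{(j)}\|^2$. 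This is how the paper proceeds in Lemma~\ref{lemma:delta_j}: it compares with $I_n-P_0$, uses $\|B_j-B\|\lesssim n^{-2/5}$, and obtains only the weaker uniform rate $(\log n/n)^{1/3}$.

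For this theorem none of that is needed. $I_n-U_{1:H}\,\mathrm{diag}(s_{j1},\dots,s_{jH})\,U_{1:H}^\top$ is a contraction, so $\max_j\delta_j^2\lesssim\max_j\|y^{(j)}\|^2/n\lesssim 1$. An $O(1)$ bound on $\|\hat\Sigma-\sigma_0^2I_p\|$ is already negligible against $\|\Theta_0\|\asymp n$. The same holds, via inverse-gamma tails, for $\|\tilde\Sigma-\sigma_0^2I_p\|$. As written your $\Theta$-step rests on a flawed sketch, but this one-line replacement repairs it.

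Relatedly, the active-block shrinkage does not ``undo'' the $O(1)$ eigenvalue bias of the spectral estimator. That bias is simply negligible relative to $\|L_0\|\asymp p$.
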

In Theorem \ref{thm:consistency}, we consider the error in spectral norm in estimating the low-rank component of the covariance and the overall covariance matrix. To make results comparable across dimension, we divide the errors by the norms of the true parameter. 
The first result justifies the use of $\hat L$ and $\hat \Theta$ as point estimators for $\Lambda \Lambda^\top$ and $\Omega$. The second result is stronger and characterises the contraction of the entire posterior distribution. 
The central idea of the proof is to use Proposition \ref{prop:psi_l}, along with appropriate bounds on the $\hat \psi_l^2$'s for $l=1, \dots, k$ and on the $\hat \tau_j^2$'s, to show that asymptotically $\hat \Lambda \hat \Lambda^\top$ behaves similarly to the oracle spectral estimator, the one that selects the correct latent dimension, as superfluous components are appropriately shrunk to 0. Importantly, this result does not require $H$ to be fixed but allows it to grow at a rate $n^{2/5}$.
\begin{remark}
 Although our estimator behaves asymptotically as the spectral estimator with the correct latent dimension, our numerical experiments show that it may even perform better in finite samples due to the effectiveness of the adaptive shrinkage.
\end{remark}
\begin{remark}
  Theorem \ref{thm:consistency} requires $\Sigma_0 = \sigma_0^2 I_p$, however, we relax this assumption letting $\Sigma_0$ be a diagonal matrix in the numerical experiments, where our estimator achieves state-of-the-art performance. 
\end{remark}
\begin{remark}
    Here, we require $H \lesssim n^{2/5} \asymp p^{2/5}$ to obtain theoretical support, but, in the numerical experiments, we also consider $H = \lfloor \sqrt{p} \rfloor$ with excellent empirical performance.
\end{remark}

The last two results concern uncertainty quantification about our estimate. The first one is a central limit theorem showing entry-wise asymptotic normality for the point estimator.
\begin{theorem}\label{thm:clt}
Suppose $H = \mathcal{O}(n^{2/5})$ and Assumptions \ref{assumption:model}--\ref{assumption:homoscedasticity} hold. Let $\hat \theta_{jj'}$ and $\theta_{0jj'}$ be the $j,j'$-th entry of $\hat \Theta$ and $\Theta_0$, respectively, for $j,j' \in \{1, \dots, p\}$. Then, as $n \to \infty$, we have
\begin{equation*}
    \begin{aligned}
      \frac{ \sqrt{n}}{S_{0jj'}} \big(\hat \theta_{jj'} - \theta_{0jj'}\big) \Rightarrow N(0, 1), \quad (j, j'=1, \dots, p),
    \end{aligned}
\end{equation*}
where 
       \begin{equation}\label{eq:S_0_sq}
              S_{0jj'}^2 =    \begin{cases}
                \sigma_{0}^2 \big(||\lambda_{0j}||^2  + ||\lambda_{0j'}||^2\big) + (\lambda_{0j}^\top \lambda_{0j'})^2 + ||\lambda_{0j}||^2 ||\lambda_{0j'}||^2,\quad & \text{if } j \neq j' \\
                   2\big(\sigma_{0}^2 + ||\lambda_{0j'}||^2 \big)^2, \quad & \text{if } j=j'.
                 \\
             \end{cases}
       \end{equation}

\end{theorem}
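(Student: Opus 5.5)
The plan is to reduce $\hat\theta_{jj'}$ to the oracle spectral estimator with the correct latent dimension $k$, up to an $o_p(n^{-1/2})$ error. The CLT for that oracle estimator then follows from the arguments in \citet{fable}, which in turn rest on the expansion of \citet{bai_03}.

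\textbf{Decomposition.} Write
\[
\hat\theta_{jj'} = \sum_{l=1}^H \hat\lambda_{jl}\hat\lambda_{j'l} + \mathbf{1}\{j=j'\}\,\delta_j^2 .
\]
From \eqref{eq:posterior_params}, each entry is a shrunk version of the spectral loading:
\[
\hat\lambda_{jl} = \frac{n}{n+\hat\tau_j^{-2}\hat\psi_l^{-2}}\cdot\frac{d_l v_{jl}}{\sqrt n}.
\]
I split the sum into the active block $l\le k$ and the superfluous block $k<l\le H$, and handle each separately.

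\textbf{Superfluous block.} By Proposition~\ref{prop:psi_l}, $\hat\tau_j^2\hat\psi_l^2\lesssim n^{-7/5}$ uniformly in $j$. Hence the shrinkage factor is $O(n^{-2/5})$. Standard random-matrix bounds give $d_l\lesssim\sqrt n$ for $l>k$, and delocalization of noise singular vectors gives $|v_{jl}|\lesssim\sqrt{\log n/p}$. Together these yield $|\hat\lambda_{jl}|\lesssim n^{-2/5}\sqrt{\log n/n}$. Summing over $H=O(n^{2/5})$ components, the total contribution of this block to $\hat\theta_{jj'}$ is $O(n^{-2/5}\log n/n)=o(n^{-1/2})$.

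\textbf{Active block.} For $l\le k$, Assumption~\ref{assumption:Lambda} gives $\hat\psi_l^2\gtrsim 1$ after normalization by $H$, and $\hat\tau_j^2\gtrsim 1/H$. Note that $d_l^2\asymp np$ for signal components while $d_{H+1}^2\asymp n+p$. So the penalty satisfies $\hat\tau_j^{-2}\hat\psi_l^{-2}\lesssim H$, and the shrinkage factor is $1-O(H/n)=1-O(n^{-3/5})$. The active block therefore equals the oracle $k$-dimensional spectral product up to $O_p(n^{-3/5})=o(n^{-1/2})$.

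The uniform bounds on $\hat\psi_l$ and $\hat\tau_j$ need care. The denominator of $\hat\psi_l^2$ is dominated by the $k$ signal terms of order $np$, while the $H-k$ noise terms contribute at most $H\cdot O(\sqrt n\,(n+p)^{1/2})$ via eigenvalue rigidity near the Marchenko--Pastur edge. The quantity $\hat\sigma_j^2$ is bounded below with high probability.

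\textbf{Variance term and conclusion.} For $j=j'$, the residual $y^{(j)}-\hat M\hat\lambda_j$ differs from the oracle residual (projection onto the complement of $U_{1:k}$) by the projection onto $u_{k+1},\dots,u_H$ together with the shrinkage effect. Each of the $H$ noise directions contributes $O_p(\log n)$ to the squared norm, so $\delta_j^2-\hat\sigma^2_{j,\mathrm{oracle}}=O_p(H\log n/n)=o(n^{-1/2})$, using $H=O(n^{2/5})$. This is exactly where the rate $n^{2/5}$ enters.

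Finally, I apply the oracle CLT. The oracle quantity $\hat\lambda_j^{\top}\hat\lambda_{j'}$, computed from the top-$k$ SVD, admits the expansion
\[
\lambda_{0j}^{\top}\lambda_{0j'} + \frac{1}{n}\sum_i\big(\lambda_{0j}^\top(\eta_i\eta_i^\top-I)\lambda_{0j'}+\lambda_{0j}^\top\eta_i\epsilon_{ij'}+\lambda_{0j'}^\top\eta_i\epsilon_{ij}\big)+o_p(n^{-1/2}).
\]
The Lindeberg CLT then gives variance $(\lambda_{0j}^\top\lambda_{0j'})^2+\|\lambda_{0j}\|^2\|\lambda_{0j'}\|^2+\sigma_0^2(\|\lambda_{0j}\|^2+\|\lambda_{0j'}\|^2)$. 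The diagonal case simply adds $(\epsilon_{ij}^2-\sigma_0^2)$ terms, which gives $2(\sigma_0^2+\|\lambda_{0j}\|^2)^2$; this is the variance of $y_{ij}^2$. Slutsky's lemma then completes the argument.

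\textbf{Main obstacle.} I expect the hardest step to be controlling the superfluous block uniformly. It requires delocalization of the noise singular vectors $v_{jl}$ and an upper bound on $d_l$, both for a growing number $H$ of components. It also requires showing that these bounds interact correctly with the data-dependent hyperparameters, since $\hat\tau_j$ itself depends on $v_{j,1:H}$.
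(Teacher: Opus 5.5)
Your proposal is correct and follows the paper's overall route. Both arguments show $\sqrt n(\hat\theta_{jj'}-\hat\theta_{svd,jj'})=o_{pr}(1)$ for the rank-$k$ spectral estimator $\hat\theta_{svd,jj'}$, then apply the classical CLT to i.i.d.\ summands. On the diagonal the oracle is exactly $n^{-1}\|y^{(j)}\|^2$, which is the cleanest way to obtain $2(\sigma_0^2+\|\lambda_{0j}\|^2)^2$. The tools differ in three places.

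\textbf{Superfluous loadings.} The paper does not use delocalization here. Write $b_{jl}=n/(n+\hat\tau_j^{-2}\hat\psi_l^{-2})$ and $\bar\lambda_j=D_{1:H}v_{j,1:H}/\sqrt n$, so that $\hat\lambda_{jl}=b_{jl}\bar\lambda_{jl}$. Then $\|\bar\lambda_j\|=\|U_{1:H}^\top y^{(j)}\|/\sqrt n\le\|y^{(j)}\|/\sqrt n=\mathcal O_{pr}(1)$. Since this block enters through the product $b_{jl}b_{j'l}\lesssim n^{-4/5}$, Cauchy--Schwarz already gives $o_{pr}(n^{-1/2})$. So the obstacle you flag does not arise for the loadings.

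\textbf{Residual variance.} Here the obstacle does arise, and your treatment is the more careful one. The superfluous directions enter $\delta_j^2-\delta_{svd,j}^2$ through a term of order $n^{-1}\sum_{k<l\le H}b_{jl}(u_l^\top y^{(j)})^2$, with only a single power of $b_{jl}\lesssim n^{-2/5}$. With $H$ of order $n^{2/5}$, the crude bound $\|y^{(j)}\|^2\lesssim n$ gives only $\mathcal O(n^{-2/5})$, which is not $o(n^{-1/2})$. The paper asserts the needed bound by pointing to the steps of Lemma~\ref{lemma:delta_j}, but those steps only yield $|F_j|\lesssim n^{-2/5}$. Your estimate $\sum_{k<l\le H}(u_l^\top y^{(j)})^2=\mathcal O_{pr}(H\log n)$ is what actually closes this step. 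You must cite a delocalization bound $\max_{k<l\le H}v_{jl}^2\lesssim n^{\varepsilon}/p$ for non-outlier right singular vectors that allows diverging spikes and holds simultaneously over a growing set of indices; the isotropic eigenvector bounds of \citet{Bloemendal_16} are the natural source.

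With that input, neither of your superfluous-block bounds needs Proposition~\ref{prop:psi_l}. They only need $H\log n=o(\sqrt n)$. So your remark that this is exactly where $H=\mathcal O(n^{2/5})$ enters is inaccurate; in the paper the rate enters through Proposition~\ref{prop:psi_l}.

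\textbf{Oracle CLT.} You defer this to \citet{fable} and \citet{bai_03}; the paper proves it directly. It writes $\hat\theta_{svd,jj'}=n^{-1}y^{(j)\top}Py^{(j')}$ and isolates the $P-P_0$ remainders. It then decouples $m^{(j)\top}(P-P_0)\epsilon^{(j')}$ and $\epsilon^{(j)\top}P\epsilon^{(j')}$ from the noise using the leave-one-out projections of Lemma~\ref{lemma:P_stability}. This decoupling is the substantive content of the oracle step. After dividing by $n$, the naive norm bounds on these two terms are $\mathcal O(n^{-1/2})$ and $\mathcal O(1)$, neither of which is $o(n^{-1/2})$. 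Citing \citet{fable} is a legitimate shortcut, since their point estimator differs from the oracle only by a factor $1-\mathcal O(1/n)$. You should check, however, that their growth conditions cover $p/n\to c$.

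\textbf{Minor point.} In the active block, the lower bound $\hat\tau_j^2\gtrsim 1/H$ needs three things: an upper bound on $\hat\sigma_j^2$, the bound $\|U_{1:k}^\top y^{(j)}\|^2\gtrsim n$, and $\min_j\|\lambda_{0j}\|\ge c_\Lambda$. The lower bound on $\hat\sigma_j^2$ that you mention matters only for Proposition~\ref{prop:psi_l}.
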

Theorem \ref{thm:clt} states that the $j,j'$-th entry
    of our point estimator for the covariance is approximately normal distributed centered at the corresponding true value of the parameter with variance $S_{0jj'}^2 / n$, that is, for large $n$, $\hat \theta_{jj'} \overset{\cdot}{\sim} N(\theta_{0jj'}, S_{0jj'}^2 / n)$, where $\overset{\cdot}{\sim}$ reads \say{is approximately distributed as}.
Hence, 
an asymptotically valid $(1-\alpha) \%$ confidence interval for $ \theta_{0jj'}$ is 
\begin{equation}\label{eq:conf_int}
    \begin{aligned}
         \text{CI}_{jj'} = \bigg[\hat \theta_{jj'} \pm z_{1-\alpha/2}\frac{\hat S_{jj'}}{\sqrt{n}}\bigg],
    \end{aligned}
\end{equation}
where $z_{q} = \Phi^{-1}(q)$, $\Phi(\cdot)$ denotes the cumulative distribution function of a standard Gaussian random variable, and $\hat S_{jj'}$ is a consistent estimate of $S_{0jj'}$.
Similarly, we can characterise the asymptotic behaviour of the posterior distribution on the entries of $\tilde \Omega$ induced by  \eqref{eq:posterior_update_cc}. 
\begin{theorem}\label{thm:bvm}
    Suppose $H = \mathcal{O}(n^{2/5})$ and Assumptions \ref{assumption:model}--\ref{assumption:homoscedasticity} hold. 
    Then, as $n \to \infty$, we have
\begin{equation*}
    \begin{aligned}
      \sup_{x \in \mathbb R} \left| \Pi\left\{ \frac{\sqrt{n} (\tilde \theta_{jj'} - \hat \theta_{0jj'})}{T_{0jj'}^2(\rho)} \leq x \right\} - \Phi(x) \right| \overset{pr}{\to} 0, \quad (j, j' = 1, \dots, p),
    \end{aligned}
\end{equation*}
where 
       \begin{equation}\label{eq:S_0_sq}
             T_{0jj'}^2(\rho) =   \begin{cases}
             \rho^2 \sigma_0^2 \big(||\lambda_{0j}||^2  + ||\lambda_{0j'}||^2\big),\quad & \text{if } j \neq j' \\
                 2\sigma_0^4 + 4 \rho^2 \sigma_0^2||\lambda_{0j}||^2, \quad & \text{if } j=j'.
             \end{cases}
       \end{equation}

\end{theorem}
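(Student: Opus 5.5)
We prove Theorem \ref{thm:bvm} by decomposing a posterior draw $\tilde\theta_{jj'}$ around the point estimate $\hat\theta_{jj'}$ and identifying which pieces carry the $\sqrt{n}$-scale posterior fluctuation. The centering in the display is read as $\hat\theta_{jj'}$, and the normalization as the standard deviation $T_{0jj'}(\rho)$. Write $\tilde\lambda_j = \hat\lambda_j + \rho\,\tilde\sigma_j \Psi_{n,j}^{-1/2} z_j$, where the $z_j \sim N_H(0,I_H)$ are independent across $j$ under the posterior. Then for $j\neq j'$
\begin{equation*}
\tilde\theta_{jj'}-\hat\theta_{jj'} = \rho\tilde\sigma_j z_j^\top \Psi_{n,j}^{-1/2}\hat\lambda_{j'} + \rho\tilde\sigma_{j'} z_{j'}^\top \Psi_{n,j'}^{-1/2}\hat\lambda_{j} + \rho^2\tilde\sigma_j\tilde\sigma_{j'} z_j^\top\Psi_{n,j}^{-1/2}\Psi_{n,j'}^{-1/2}z_{j'} .
\end{equation*}
For $j=j'$ the decomposition additionally contains $2\rho\tilde\sigma_j z_j^\top\Psi_{n,j}^{-1/2}\hat\lambda_j$, the term $\rho^2\tilde\sigma_j^2\|\Psi_{n,j}^{-1/2}z_j\|^2$, and $\tilde\sigma_j^2-\delta_j^2$.

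The argument proceeds in three steps.

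\textbf{Step 1: the linear terms.} Conditionally on $(\tilde\sigma_j,\tilde\sigma_{j'})$, the sum of the two linear terms is exactly Gaussian with variance
\begin{equation*}
\rho^2\bigl(\tilde\sigma_j^2\,\hat\lambda_{j'}^\top\Psi_{n,j}^{-1}\hat\lambda_{j'} + \tilde\sigma_{j'}^2\,\hat\lambda_j^\top\Psi_{n,j'}^{-1}\hat\lambda_j\bigr).
\end{equation*}
Since $\Psi_{n,j}^{-1} = n^{-1}\mathrm{diag}\bigl((1+(n\tau_j^2\psi_l^2)^{-1})^{-1}\bigr)$, we must show $n\,\hat\lambda_{j'}^\top\Psi_{n,j}^{-1}\hat\lambda_{j'} \overset{pr}{\to} \|\lambda_{0j'}\|^2$ (up to rotation). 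We split this quantity over components:

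\begin{itemize}
\item \emph{Active components $l\le k$.} Here we need $n\hat\tau_j^2\hat\psi_l^2\to\infty$. This follows because $\hat\psi_l^2 \asymp H\cdot p/(kp)$ and $\hat\tau_j^2 \asymp \|\lambda_{0j}\|^2/(H\sigma_0^2)$, using the singular value separation $d_l^2\asymp np$ for $l\le k$ against $d_{k+1}^2 - d_{H+1}^2 = O(n)$. These are the "appropriate bounds on the $\hat\psi_l^2$'s and $\hat\tau_j^2$'s" already used for Theorem \ref{thm:consistency}. The weights therefore tend to $1$.
\item \emph{Superfluous components $l>k$.} By Corollary \ref{corr:norm_Lambda_l} and Proposition \ref{prop:psi_l}, $\hat\lambda_{j'l}\to 0$ uniformly. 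Summed over up to $H=O(n^{2/5})$ components, the contribution is still $o_p(1)$. This needs the explicit rate: $\hat\lambda_{j'l} \le n\hat\psi_l^2\hat\tau_{j'}^2\, d_l|v_{j'l}|/\sqrt n$, and this bound is tiny.
\end{itemize}

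Combining this with $\sum_{l\le k}\hat\lambda_{j'l}^2\to\|\lambda_{0j'}\|^2$ from Theorem \ref{thm:consistency}, and with posterior concentration of $\tilde\sigma_j^2$ around $\sigma_0^2$, gives the variance limit $\rho^2\sigma_0^2(\|\lambda_{0j}\|^2+\|\lambda_{0j'}\|^2)$. The concentration of $\tilde\sigma_j^2$ holds because its inverse-gamma posterior has shape $(v+n)/2$ and mean $\delta_j^2\to\sigma_0^2$; the latter uses the residual after shrinkage, with overfitted components removing only $O(H)$ degrees of freedom.

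\textbf{Step 2: the quadratic terms.} The cross term $z_j^\top\Psi_{n,j}^{-1/2}\Psi_{n,j'}^{-1/2}z_{j'}$ has posterior variance at most $\sum_l (n+\cdot)^{-2} \le H/n^2$, so after multiplying by $\sqrt n$ it is $O_p(\sqrt{H}/n^{1/2})=o_p(1)$. On the diagonal, $\|\Psi_{n,j}^{-1/2}z_j\|^2$ has posterior mean $\mathrm{tr}(\Psi_{n,j}^{-1})$. For the superfluous components $n+\hat\tau_j^{-2}\hat\psi_l^{-2}\gtrsim n^{7/5}$, so this mean is $\lesssim k/n + H n^{-7/5}$, and after multiplying by $\sqrt n$ it vanishes. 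Its fluctuation is also negligible. This is exactly where Proposition \ref{prop:psi_l} is essential: without shrinkage the mean would be $H/n$, and $\sqrt n\,H/n$ diverges for $H\asymp n^{2/5}$... actually it does not diverge, but it does produce a bias unless $H=o(\sqrt n)$. So the shrinkage is what makes the bound clean.

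\textbf{Step 3: combining.} For $j=j'$, the term $\sqrt n(\tilde\sigma_j^2-\delta_j^2)$ under the inverse-gamma posterior is asymptotically $N(0,2\sigma_0^4)$ by the standard CLT for Gamma variables. It is independent of $z_j$, and the linear term contributes $4\rho^2\sigma_0^2\|\lambda_{0j}\|^2$. Combining Steps 1 and 2 via Slutsky's lemma conditionally on the data, and using continuity of $\Phi$ (Pólya's theorem) to upgrade to the supremum, yields the claim in probability.

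The main obstacle is Step 1's uniform control of the superfluous contributions, summing $H$ terms each requiring Proposition \ref{prop:psi_l}. Closely related is the treatment of rotation, since the singular vectors for $l\le k$ are identified only up to rotation; we handle this by working with the rotation-invariant quantity $\|\hat\lambda_{j,1:k}\|$, which converges by the row-wise bounds underlying Theorem \ref{thm:consistency}.
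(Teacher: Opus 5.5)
Your proposal is correct and follows essentially the same route as the paper: the representation $\tilde\lambda_j=\hat\lambda_j+\rho\tilde\sigma_j\Psi_{n,j}^{-1/2}z_j$, exact conditional Gaussianity of the linear terms with variance limit from Proposition~\ref{prop:psi_l} (superfluous components) and Lemma~\ref{lemma:hyperparams} (active components), negligibility of the quadratic terms, and on the diagonal an inverse-gamma CLT for $\sqrt n(\tilde\sigma_j^2-\delta_j^2)$ that is independent of $z_j$; your second-moment/Markov bounds and P\'olya's theorem simply stand in for the paper's operator-norm bounds and its appeal to Lemmas F.1--F.2 of \citet{fable}. One correction: $\sum_{l\le k}\hat\lambda_{j'l}^2\to\|\lambda_{0j'}\|^2$ does not follow from the spectral-norm bound of Theorem~\ref{thm:consistency}, which only controls entries of $\hat L-L_0$ up to $O(\sqrt{n\log n})$; it follows instead from the row-wise expansion $n^{-1}v_{j,1:k}^\top D_{1:k}^2v_{j,1:k}=\|\lambda_{0j}\|^2+O(\sqrt{\log n/n})$ proved inside Lemma~\ref{lemma:hyperparams}, combined with your observation that the active shrinkage weights tend to one.
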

According to Theorem \ref{thm:bvm}, the induced posterior distribution on the $j,j'$-th entry of the covariance after appropriate recentering and rescaling converges to a zero-mean Gaussian with variance $T_{0jj'}^2(\rho)$. Estimating $T_{0jj'}(\rho)$ with $\hat T_{jj'}(\rho)$ by replacing true parameters with their respective point estimates, we obtain an analytic asymptotic approximation to a credible interval for $\theta_{jj'}$, 
\begin{equation}\label{eq:cred_int}
    \begin{aligned}
       \text{CR}_{jj'}(\rho) = \bigg[\hat \theta_{jj'} \pm z_{1-\alpha/2} \frac{\hat T_{jj'}(\rho) }{\sqrt{n} }\bigg].
    \end{aligned}
\end{equation}
Note that $T_{0jj'}(1) \leq S_{0jj'}$, which shows the undercoverage suffered by the posterior without correction induced by \eqref{eq:posterior_update}. We adopt the same strategy of \citet{fable} to tune the inflation term $\rho$ to achieve asymptotically valid coverage on average across entries of $\Theta$.

\section{Numerical experiments}\label{sec:numerical_experiments}


We compare our procedure (denoted by \texttt{EB} in the figures) with four alternatives: \citet{Barigozzi2018ConsistentEO} (\texttt{BC}), the principal component estimator (\texttt{PC}), \citet{rotate} (\texttt{ROTATE}),  \citet{fable} (\texttt{FABLE}) and \citet{sparse_infinite} (\texttt{MGSP}).
\texttt{FABLE}'s implementation provided by the authors estimates the rank via the joint information criterion (JIC, \citet{chen_jic}). \texttt{MGSP} is fit using its default overfitted specification, which adaptively shrinks redundant factors. For all other procedures, we consider four choices: the true latent dimension $k$, $\widehat{k}_{\mathrm{JIC}}$, $\widehat{k}_{\mathrm{JIC}}+10$, where $\widehat{k}_{\mathrm{JIC}}$ is the estimate obtained via the JIC, and $\lfloor\sqrt{p}\rfloor$.  

We consider three generating mechanisms for the loading matrix. 
Scenario (a) represents a standard dense factor model, scenario (b) introduces variable-specific signal heterogeneity, and scenario (c) imposes sparsity in the loading matrix. In particular, letting $s^2 = (s_1^2, \dots, s_k^2)$, with $s_l = 0.15 + 0.1 \frac{l-1}{k-1}$ for $l=1, \dots, k$, we generate the elements of the loading matrix according to one of the following three options:
\begin{enumerate}
    \item [(a)] $\lambda_{0jl} \sim N(0, s_{l}^2)$
    \item [(b)] $\lambda_{0jl} \sim N(0, s_{jl}^2)$, with $s_{jl}^2 = r_j^2s_l^2$ and $r_j^2 = \begin{cases}
        3 \quad \text{if } j \leq \lfloor p/3 \rfloor,\\
                1/3 \quad \text{if } j \geq \lfloor 2p/3 \rfloor +1,\\
        1 \quad \text{otherwise}
            \end{cases}$
            
     \item [(c)] $\begin{cases}
         \lambda_{0jl} \sim N(0, s_{l}^2) \quad \text{if } j \leq \lfloor 2 p/3 \rfloor,\\
         0\quad \text{otherwise}
     \end{cases}$
\end{enumerate}
  for $l=1, \dots, k$ and $j=1, \dots, p$. We generate residual error variances as $\sigma_{0j}^2 \sim \text{Unif}(1, 5)$ for $j=1, \dots, p$. We set $k = 10$ and consider every possible combination of $(n, p) \in \{500, 1000\} \times \{1000, 5000\}$. We evaluate the prediction accuracy via the relative Frobenius error for point estimates of the covariance. 

  \begin{figure}[H]
    \centering
    \begin{subfigure}{0.475\linewidth}
        \centering
        \includegraphics[width=\linewidth]{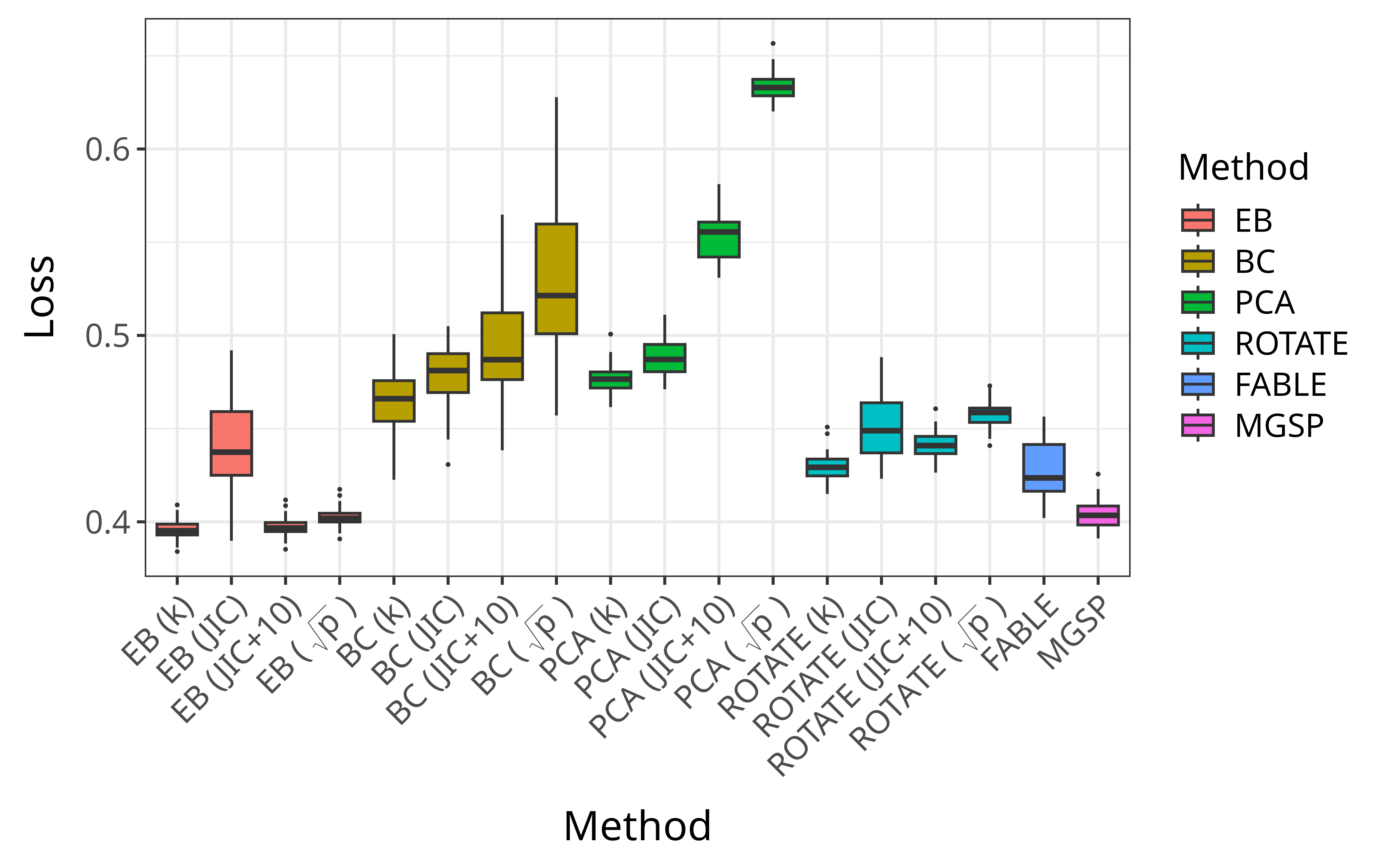}
        \caption{$n=500, p=1000$}
    \end{subfigure}
    \hfill
    \begin{subfigure}{0.475\linewidth}
        \centering
        \includegraphics[width=\linewidth]{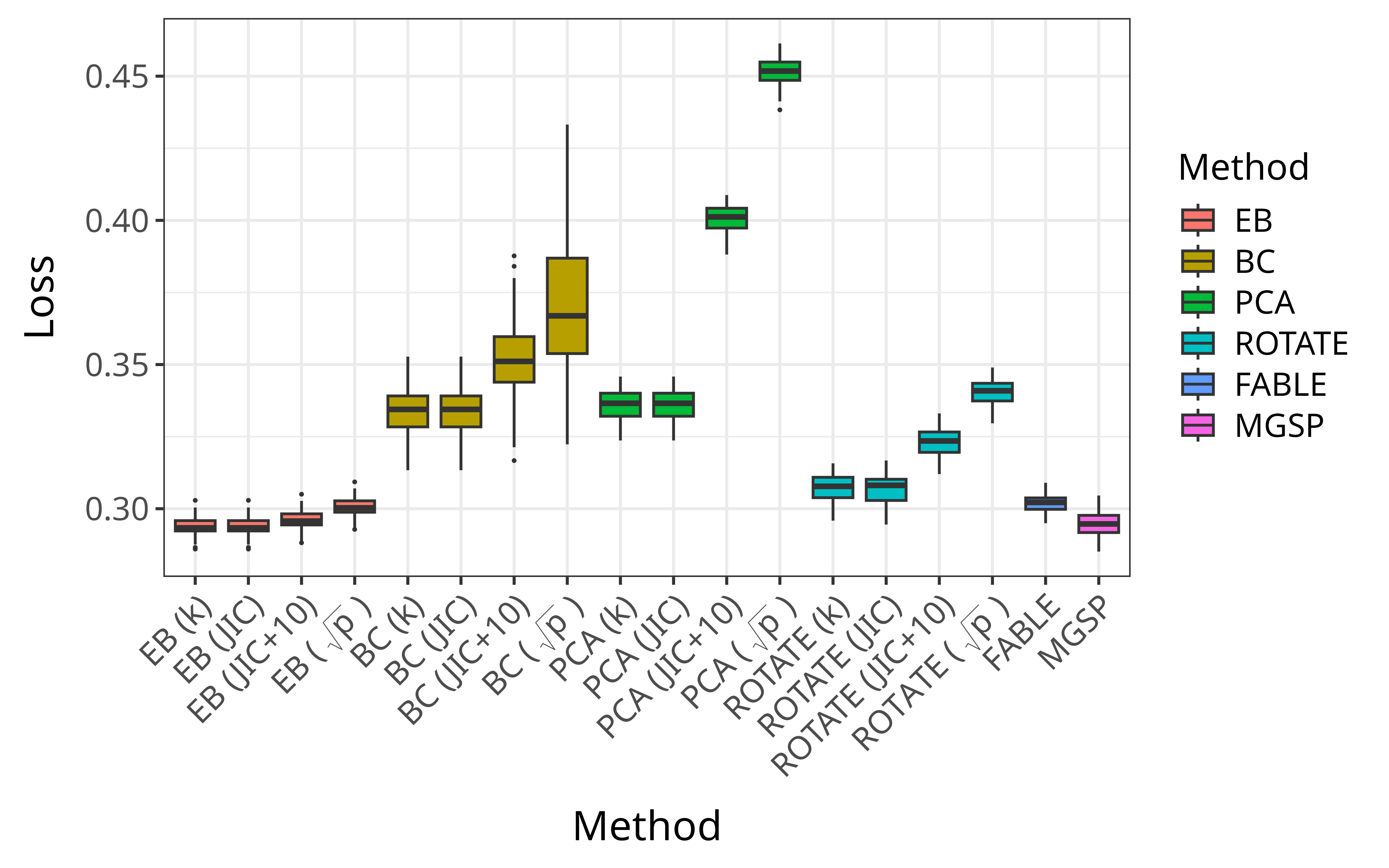}
        \caption{$n=1000, p=1000$}
    \end{subfigure} 
    \vspace{0.5em} 
    \begin{subfigure}{0.475\linewidth}
        \centering
        \includegraphics[width=\linewidth]{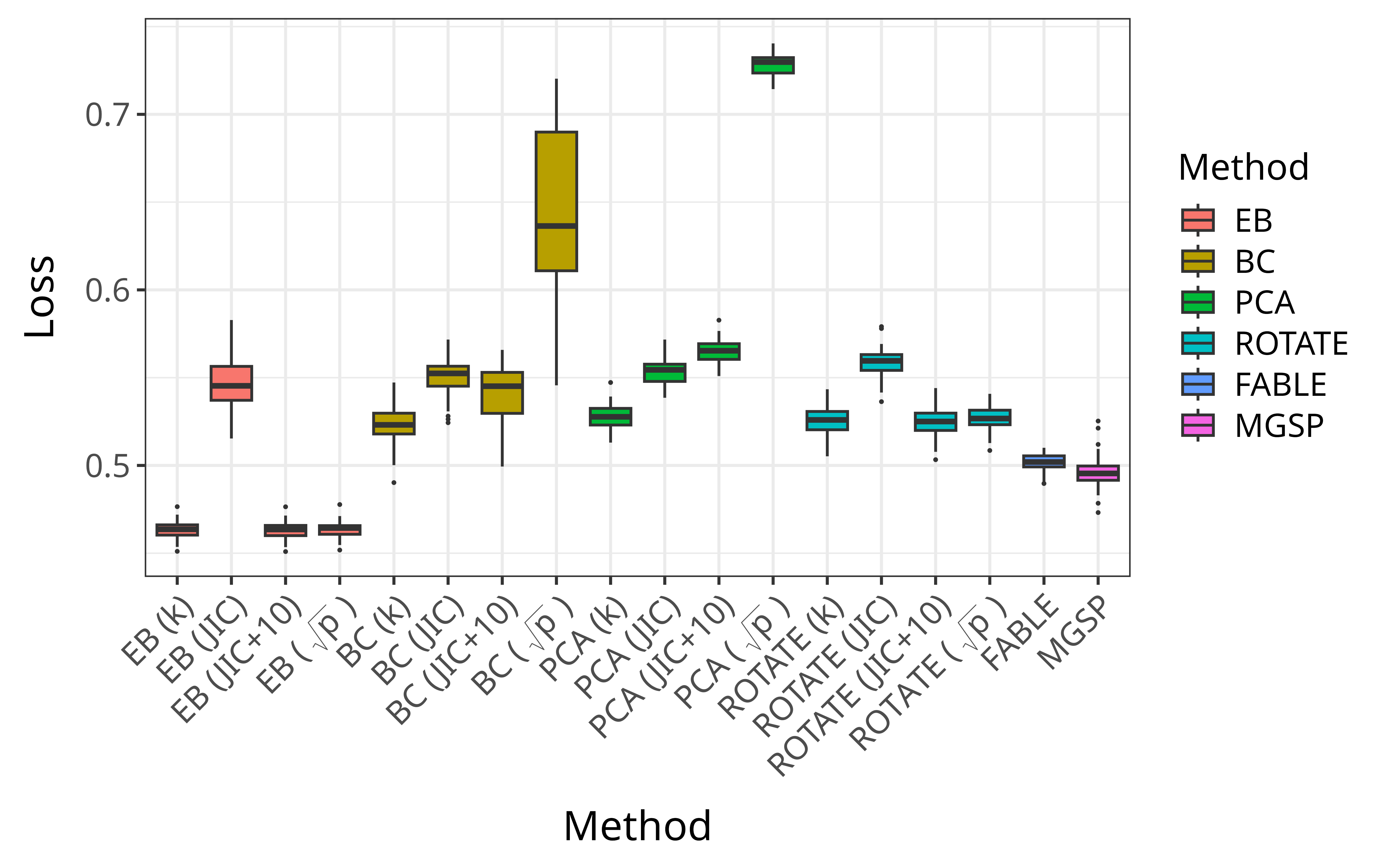}
        \caption{$n=500, p=5000$}
    \end{subfigure}
    \hfill
    \begin{subfigure}{0.475\linewidth}
        \centering
        \includegraphics[width=\linewidth]{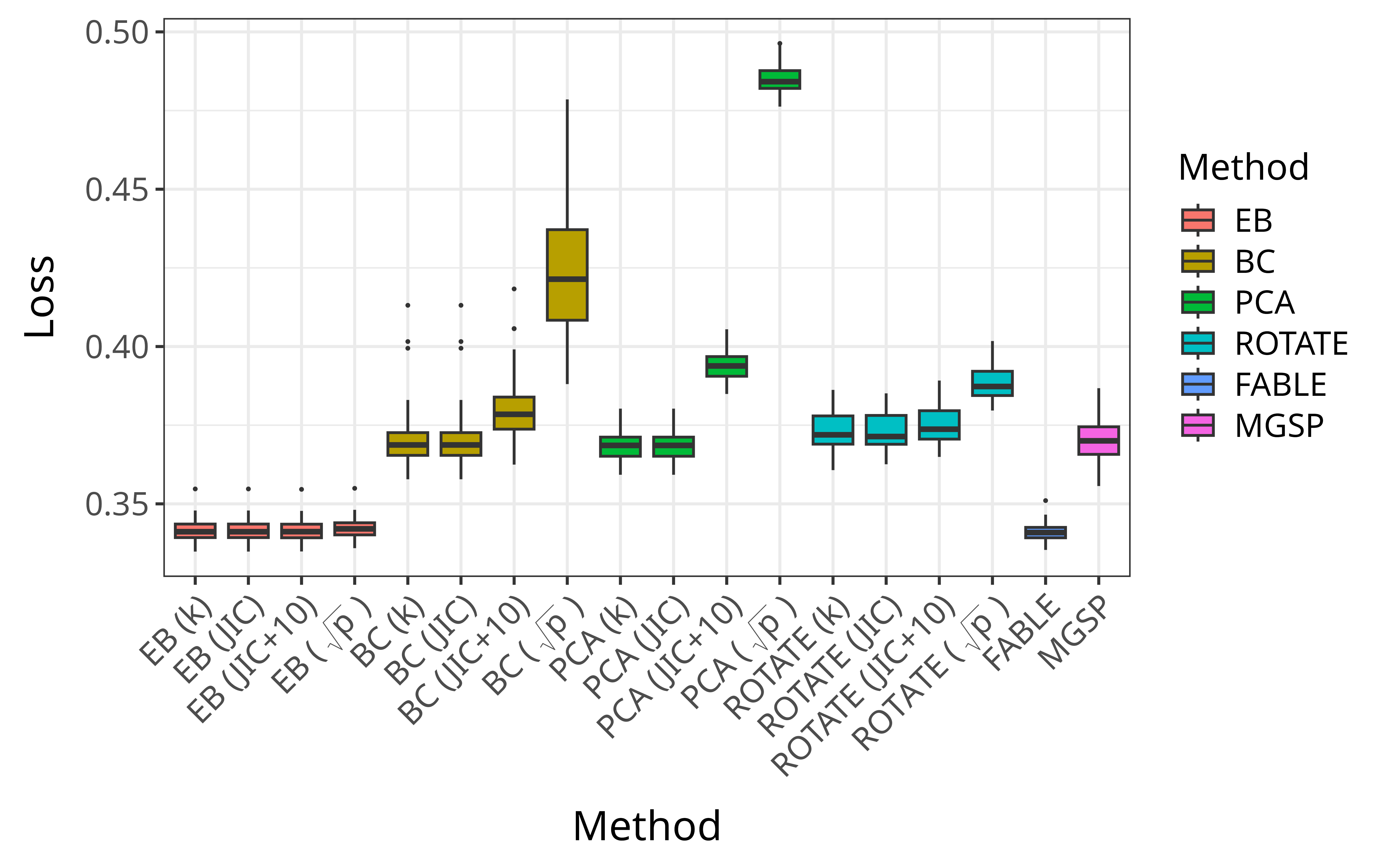}
        \caption{$n=1000, p=5000$}
    \end{subfigure}  
    \caption{Relative Frobenius error in estimating the true covariance matrix under different sample size ($n$) and dimension ($p$) settings in the first scenario.}
    \label{fig:covariance_fr_1}
\end{figure}

Overall, \texttt{EigenBayes} achieves state-of-the-art accuracy in reconstructing the covariance in various settings. Under the first scenario, \texttt{EigenBayes}, \texttt{FABLE}, and \texttt{MGSP} are the best-performing method when $n=1000$, with \texttt{EigenBayes} outperforming the competitors in the lower sample size setting (Figure \ref{fig:covariance_fr_1}). The results are qualitatively similar in the second scenario (Figure \ref{fig:covariance_fr_2}), that is, when the loading magnitudes vary across outcomes, with a slightly more pronounced advantage for \texttt{EigenBayes}. In the sparse setting (\ref{fig:covariance_fr_3}), \texttt{ROTATE}, which explicitly models sparsity in the loading matrix, is the second-best performing method, and is out-performed only by \texttt{EigenBayes}, in particular in the higher-dimensional example. 

Across all simulations, all methods with the latent dimension set to $\widehat{k}_{\mathrm{JIC}}$ (including \texttt{EigenBayes}) perform sub-optimally, when the JIC underestimates the true $k$ (such as when $n=500$). We report the result fitting \texttt{EigenBayes} with $H = \widehat{k}_{\mathrm{JIC}}$ as a reference, as we always recommend setting $H$ to a higher value. Indeed, the accuracy of \texttt{EigenBayes} is remarkably robust to the choice of latent dimension whenever $H$ is chosen to be sufficiently large, with estimates under the most conservative choice ($H = \lfloor\sqrt{p}\rfloor$) performing almost indistinguishably from those having correct latent dimension $(H=k)$, while competitors tend to provide less accurate estimates as the choice of latent dimension increases. Even when $\widehat{k}_{\mathrm{JIC}} = k$, \texttt{EigenBayes} performs just as well and sometimes even better, such as in the sparse setting, than \texttt{FABLE}.

\begin{figure}[h]
    \centering
    \begin{subfigure}{0.475\linewidth}
        \centering
        \includegraphics[width=\linewidth]{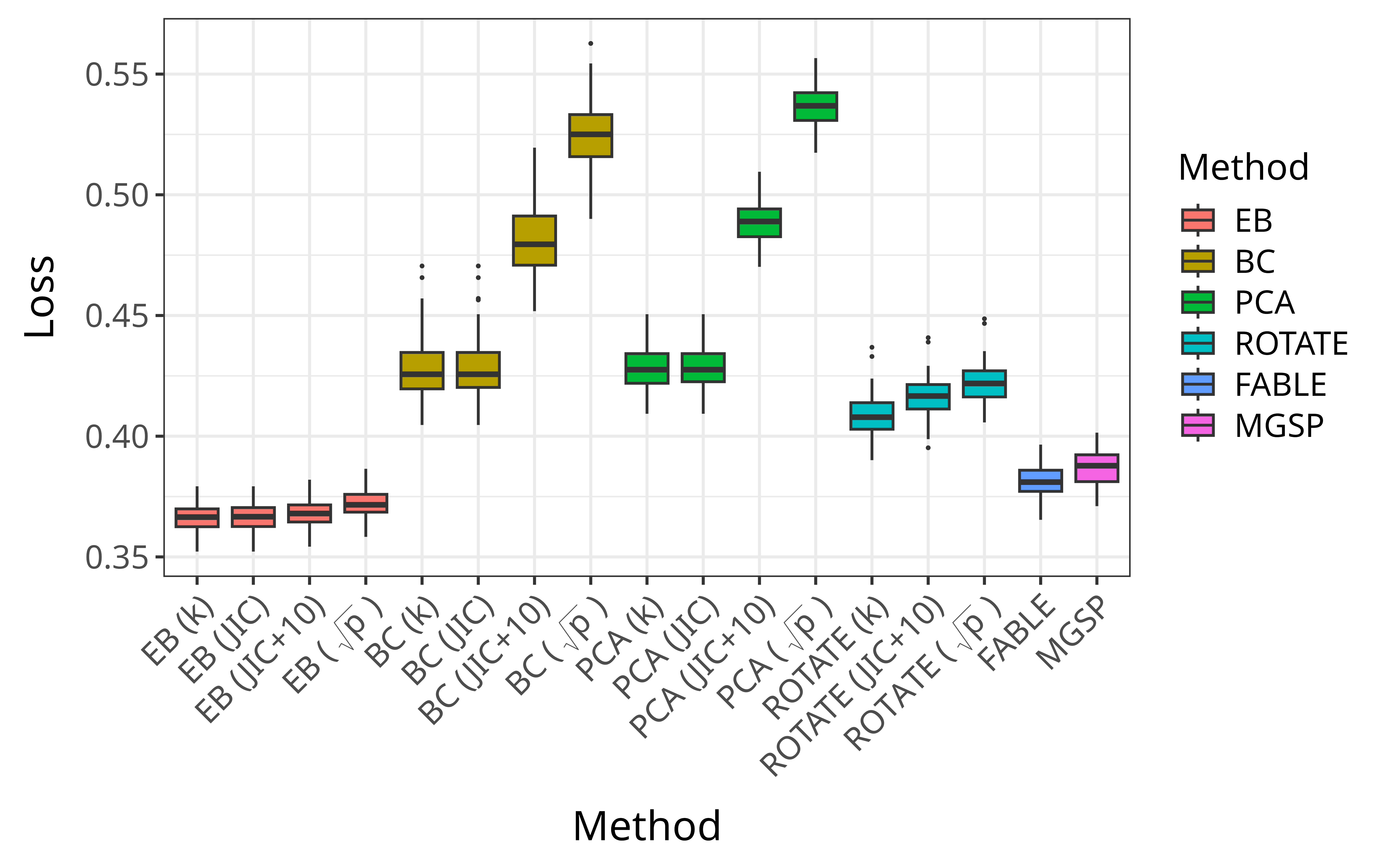}
        \caption{$n=500, p=1000$}
    \end{subfigure}
    \hfill
    \begin{subfigure}{0.475\linewidth}
        \centering
        \includegraphics[width=\linewidth]{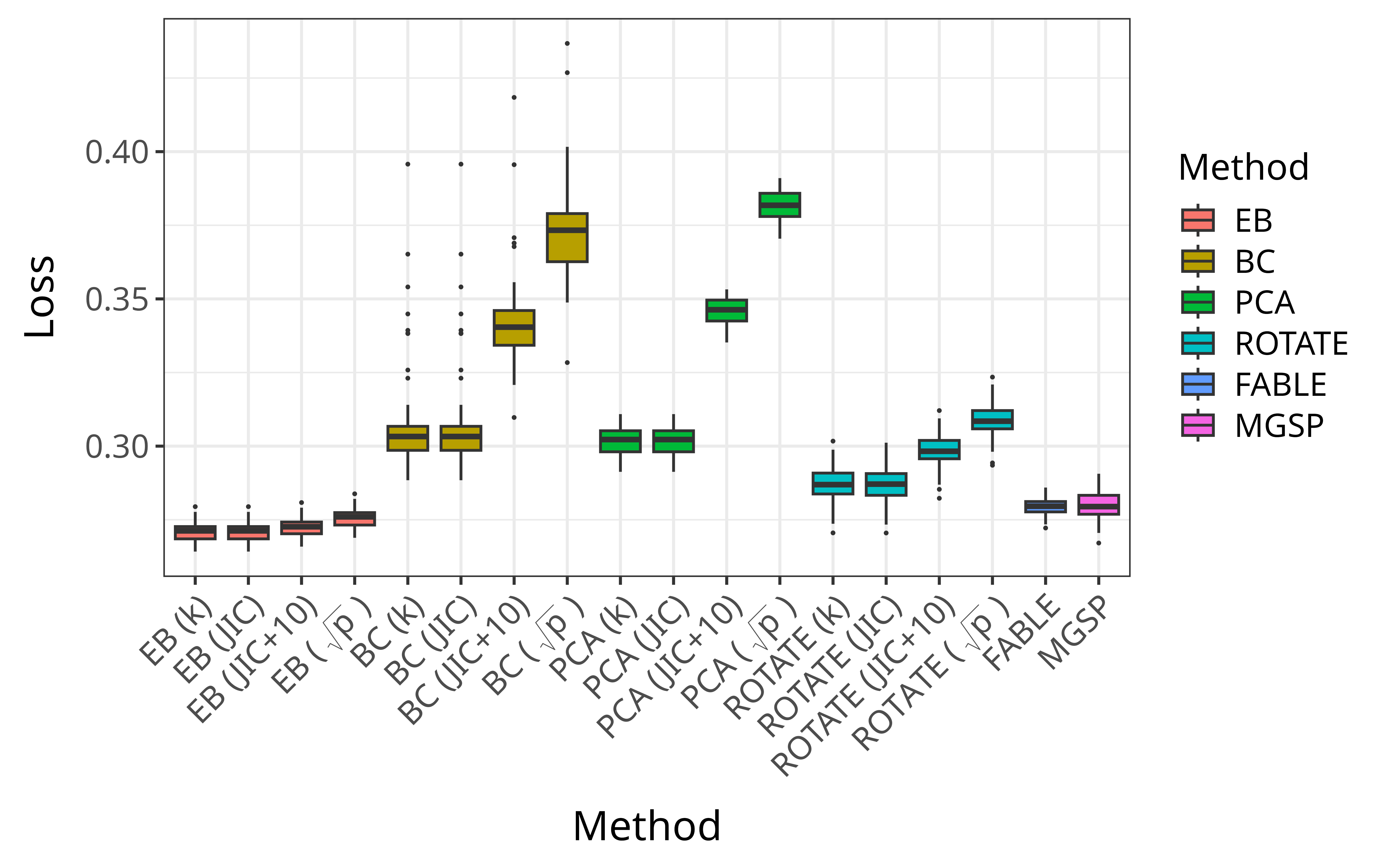}
        \caption{$n=1000, p=1000$}
    \end{subfigure} 
    \vspace{0.5em} 
    \begin{subfigure}{0.475\linewidth}
        \centering
        \includegraphics[width=\linewidth]{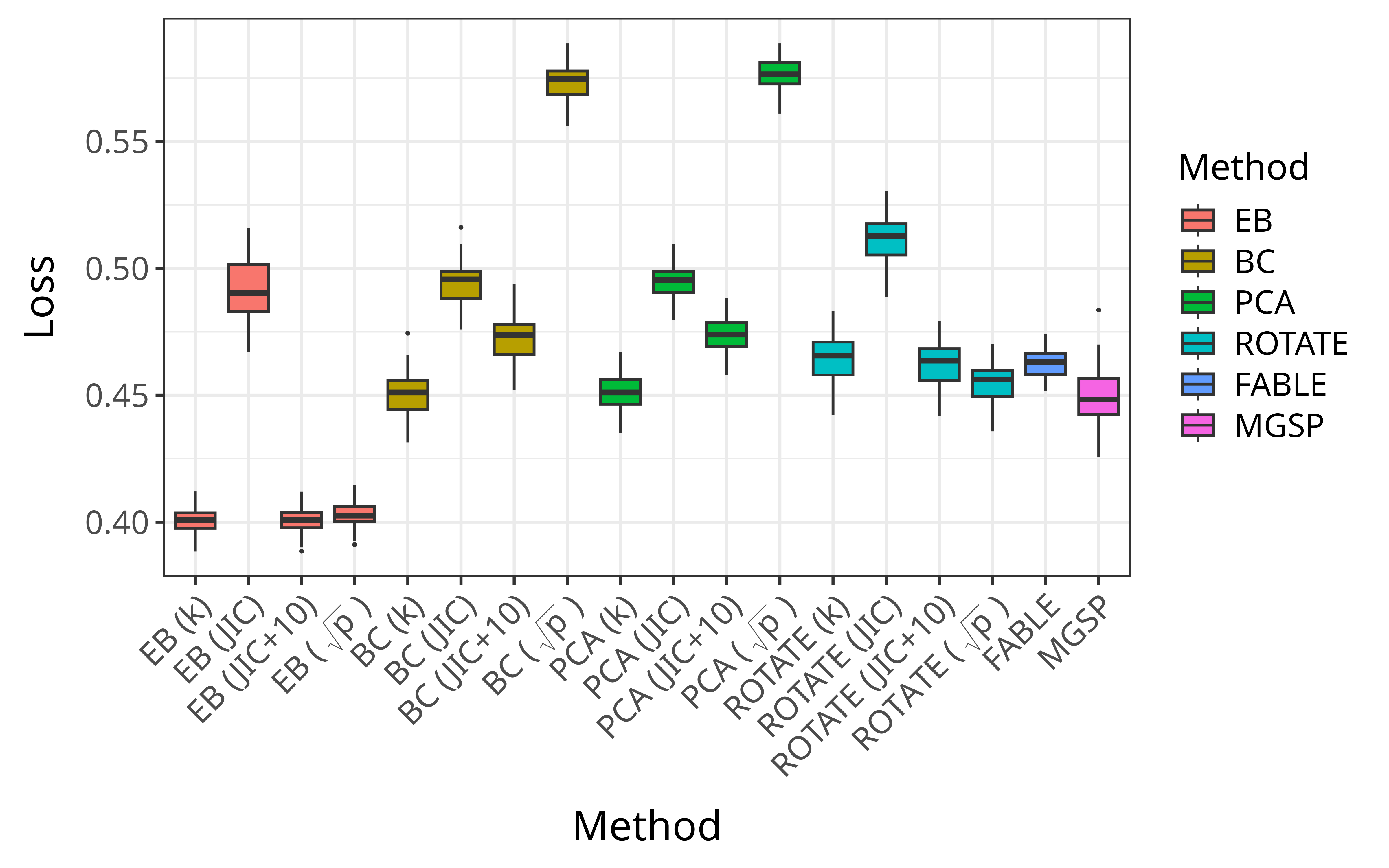}
        \caption{$n=500, p=5000$}
    \end{subfigure}
    \hfill
    \begin{subfigure}{0.475\linewidth}
        \centering
        \includegraphics[width=\linewidth]{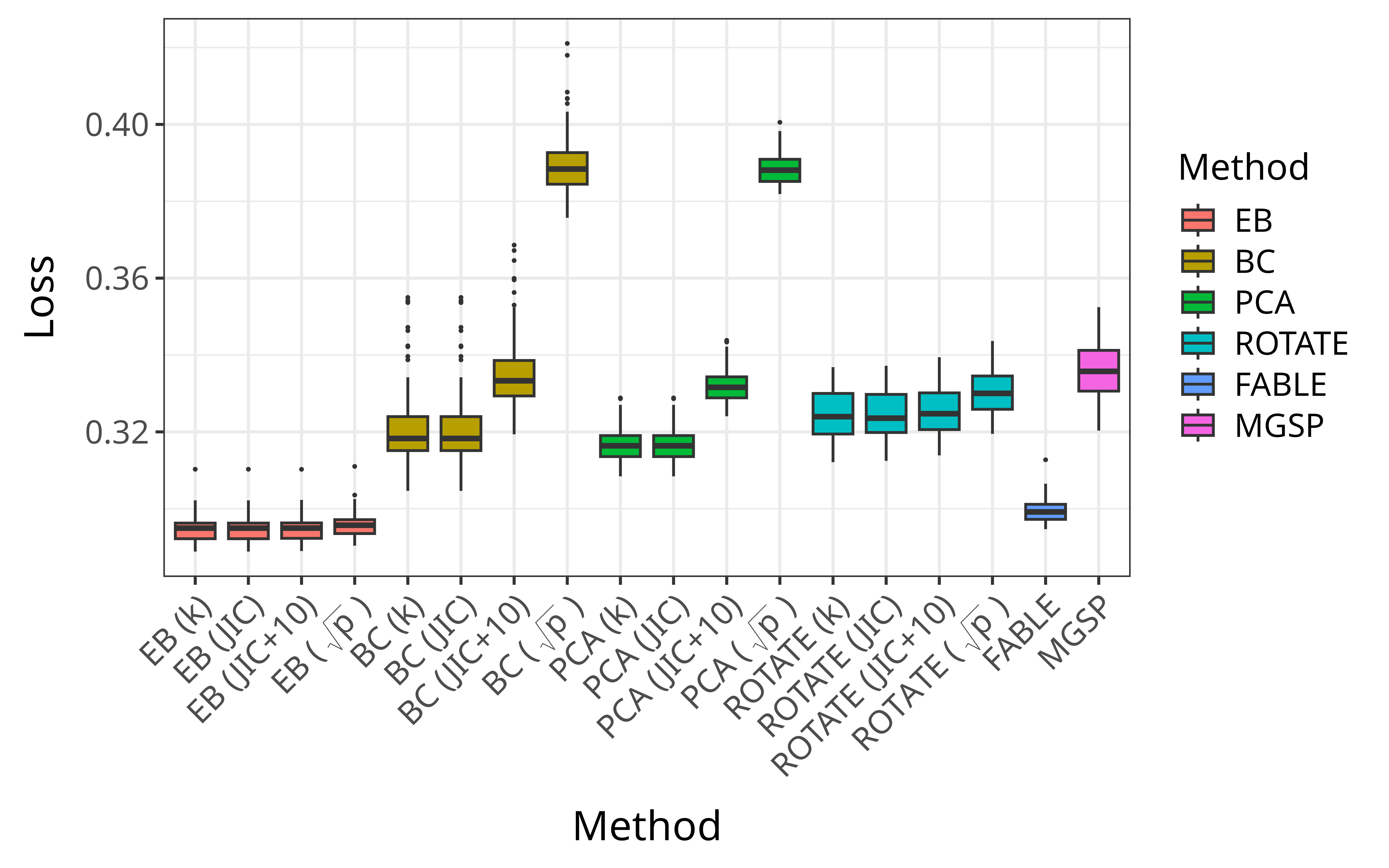}
        \caption{$n=1000, p=5000$}
    \end{subfigure}  
    \caption{Relative Frobenius error in estimating the true covariance matrix under different sample size ($n$) and dimension ($p$) settings in the second scenario.}
    \label{fig:covariance_fr_2}
\end{figure}

\begin{figure}[h]
    \centering
    \begin{subfigure}{0.475\linewidth}
        \centering
        \includegraphics[width=\linewidth]{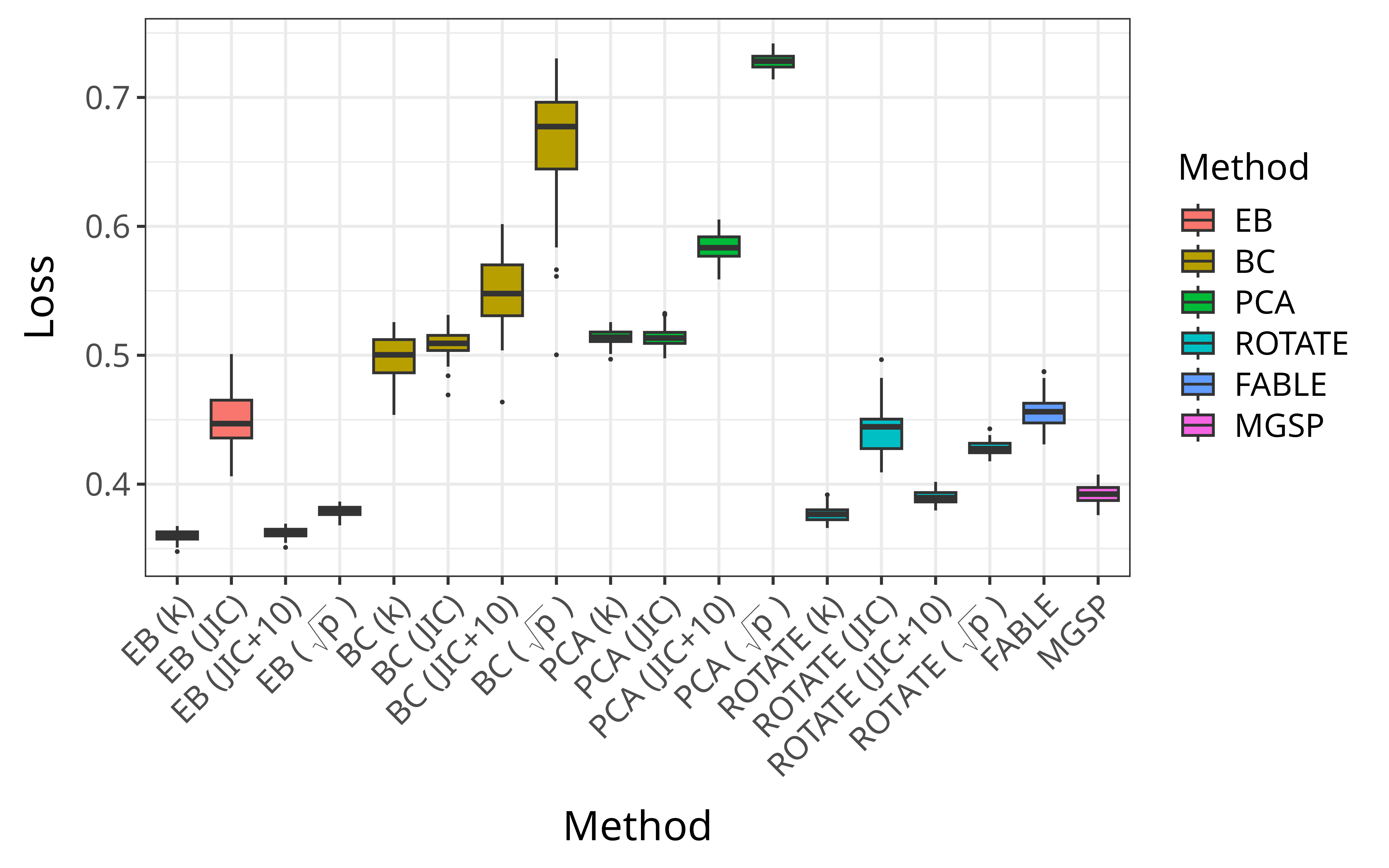}
        \caption{$n=500, p=1000$}
    \end{subfigure}
    \hfill
    \begin{subfigure}{0.475\linewidth}
        \centering
        \includegraphics[width=\linewidth]{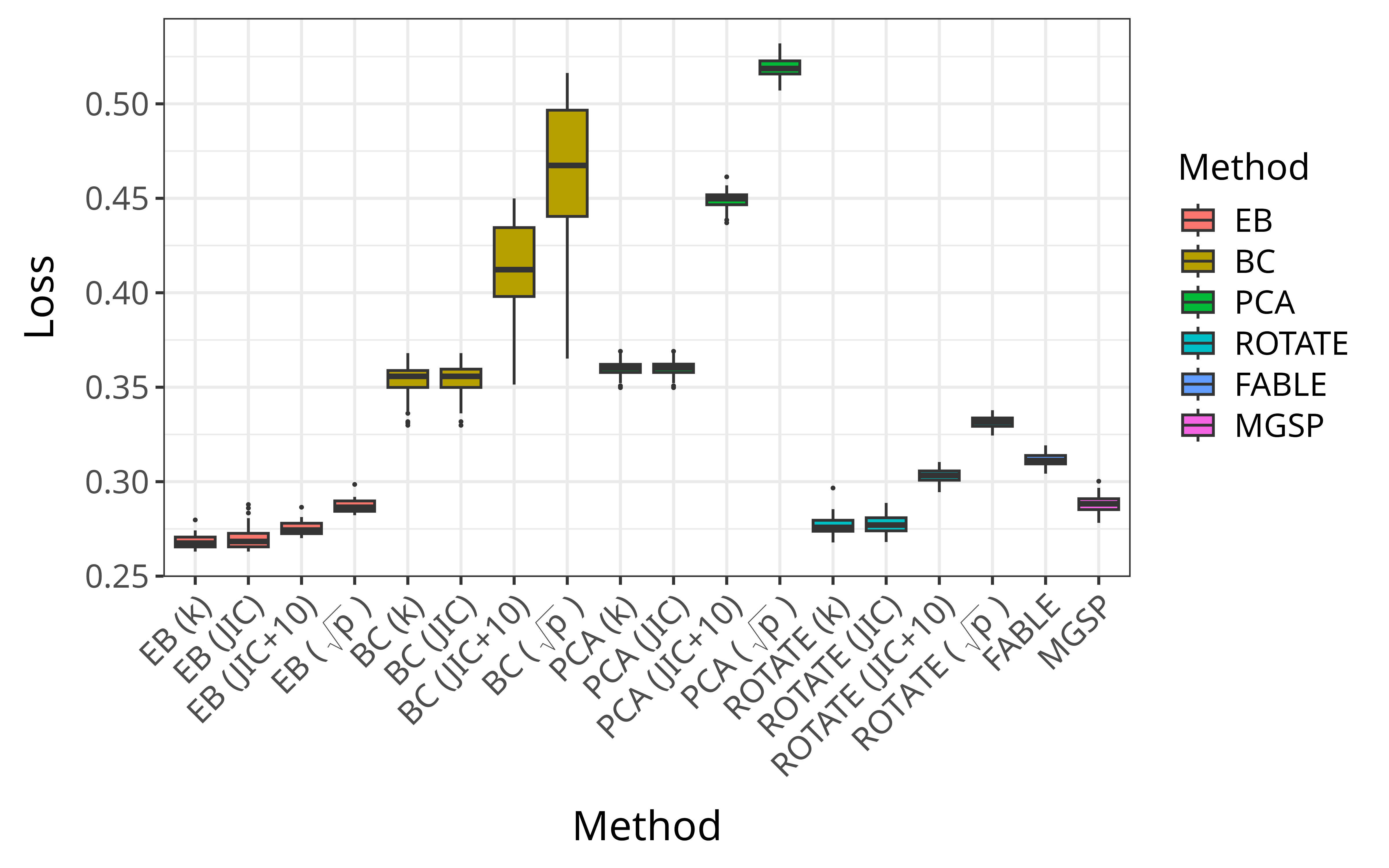}
        \caption{$n=1000, p=1000$}
    \end{subfigure} 
    \vspace{0.5em} 
    \begin{subfigure}{0.475\linewidth}
        \centering
        \includegraphics[width=\linewidth]{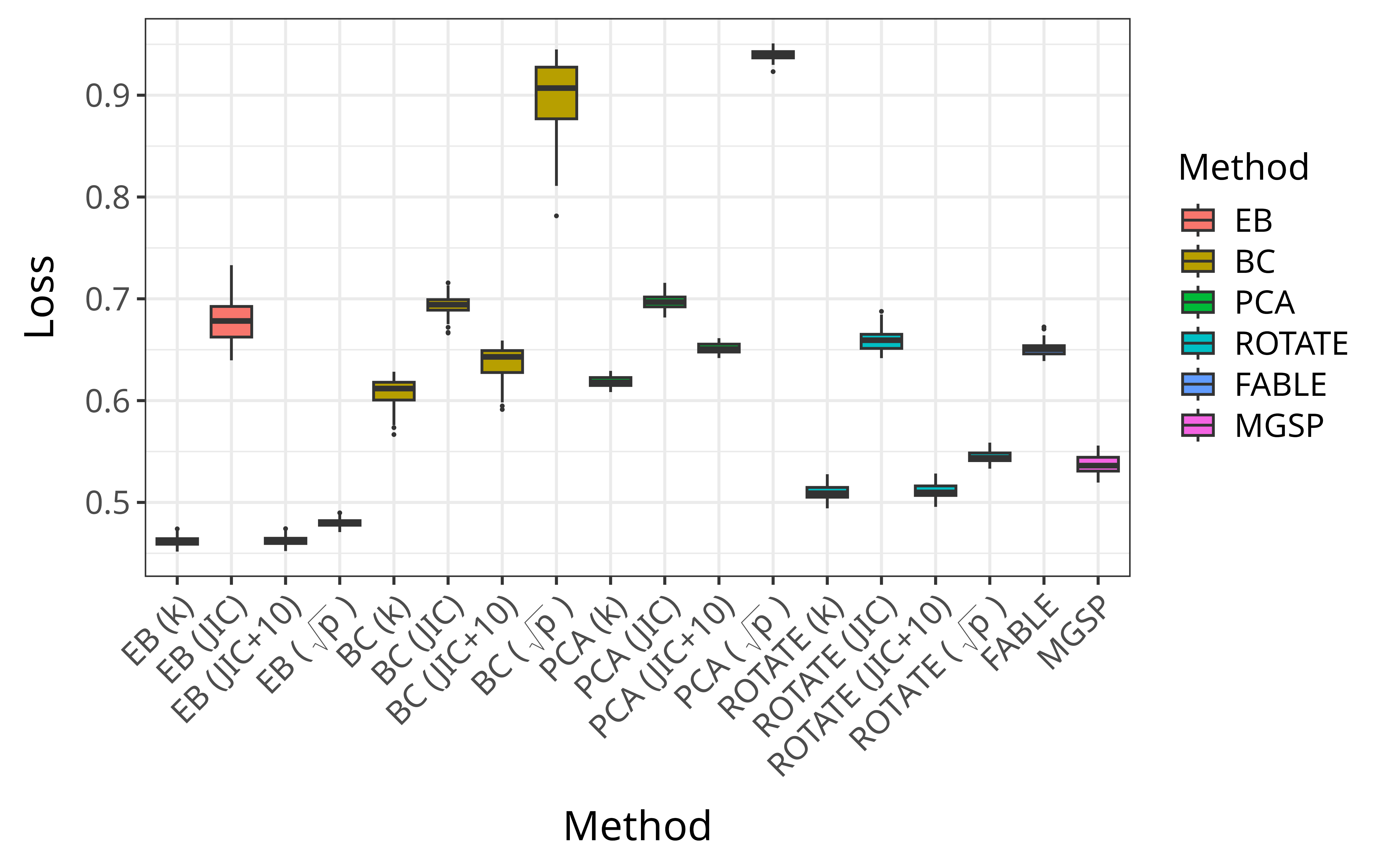}
        \caption{$n=500, p=5000$}
    \end{subfigure}
    \hfill
    \begin{subfigure}{0.475\linewidth}
        \centering
        \includegraphics[width=\linewidth]{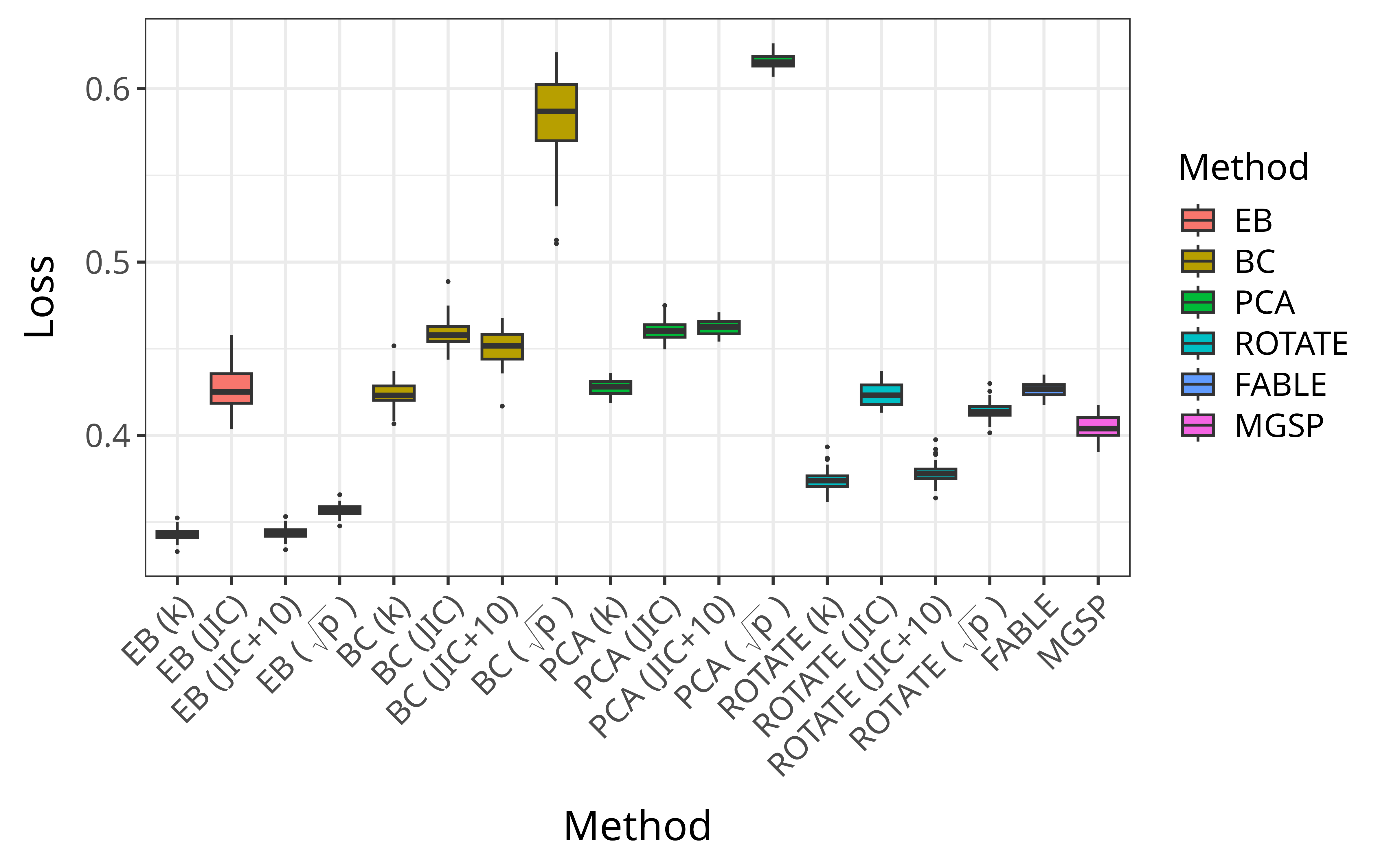}
        \caption{$n=1000, p=5000$}
    \end{subfigure}  
    \caption{Relative Frobenius error in estimating the true covariance matrix under different sample size ($n$) and dimension ($p$) settings in the third scenario.}
    \label{fig:covariance_fr_3}
\end{figure}

Table \ref{tab:uq} reports the average coverage of entries of the covariance matrix for interval estimates. 
For simplicity, we report the results of \texttt{EigenBayes} with $H = \lfloor \sqrt{p} \rfloor$, but similar results were obtained with the other choices. While our procedure provides accurate frequentist coverage on average across entries of the covariance, \texttt{FABLE}'s estimates suffer from non-negligible undercoverage while having comparable width on average, and \texttt{MGSP} slightly undercovers the true parameters when $p=5000$.

\begin{table}[h]
{	\begin{tabular}{cccccccccc}
		& & \multicolumn{4}{c}{\texttt{EigenBayes}} & \multicolumn{2}{c}{\texttt{FABLE}} & \multicolumn{2}{c}{\texttt{MGSP}} \\
		& & \multicolumn{2}{c}{CLT} & \multicolumn{2}{c}{$\Pi$} & \multicolumn{2}{c}{} & \multicolumn{2}{c}{} \\
		Scenario & $(n,p)$ & Cov & Len & Cov & Len & Cov & Len & Cov & Len \\
		\multirow{4}{*}{\textup{(a)}} 
		& $(500,1000)$ & 0.940 & 0.278 & 0.941 & 0.278 & 0.718 & 0.272 & 0.946 & 0.280 \\
		& $(500,5000)$ & 0.940 & 0.204 & 0.941 & 0.204 & 0.678 & 0.201 & 0.945 & 0.201 \\
		& $(1000,1000)$ & 0.944 & 0.272 & 0.944 & 0.272 & 0.700 & 0.258 & 0.876 & 0.304 \\
		& $(1000,5000)$ & 0.945 & 0.198 & 0.946 & 0.198 & 0.679 & 0.205 & 0.810 & 0.199 \\
		\multirow{4}{*}{\textup{(b)}} 
		& $(500,1000)$ & 0.954 & 0.323 & 0.953 & 0.320 & 0.749 & 0.326 & 0.950 & 0.318 \\
		& $(500,5000)$ & 0.948 & 0.237 & 0.947 & 0.234 & 0.685 & 0.231 & 0.944 & 0.230\\
		& $(1000,1000)$ & 0.955 & 0.320 & 0.954 & 0.317 & 0.746 & 0.326 & 0.910 & 0.393 \\
		& $(1000,5000)$ & 0.953 & 0.232 & 0.953 & 0.229 & 0.688 & 0.228 & 0.862 & 0.256 \\
		\multirow{4}{*}{\textup{(c)}} 
		& $(500,1000)$ & 0.965 & 0.218 & 0.964 & 0.218 & 0.786 & 0.184 & 0.963 & 0.216 \\
		& $(500,5000)$ & 0.965 & 0.160 & 0.964 & 0.160 & 0.797 & 0.159 & 0.963 & 0.154 \\
		& $(1000,1000)$ & 0.971 & 0.217 & 0.970 & 0.217 & 0.771 & 0.164 & 0.925 & 0.223 \\
		& $(1000,5000)$ & 0.970 & 0.156 & 0.969 & 0.156 & 0.799 & 0.150 & 0.871 & 0.157 \\
	\end{tabular}}
\caption{
	  Average frequentist coverage (Cov) and average interval length (Len) for interval estimates. For \texttt{EigenBayes}, we report results from interval estimates obtained both using the central limit theorem and the approximation of the posterior distribution. }
    	\label{tab:uq}
    \end{table}

\section{Application}\label{sec:application}
We analyze the ImmGen microarray dataset GSE15907 \citep{painter, desch}, which contains gene-expression measurements from multiple ex-vivo immune lineages, primarily collected from adult B6 male mice and comprising 628 samples. We randomly select 5000 genes out of the half with the highest variance and apply standard pre-processing, which is reported in the supplemental. 

We evaluate competing methodologies via out-of-sample log-likelihood. Specifically, we perform 50 random $80/20$ train-test splits, fit each method on the training samples, and compute the log-likelihood of the held-out samples under the estimated covariance model. 
For all methods that require specifying either the number of factors or an upper bound on it, we fit the method over a grid of values ranging from 40 to 160, assess the robustness of the resulting out-of-sample log-likelihoods to this choice, and compare performance against \texttt{FABLE}.


The left panel of Figure~\ref{fig:application} reports the mean difference in out-of-sample log-likelihood between \texttt{EigenBayes} and each competing method across replications, together with pointwise 95\% bands. Positive values indicate better predictive performance for \texttt{EigenBayes}.  \texttt{EigenBayes} is consistently better than \texttt{BC} and \texttt{PCA} for all values of the latent dimension considered.
As $H$ increases, \texttt{EigenBayes} shows improvements over 
 \texttt{FABLE} and \texttt{MGSP} for a wide range of values for $H$. 
 Finally, the difference with  \texttt{ROTATE} is negligible for smaller values of $H$, but  \texttt{EigenBayes} consistently achieves higher out-of-sample log-likelihood for larger values and is substantially more robust to overfitting $H$.

In real applications, fitting each method over many values of $H$ may be computationally intensive. We therefore also study two default choices, $\widehat{k}_{\mathrm{JIC}}+10$ and $\lfloor \sqrt{p} \rfloor$. For each choice, we report the distribution of $p$-values from one-sided paired $t$-tests assessing whether \texttt{EigenBayes} achieves higher out-of-sample log-likelihood (right panel of Figure \ref{fig:application}). For each competing method except for \texttt{FABLE} and \texttt{MGSP}, we compare \texttt{EigenBayes} against the competitor using the same number of latent components. The resulting $p$-values are small in most comparisons except for the ones with \texttt{MGSP}, providing evidence that \texttt{EigenBayes} yields higher predictive log-likelihood than the competing methods under these default choices of $H$. The comparison with \texttt{MGSP} is more nuanced, since \texttt{MGSP} tends to have a slightly better (worse) out-of-sample performance when we select $K = \widehat{k}_{\mathrm{JIC}}+10$ ($H = \lfloor \sqrt{p} \rfloor$ respectively), but evidence in favour of either approach is not overwhelming for both choices. Nevertheless, \texttt{EigenBayes} is {\em massively faster} with model fitting taking on average less than $1$ second, while posterior computation for \texttt{MGSP} took $\approx 30$ minutes for each replication.

Overall, these results show that \texttt{EigenBayes} achieves excellent accuracy while displaying remarkable robustness to the specification of the number of latent components and appealing computational efficiency. This makes \texttt{EigenBayes} a compelling alternative for practitioners when selecting the factor dimension is difficult, while also reducing the need for extensive tuning.

\begin{figure}[H]
    \centering
    \begin{subfigure}{0.49\linewidth}
        \centering
        \includegraphics[width=\linewidth]{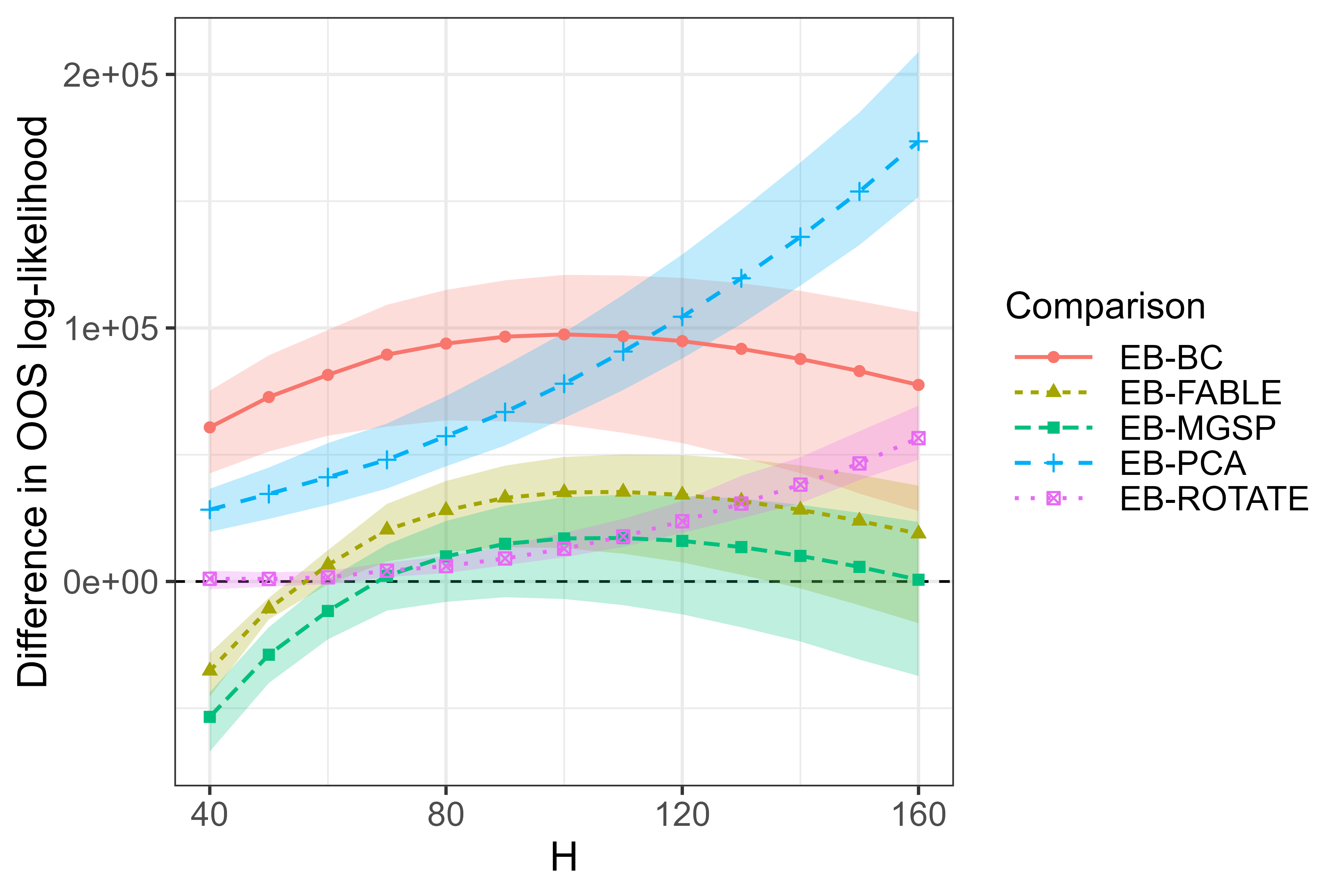}
    \end{subfigure}
    \hfill
    \begin{subfigure}{0.49\linewidth}
        \centering
        \includegraphics[width=\linewidth]{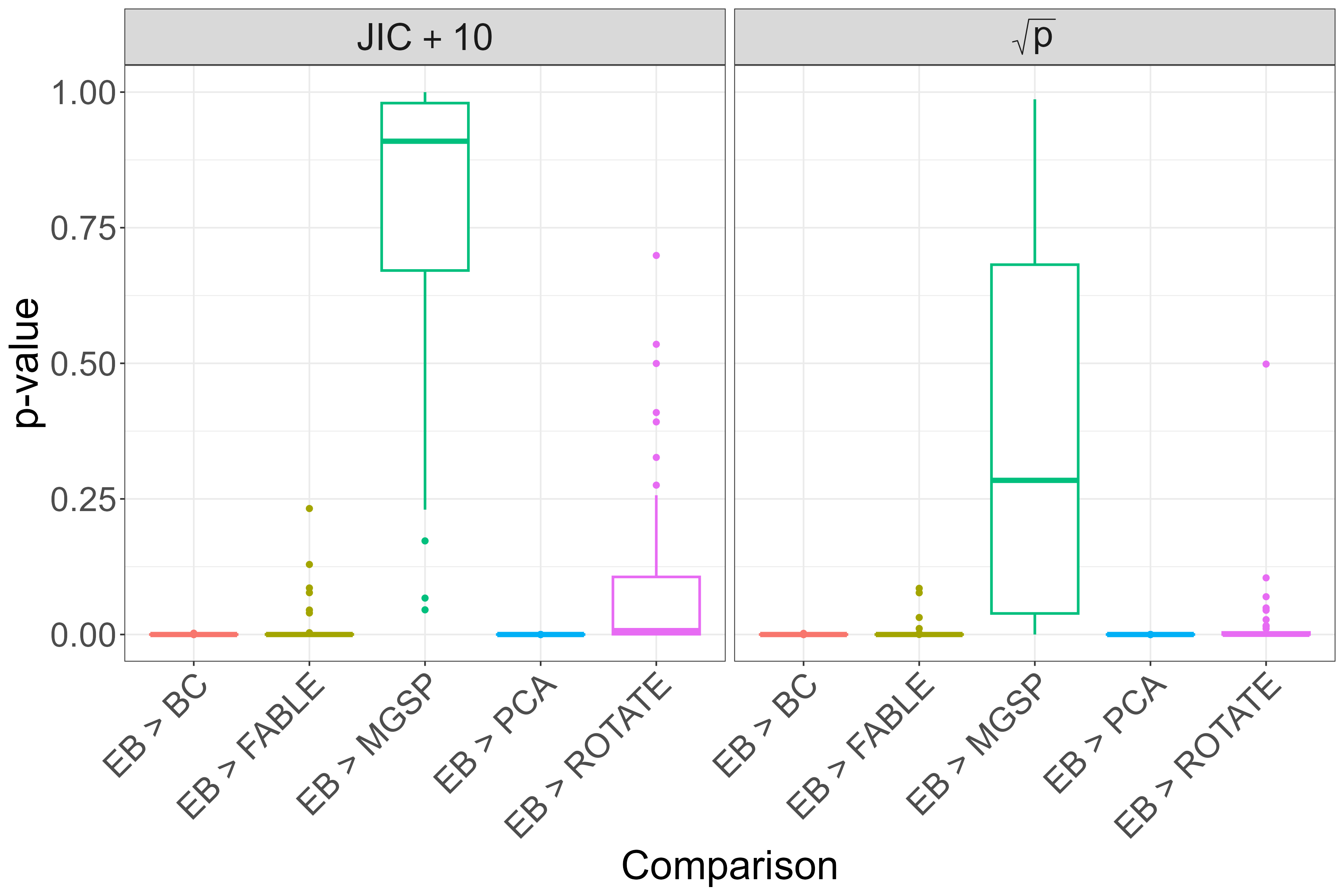}
    \end{subfigure} 
    \caption{Left panel: difference in out-of-sample log-likelihood between \texttt{EigenBayes} and each competing method as $H$ varies, reported as the mean across 50 replications together with pointwise 95\% bands. Right panel: $p$-values across the 50 replications from one-sided paired $t$-tests for the hypothesis that \texttt{EigenBayes} achieves higher out-of-sample log-likelihood, for two default choices of $H$.}
    \label{fig:application}
\end{figure}

\pagebreak
\section{Discussion}\label{sec:conclusion}

We have proposed \texttt{EigenBayes}, a novel methodology for scalable factor analysis, which does not require the rank to be known in advance or to be otherwise estimated. The adaptive shrinkage prior calibrates the regularisation to both latent components and outcomes, allowing superfluous components to be shrunk toward zero and mitigating the sensitivity to the choice of latent dimension. Theoretical results show that \texttt{EigenBayes} recovers the low-rank covariance component at favourable rates and provides asymptotically valid uncertainty quantification, while numerical experiments demonstrate strong finite-sample performance relative to existing alternatives.

Several directions for future research are promising. First, it would be useful to extend the framework beyond Gaussian data. Many scientific applications involve positive, count, binary, or network-valued observations, for which low-rank structure is still central but the Gaussian factor model is inappropriate. Examples include mutational signature analysis and cancer genomics \citep{zito2024compressive}, where nonnegative matrix factorization-type structure is natural \citep{nmf}, as well as generalized linear factor models \citep{gllvm} for binary responses in ecology, social networks, and recommender systems. Developing similar procedures for these settings would require replacing the conjugate Gaussian surrogate regression with scalable approximate Bayesian updates while preserving adaptive component-specific shrinkage. 

Second, another natural direction is to further localize the shrinkage mechanism. The present formulation adapts to heterogeneity across outcomes and latent components through outcome-specific and component-specific shrinkage parameters, but it does not allow the prior variance to vary freely across individual entries of the loading matrix. Introducing entrywise or more locally adaptive shrinkage could improve performance in settings with sparse or highly heterogeneous loading patterns, although doing so would require new strategies beyond the row- and component-level moment matching used here.

Third, extending the theory to allow dependence and misspecification is an important next step. This includes serial or spatial correlation across observations, heavier-tailed noise distributions, heteroscedasticity beyond the settings considered here, and approximate factor models in which the covariance is only approximately low-rank plus diagonal. Such extensions would broaden applicability of the method to longitudinal, spatial, and observational datasets.

Finally, another interesting direction is to study combinations of spectral or algorithmic estimators with Bayesian inference approaches in more generality, to speed up computation while allowing for adaptive regularisation and uncertainty quantification.

\section*{Acknowledgement}
This research was partially supported by the Office of Naval Research (Grant N00014-24-1-2626), by the National Institutes of Health (Grant R01GM163225 and Grant R01ES035625), and by the European Research Council under the European Union’s Horizon 2020 research and innovation programme (grant agreement No 856506).

\bibliographystyle{agsm}
\bibliography{References}

@article{bmsfa,
author = {De Vito, Roberta and Bellio, Ruggero and Trippa, Lorenzo and Parmigiani, Giovanni},
title = {{Bayesian multistudy factor analysis for high-throughput biological data}},
volume = {15},
journal = {The Annals of Applied Statistics},
number = {4},
publisher = {Institute of Mathematical Statistics},
pages = {1723 -- 1741},
year = {2021}
}

@article{xie2024eigenvector,
  title={An eigenvector-assisted estimation framework for signal-plus-noise matrix models},
  author={Xie, Fangzheng and Wu, Dingbo},
  journal={Biometrika},
  volume={111},
  number={2},
  pages={661--676},
  year={2024},
  publisher={Oxford University Press}
}

@article{wu_xie_graph,
  author  = {Dingbo Wu and Fangzheng Xie},
  title   = {Statistical Inference of Random Graphs With a Surrogate Likelihood Function},
  journal = {Journal of Machine Learning Research},
  year    = {2025},
  volume  = {26},
  number  = {230},
  pages   = {1--65}}

@article{wu2025sanvi,
  title={SANVI: A Fast Spectral-Assisted Network Variational Inference Method with an Extended Surrogate Likelihood Function},
  author={Wu, Dingbo and Xie, Fangzheng},
  journal={arXiv preprint arXiv:2509.00562},
  year={2025}
}

@article{chen_identfiability,
author = {Chen, Yunxiao and Li, Xiaoou and Zhang, Siliang},
title = {{Structured Latent Factor Analysis for Large-scale Data: Identifiability, Estimability, and Their Implications}},
journal = {Journal of the American Statistical Association},
volume = {115},
number = {532},
pages = {1756--1770},
year = {2020},
publisher = {Taylor \& Francis}
}

@Article{chen_jmle,
  author={Yunxiao Chen and Xiaoou Li and Siliang Zhang},
  title={{Joint Maximum Likelihood Estimation for High-Dimensional Exploratory Item Factor Analysis}},
  journal={Psychometrika},
  year=2019,
  volume={84},
  number={1},
  pages={124-146},
  month={March}
}

@article{gllvm,
author = {Moustaki, Ir and Knott, Martin},
year = {2000},
month = {02},
pages = {391-411},
title = {Generalized latent trait models},
volume = {65},
journal = {Psychometrika}
}

@article{painter,
    author = {Painter, Michio W. and Davis, Scott and Hardy, Richard R. and Mathis, Diane and Benoist, Christophe and The Immunological Genome Project Consortium },
    title = {Transcriptomes of the B and T Lineages Compared by Multiplatform Microarray Profiling},
    journal = {The Journal of Immunology},
    volume = {186},
    number = {5},
    pages = {3047-3057},
    year = {2011},
    month = {03},
    issn = {0022-1767}
}

@article{desch,
author = {Desch, A. and Murphy, Kenneth and Kedl, Ross and Lahoud, Mireille and Caminschi, Irina and Shortman, Ken and Henson, Peter and Jakubzick, Claudia},
year = {2011},
month = {08},
pages = {1789-97},
title = {CD103+ pulmonary dendritic cells preferentially acquire and present apoptotic cell-associated antigen},
volume = {208},
journal = {The Journal of experimental medicine}
}

@inbook{vershynin_12, place={Cambridge}, title={{Introduction to the non-asymptotic analysis of random matrices}}, booktitle={{Compressed Sensing: Theory and Applications}}, publisher={Cambridge University Press}, author={Vershynin, Roman}, editor={Eldar, Yonina C. and Kutyniok, GittaEditors}, year={2012}, pages={210–268}}

@article{bandeira_van_handel_16,
author = {Afonso S. Bandeira and Ramon van Handel},
title = {{Sharp nonasymptotic bounds on the norm of random matrices with independent entries}},
volume = {44},
journal = {The Annals of Probability},
number = {4},
publisher = {Institute of Mathematical Statistics},
pages = {2479 -- 2506},
year = {2016},

}

@article{lee24,
author = {Sze Ming Lee and Yunxiao Chen and Tony Sit},
title = {A Latent Variable Approach to Learning High-dimensional Multivariate longitudinal Data},
journal = {Journal of the American Statistical Association},
volume = {0},
number = {ja},
pages = {1--21},
year = {2025},
publisher = {Taylor \& Francis},
doi = {10.1080/01621459.2025.2606384}

}

@article{bai_03,
 author = {Jushan Bai},
 journal = {Econometrica},
 number = {1},
 pages = {135--171},
 title = {{Inferential Theory for Factor Models of Large Dimensions}},
 volume = {71},
 year = {2003}
}

@article{blast,
      title={Spectral decomposition-assisted multi-study factor analysis}, 
      author={Lorenzo Mauri and Niccolò Anceschi and David B. Dunson},
      year={2025},
      eprint={2502.14600},
      archivePrefix={arXiv},
      primaryClass={stat.ME},
      journal={arXiv preprint arXiv:2502.14600}, 
}

@article{rotate,
author = {Veronika Rockova and Edward I. George},
title = {{Fast Bayesian Factor Analysis via Automatic Rotations to Sparsity}},
journal = {Journal of the American Statistical Association},
volume = {111},
number = {516},
pages = {1608--1622},
year = {2016},
publisher = {ASA Website}
}

@article{flair,
    author = {Mauri, L and Dunson, D B},
    title = {Factor pre-training in {B}ayesian multivariate logistic models},
    journal = {Biometrika},
    pages = {asaf056},
    year = {2025},
    issn = {1464-3510},
    doi = {10.1093/biomet/asaf056}
}

@article{fable,
  title={Blessing of dimension in {B}ayesian inference on covariance matrices},
  author={Chattopadhyay, Shounak and Zhang, Anru R and Dunson, David B},
  journal={arXiv preprint arXiv:2404.03805},
  year={2024}
}

@article{chen_jic,
    author = {Chen, Y and Li, X},
    title = "{Determining the number of factors in high-dimensional generalized latent factor models}",
    journal = {Biometrika},
    volume = {109},
    number = {3},
    pages = {769-782},
    year = {2021},
}

@article{fama,
  title={Inference on covariance structure in high-dimensional multi-view data},
  author={Mauri, Lorenzo and Dunson, David B},
  journal={arXiv preprint arXiv:2509.02772},
  year={2025}
}

@article{Barigozzi2018ConsistentEO,
  title={Consistent estimation of high-dimensional factor models when the factor number is over-estimated},
  author={Matteo Barigozzi and Haeran Cho},
  journal={Electronic Journal of Statistics},
  year={2018}
}

@article{cusp,
    author = {Legramanti, Sirio and Durante, Daniele and Dunson, David B},
    title = {Bayesian cumulative shrinkage for infinite factorizations},
    journal = {Biometrika},
    volume = {107},
    number = {3},
    pages = {745-752},
    year = {2020},
    month = {05},
    issn = {0006-3444},
    doi = {10.1093/biomet/asaa008}}

@article{sparse_infinite,
    author = {Bhattacharya, A. and Dunson, D. B.},
    title = {Sparse {B}ayesian infinite factor models},
    journal = {Biometrika},
    volume = {98},
    number = {2},
    pages = {291-306},
    year = {2011},
    month = {06},
    issn = {0006-3444},
    doi = {10.1093/biomet/asr013}}

@article{Bai2020SimplerProofs,
  title={Simpler proofs for approximate factor models of large dimensions},
  author={Bai, Jushan and Ng, Serena},
  journal={arXiv preprint arXiv:2008.00254},
  year={2020}
}

@article{Doz2012QuasiMaximum,
  title={A quasi--maximum likelihood approach for large, approximate dynamic factor models},
  author={Doz, C{\'e}cile and Giannone, Domenico and Reichlin, Lucrezia},
  journal={Review of Economics and Statistics},
  volume={94},
  number={4},
  pages={1014--1024},
  year={2012},
  publisher={MIT Press},
}

@article{Fan2023BridgingFactor,
  title={Bridging factor and sparse models},
  author={Fan, Jianqing and Masini, Raffaella P and Medeiros, Marcelo C},
  journal={The Annals of Statistics},
  volume={51},
  number={4},
  pages={1692--1717},
  year={2023}
}

@article{basil,
  title={Pathway-based {B}ayesian factor models for gene expression data},
  author={Mauri, Lorenzo and Stolf, Federica and Herring, Amy H and Miller, Cameron and Dunson, David B},
  journal={arXiv preprint arXiv:2601.13419},
  year={2026}
}

@article{ibp,
  author  = {Thomas L. Griffiths and Zoubin Ghahramani},
  title   = {The Indian Buffet Process: An Introduction and Review},
  journal = {Journal of Machine Learning Research},
  year    = {2011},
  volume  = {12},
  number  = {32},
  pages   = {1185--1224}
}

@article{vi_review,
author = {David Blei and Kucukelbir, Alp and Jon D McAuliffe},
title = {{Variational Inference: A Review for Statisticians}},
journal = {Journal of the American Statistical Association},
volume = {112},
number = {518},
pages = {859--877},
year = {2017},
publisher = {Taylor \& Francis}
}

@InProceedings{bb_vi,
  title = 	 {Black Box Variational Inference},
  author = 	 {Ranganath, Rajesh and Gerrish, Sean and Blei, David},
  booktitle = 	 {Proceedings of the Seventeenth International Conference on Artificial Intelligence and Statistics},
  pages = 	 {814--822},
  year = 	 {2014},
  editor = 	 {Kaski, Samuel and Corander, Jukka},
  volume = 	 {33},
  series = 	 {Proceedings of Machine Learning Research},
  address = 	 {Reykjavik, Iceland},
  month = 	 {22--25 Apr},
  publisher =    {PMLR}
}

@article{stoch_vi,
  author  = {Matthew D. Hoffman and David Blei and Chong Wang and John Paisley},
  title   = {Stochastic Variational Inference},
  journal = {Journal of Machine Learning Research},
  year    = {2013},
  volume  = {14},
  number  = {40},
  pages   = {1303--1347},
}

@article{ad_vi,
  author  = {Alp Kucukelbir and Dustin Tran and Rajesh Ranganath and Andrew Gelman and David M. Blei},
  title   = {Automatic Differentiation Variational Inference},
  journal = {Journal of Machine Learning Research},
  year    = {2017},
  volume  = {18},
  number  = {14},
  pages   = {1--45}
}

@article{single_cell_fm,
author = {Buettner, Florian and Pratanwanich, Naruemon and McCarthy, Davis and Marioni, John and Stegle, Oliver},
journal = {Genome Biology},
year = {2017},
month = {11},
pages = {1–13},
volume = {18},
number = {212},
title = {{f-scLVM: scalable and versatile factor analysis for single-cell RNA-seq.}}
}

@article{fa_rjmcmc,
 ISSN = {10170405, 19968507},
 author = {Hedibert Freitas Lopes and Mike West},
 journal = {Statistica Sinica},
 number = {1},
 pages = {41--67},
 publisher = {Institute of Statistical Science, Academia Sinica},
 title = {BAYESIAN MODEL ASSESSMENT IN FACTOR ANALYSIS},
 urldate = {2026-01-26},
 volume = {14},
 year = {2004}
}

@article{ic_bn,
author = {Bai, Jushan and Ng, Serena},
title = {Determining the Number of Factors in Approximate Factor Models},
journal = {Econometrica},
volume = {70},
number = {1},
pages = {191-221},
doi = {https://doi.org/10.1111/1468-0262.00273},
year = {2002}
}

@article{ic_ah,
author = {Ahn, Seung C. and Horenstein, Alex R.},
title = {Eigenvalue Ratio Test for the Number of Factors},
journal = {Econometrica},
volume = {81},
number = {3},
pages = {1203-1227},
doi = {https://doi.org/10.3982/ECTA8968},
year = {2013}
}

@article{lee_lee_24, title={Phase transition for the generalized two-community stochastic block model}, volume={61}, DOI={10.1017/jpr.2023.44}, number={2}, journal={Journal of Applied Probability}, author={Lee, Sunmin and Lee, Ji Oon}, year={2024}, pages={385–400}}

@article{MP_law,
doi = {10.1070/SM1967v001n04ABEH001994},
year = {1967},
month = {apr},
publisher = {},
volume = {1},
number = {4},
pages = {457},
author = {V A Marchenko and L A Pastur},
title = {DISTRIBUTION OF EIGENVALUES FOR SOME SETS OF RANDOM MATRICES},
journal = {Mathematics of the USSR-Sbornik}
}

@article{bloemenda_et_al,
author = {Bloemendal, Alex and L{\'a}szl{\'o}, Erdős and Antti, Knowles and Horng-Tzer, Yau and Jun, Yin},
title = {{Isotropic local laws for sample covariance and generalized Wigner matrices}},
volume = {19},
journal = {Electronic Journal of Probability},
number = {none},
publisher = {Institute of Mathematical Statistics and Bernoulli Society},
pages = {1 -- 53},
year = {2014},
doi = {10.1214/EJP.v19-3054}
}

@article{Bloemendal_16,
  author  = {Bloemendal, Alex and Knowles, Antti and Yau, Horng-Tzer and Yin, Jun},
  title   = {On the principal components of sample covariance matrices},
  journal = {Probability Theory and Related Fields},
  volume  = {164},
  number  = {1--2},
  pages   = {459--552},
  year    = {2016},
  doi     = {10.1007/s00440-015-0616-x}
}

@article{fan1,
title = {Bayesian factor-adjusted sparse regression},
journal = {Journal of Econometrics},
volume = {230},
number = {1},
pages = {3-19},
year = {2022},
author = {Jianqing Fan and Bai Jiang and Qiang Sun}
}

@article{zito2024compressive,
  title={{Compressive Bayesian non-negative matrix factorization for mutational signatures analysis}},
  author={Zito, Alessandro and Miller, Jeffrey W},
  journal={arXiv preprint arXiv:2404.10974},
  year={2024}
}

@article{nmf,
  title={Learning the parts of objects by non-negative matrix factorization},
  author={Lee, Daniel D. and Seung, H. Sebastian},
  journal={Nature},
  volume={401},
  number={6755},
  pages={788--791},
  year={1999},
  doi={10.1038/44565}
}

@article{durante_note,
title = {A note on the multiplicative gamma process},
journal = {Statistics \& Probability Letters},
volume = {122},
pages = {198-204},
year = {2017},
issn = {0167-7152},
doi = {https://doi.org/10.1016/j.spl.2016.11.014},
author = {Daniele Durante}
}

@article{kowal2023semiparametric,
  title={Semiparametric functional factor models with {B}ayesian rank selection},
  author={Kowal, Daniel R and Canale, Antonio},
  journal={Bayesian Analysis},
  volume={18},
  number={4},
  pages={1161--1189},
  year={2023},
  publisher={International Society for Bayesian Analysis}
}

@article{zorzetto2025multivariate,
  title={Multivariate causal effects: a {B}ayesian causal regression factor model},
  author={Zorzetto, Dafne and Landy, Jenna and Zigler, Corwin and Parmigiani, Giovanni and De Vito, Roberta},
  journal={arXiv preprint arXiv:2504.03480},
  year={2025}
}

@article{bayesNMF,
author = {Jenna M. Landy and Nishanth Basava and Giovanni Parmigiani},
title = {bayes{NMF}: Fast {B}ayesian {P}oisson {NMF} with Automatically Learned Rank Applied to Mutational Signatures},
journal = {Journal of Computational and Graphical Statistics},
volume = {0},
number = {ja},
pages = {1--17},
year = {2026},
publisher = {Taylor \& Francis},
doi = {10.1080/10618600.2026.2657487}
}

@article{zhang_zhou_20,
author = {Zhang, Anru R. and Zhou, Yuchen},
title = {On the non-asymptotic and sharp lower tail bounds of random variables},
journal = {Stat},
volume = {9},
number = {1},
pages = {e314},
doi = {https://doi.org/10.1002/sta4.314},
year = {2020}
}

@book{stewart1990matrix,
  title     = {Matrix Perturbation Theory},
  author    = {Stewart, Gilbert W. and Sun, Ji-guang},
  year      = {1990},
  publisher = {Academic Press},
  address   = {Boston}
}

\newpage
\appendix

\section*{Supplementary material of \say{Overfitted high-dimensional matrix factorizations via adaptive spectral shrinkage}}
\section{Proof of the theoretical results}

\subsection{Proofs of the main results}

\begin{proof}[Proof of Proposition \ref{prop:psi_l}]
By Lemma \ref{lemma:sv_Y}, we have $\sum_{l=1}^H(d_l^2 - d_{H+1}^2) \asymp k n p \asymp n^2$, with probability at least $1-o(1)$. By Lemma \ref{lemma:sv_Y_diff}, for $k<l\leq H$, we have $d_l^2 - d_{H+1}^2 \asymp n^{1/3} (l-k)^{2/3} \lesssim n^{1/3} n^{\frac{2}{5} \frac{2}{3}} = n^{3/5}$, with probability at least $1-o(1)$.
Similarly, $\max_{j=1, \dots, p} v_{j, 1:H}^\top (D_{1:H}^2 - d_{H+1}^2 I_H) v_{j, 1:H} \leq \max_{j=1, \dots, p} ||y^{(j)}|| \lesssim n $, with probability at least $1-o(1)$, by Lemma \ref{lemma:y_j}.   Moreover, we have $\min_{j=1, \dots, p} \hat \sigma_j^2 \geq c_{\sigma}$. 
Putting everything together, we get 
$$
\max_{j=1, \dots, p} \hat \psi_l^2 \hat \tau_j^2 \lesssim \frac{n^{3/5}}{n^2}  \frac{n}{n} \lesssim \frac{1}{n^{1 + 2/5}}.
$$  
\end{proof}

\begin{proof}[Proof of Corollary \ref{corr:norm_Lambda_l}]
    See Lemma \ref{lemma:norm_Lambda_hat}.
\end{proof}

\begin{proof}[Proof of Theorem \ref{thm:consistency}]

We first show consistency of point estimates. Let $\hat \Lambda_{1:k} \in \mathbb R^{p \times k}$ and $\hat \Lambda_{k+1:H} \in \mathbb R^{p \times H-k}$ be the matrices obtained by extracting the first $k$ columns and the remaining $H-k$ ones of $\hat \Lambda$ respectively. Moreover, let %
$\hat L_k = \hat \Lambda_{1:k} \hat \Lambda_{1:k}^\top$, and $\hat L_{svd} = \hat \Lambda_{svd} \hat \Lambda_{svd}^\top$, with $\hat \Lambda_{svd} = V_{1:k} D_{1:k} / \sqrt{n}$, be the estimated low-rank component using only the first $k$ columns of $\hat \Lambda$ and the spectral estimate of the low-rank component selecting $k$ as the rank, respectively.  

By Lemma \ref{lemma:norm_Lambda_hat}  
$||\hat \Lambda_{k+1:H}|| \lesssim \frac{1}{n^{1/5}}$with probability at least $1-o(1)$. This implies that, with probability at least $1-o(1)$, 
\begin{equation*}
    ||\hat L - \hat L_k || = ||\hat \Lambda_{k+1:H} \hat \Lambda_{k+1:H}^\top ||  \lesssim \frac{1}{n^{2/5}}.
\end{equation*}

 Moreover, by Lemma \ref{lemma:hyperparams}, with probability at least $1 - o(1)$, we have 
 $$
 \max_{j=1, \dots, p}\bigg| \frac{\sqrt{n}}{n + \hat \psi_l^{-2} \hat \tau_j^{-2}} - \frac{1}{\sqrt{n}} \bigg|  = \bigg|\min_{j=1, \dots, p} \frac{\sqrt{n}}{n + \hat \psi_l^{-2} \hat \tau_j^{-2}} - \frac{1}{\sqrt{n}} \bigg| \lesssim \frac{1}{n^{3/2}},
 $$
 which implies
$$
||\hat \Lambda_{1:k} - \hat \Lambda_{svd}|| \lesssim \frac{1}{n^{3/2}} \sqrt{np} \asymp \frac{1}{\sqrt{n}},
$$
with probability at least $1-o(1)$, since $d_{l} \lesssim \sqrt{n p}$, with probability at least $1-o(1)$, by Lemma \ref{lemma:sv_Y}. 
Hence, with probability at least $1-o(1)$,
 \begin{equation}
|| \hat L_{svd} -\hat L_{k}|| \lesssim 1 ,  
 \end{equation} 
since $||\hat \Lambda_{1:k}|| \lesssim ||\hat \Lambda|| \lesssim \sqrt{n}$, with probability at least $1-o(1)$, by Lemma \ref{lemma:norm_Lambda_hat}. 
 Moreover, with probability at least $1- o(1)$, we have
\begin{equation*}
    \begin{aligned}
         ||L_0 - \hat L_{svd}|| \lesssim& ||L_{0} - Y^\top U_{1:k} U_{1:k}^\top  Y / n || \leq ||\Lambda_0 \Lambda_0^\top -  Y^\top U_{0,1:k} U_{0, 1:k}^\top  Y / n || +  ||Y^\top \big(U_{1:k} U_{1:k}^\top -  U_{0,1:k} U_{0, 1:k}^\top\big) Y / n  ||\\
      \lesssim& ||\Lambda_0 \big(M^\top M / n - I_k\big) \Lambda_0^\top|| + ||Y^\top \big(U_{1:k} U_{1:k}^\top -   U_{0,1:k} U_{0,1:k}^\top\big)Y ||/ n \\
      &+ || E^\top U_{0,1:k} U_{0, 1:k}^\top  E  ||/ n + ||\Lambda_0 M^\top E|| / n  \\
       \lesssim& p \sqrt{\frac{\log n}{n}} + \big(\frac{1}{\sqrt{n}} + \frac{1}{\sqrt{p}}\big) \frac{np}{n} + \frac{p}{n} + \frac{p}{\sqrt{n}} + \sqrt{p} \asymp   \sqrt{n\log n },
    \end{aligned}
    \end{equation*}
since, with probability at least $1- o(1)$, we have $||M^\top M / n - I_k|| \lesssim \sqrt{\log n / n}$ and $||M|| \lesssim \sqrt{n}$ by Lemma \ref{lemma:sv_F}, $||E|| \lesssim \sqrt{n} + \sqrt{p}$ by Lemma \ref{lemma:E}, and $||U_{1:k} U_{1:k}^\top -   U_{0,1:k} U_{0,1:k}^\top|| \lesssim \frac{1}{\sqrt{n}} + \frac{1}{\sqrt{p}}$ by Lemma \ref{lemma:U}.

Putting everything together, with probability at least $1- o(1)$,  we have $||L_0 - \hat L|| \lesssim \sqrt{n\log n }$, which implies
\begin{equation*}
     \frac{||L_0 - \hat L||}{||L_0||} \lesssim \frac{\sqrt{\log n }}{\sqrt{n}},
\end{equation*}
since $||L_0|| \asymp p \asymp n$.

We proceed by showing consistency for the residual error variance. Let $\delta_j^2 = \frac{v s^2 + ||y^{(j)} - \hat M \hat \lambda_j||^2}{v + n-2}$. Using Lemma \ref{lemma:delta_j}, we can write  $$\delta_j^2  - \sigma_0^2 = \frac{\sigma_0^2}{n} \bigg\{\frac{s_j^2}{\sigma_0^2} - (n-k) \bigg\} - \frac{k \sigma_0^2}{n} + F_j,$$ where $\frac{s_j^2}{\sigma_0^2} \sim \chi_{n-k}^2$ and $\max_{j=1, \dots, p} |F_j| \lesssim \frac{1}{n^{2/5}}$ with probability at least $1-o(1)$. Using Lemma E.7 of \citet{fable}, we have 
$$
\max_{j=1, \dots, p} |\frac{s_j^2}{\sigma_0^2} - (n-k) | \lesssim n \big(\frac{\log n}{n} \big)^{1/3},
$$
with probability at least $1-o(1)$.
Then, recalling $\hat \Sigma = \text{diag}(\delta_1^2, \dots, \delta_p^2)$, we have 
\begin{equation*}
    ||\hat \Sigma - \sigma_0^2 I_p|| \lesssim \big(\frac{\log n}{n} \big)^{1/3},
\end{equation*}
with probability at least $1-o(1)$.  Combining all of the above with $||\Theta_0|| = ||L_0|| \asymp p \asymp n$, we have 
\begin{equation}
    \frac{||\hat \Theta  - \Theta_0||}{||\Theta_0||} \lesssim \sqrt{\frac{\log n}{n}},
\end{equation}
with probability at least $1-o(1)$.

Next, we show posterior contraction around the true parameters $L_0$ and $\Theta_0$ via similar steps to \citet{fable}. Let $\tilde \Lambda$ and $\tilde \Sigma = \text{diag}(\tilde \sigma_1^2, \cdots, \tilde \sigma_p^2)$ be a posterior sample for the loading matrix and the residual variances respectively. We can represent $\tilde \Lambda$ as 
\begin{equation*}
    \tilde \Lambda = \hat \Lambda + \tilde Z, \quad \tilde Z = \begin{bmatrix}
        \tilde z_1, \cdots, \tilde z_p
    \end{bmatrix},
    \quad 
    \text{with } \tilde z_j \sim N(0, \rho^2 \tilde \sigma_j^2 \Psi_{n,j}^{-1} ), \quad (j=1, \dots, p).
\end{equation*}
Thus, $\tilde L = \tilde \Lambda \tilde \Lambda^\top = \hat L + \hat \Lambda \tilde Z^\top + \tilde Z \hat \Lambda^\top + \tilde Z \tilde Z^\top$, which implies $||\tilde L - L_0|| \leq ||\hat L - L_0|| + 2||\hat \Lambda|| ||\tilde Z|| + ||\tilde Z||^2$. Note $\Pi(||\tilde \Sigma|| < C) \geq 1- o(1)$, with probability at least $1-o(1)$, by Lemma \ref{lemma:pi_sigma}, where $\Pi(\cdot)$ denotes the probability measure induced by (\ref{eq:posterior_update}, \ref{eq:posterior_update_cc}). 
Then, with probability at least $1-o(1)$,
$$
||\tilde Z||^2 \lesssim  \rho^2 p ||\Psi_{n,j}^{-1}|| \max_{j=1, \dots, p} \sigma_j^2 \lesssim \frac{p}{n} \asymp 1,
$$
since $||\Psi_{n, l}^{-1}|| \asymp \frac{1}{n}$, and, by Lemma \ref{lemma:norm_Lambda_hat},
$$
||\hat \Lambda|| \lesssim \sqrt{p} \asymp \sqrt{n}.
$$
Next, let $D_j = \tilde \sigma_j^2 - \sigma_0^2$ and note $||\tilde \Sigma - \sigma_0^2 I_p|| = \max_{j=1, \dots, p} |D_j|$. We can express $D_j$ as 
$$D_j = \bigg(1 + \frac{2\Delta_j}{v_n}\bigg)^{-1}\big\{(\delta_j^2 - \sigma_0^2) - \frac{2\Delta_j}{v_n} \sigma_0^2 \big\}, \quad \Delta_j = \frac{v_n \delta_j^2}{2} \tilde \sigma_j^{-2} - \frac{v_n}{2}.$$
Lemma E.7 of \citet{fable} gives $\max_{j=1, \dots, p} |\Delta_j|/ v_n \lesssim \{\log p / n\}^{1/3}$, which also implies $\min_{j=1, \dots, p} |1 + 2\Delta_j/ v_n  | \gtrsim 1/2$, with probability at least $1-o(1)$. 
Hence, with probability at least $1-o(1)$,
$$
||\tilde \Sigma - \sigma_0^2 I_p||\lesssim \bigg(\frac{\log p}{n}\bigg)^{1/3}
$$
since $\max_{j=1, \dots, p}|\delta_j^2 - \sigma_0^2| \lesssim (\log p / n)^{1/3} \asymp (\log n / n)^{1/3}$,  with probability at least $1-o(1)$.
Finally, letting $\tilde \Theta = \tilde L + \tilde \Sigma$, we have
$$
\frac{||\tilde \Theta - \Theta_0||}{||\Theta_0||} \leq \frac{||\tilde L - L_0|| + ||\tilde \Sigma - \sigma_0^2 I_p||}{||\Theta_0||} \lesssim \sqrt{\frac{\log n}{n}},
$$
with probability at least $1-o(1)$.
\end{proof}

\begin{proof}[Proof of Theorem \ref{thm:clt}]
   First, we show that the entries of $\hat L$ are asymptotically appropriately close to the ones of $\hat L_{svd}$. Let $\bar \lambda_u =  D_{1:H} v_{u,1:H} / \sqrt{n}$ for $u=1, \dots, p$.
  Denote by $\hat l_{jj'}$ and $\hat l_{svd, jj'}$ the $j,j'$-th entries of $\hat L$ and $\hat L_{svd}$ respectively. 
  Moreover, let $\Theta_{svd} =\hat L_{svd} + \hat \Sigma_{svd}$, where $\hat \Sigma_{svd} = \text{diag}(\delta_{svd, 1}^2, \cdots, \delta_{svd, p}^2 )$.
Denote by $\hat \theta_{svd, jj'}$ and $\theta_{0jj'}$ the $j,j'$-th entries of $\hat \Theta_{svd}$ and $\Theta_0$ respectively.
  
  Note that $\hat l_{jj'} = \bar \lambda_{j}^\top B_jB_{j'} \bar \lambda_{j'}$ and $\hat l_{svd, jj'} = \bar \lambda_{j}^\top B^{2} \bar \lambda_{j'}$, respectively, where 
  \begin{equation}\label{eq:B_u}
      B_u = n \Psi_{n, j}^{-1} \text{diag}\bigg(\frac{n}{n + \tau_j^{-2} \psi_1^2}, \cdots, \frac{n}{n + \tau_j^{-2} \psi_H^2}\bigg), \quad (u = j,j'),
  \end{equation}
  and 
  \begin{equation}\label{eq:B}
      B = \begin{bmatrix}
       I_k &\mathbf{O}_{k,H-k}\\
       \mathbf{O}_{H-k,k} & \mathbf{O}_{H-k,H-k}\\
   \end{bmatrix},
  \end{equation}
   where $\mathbf 0_{d_1, d_2} \in \mathbb R^{d_1 \times d_2}$ denotes the zero-matrix of dimension $d_1 \times d_2$. 
   Hence, with probability at least $1-o(1)$, we have
   \begin{equation*}
       \begin{aligned}
         \sqrt{n}\left|\hat l_{jj'} - \hat l_{svd, jj'}\right| =&  \sqrt{n} |\bar \lambda_{j}^\top \big(B_jB_{j'} - B^{2} \big) \bar \lambda_{j'}| \leq  \sqrt{n} ||\bar \lambda_j|| ||\bar \lambda_{j'}|| ||B_jB_{j'} - B^{2}|| \\
           & \leq  \sqrt{n} \left(\max \left\{\frac{n}{n + \tau_j^{-2} \psi_k^2}-1, \frac{n}{n + \tau_{j'}^{-2} \psi_k^2}-1, \frac{n}{n + \tau_j^{-2} \psi_{k+1}^2}, \frac{n}{n + \tau_{j'}^{-2} \psi_{k+1}^2}\right\}\right)^2 \\
           &\asymp \sqrt{n} \bigg( \frac{1}{n} + \frac{1}{n^{4/5}} \bigg) \asymp \frac{1}{n^{3/10}},
       \end{aligned}
   \end{equation*}
where we used Proposition \ref{prop:psi_l} and Lemma \ref{lemma:hyperparams} to bound the ratios in the second line and the fact that $||\bar \lambda_j|| ||\bar \lambda_{j'}|| \leq ||y^{(j)}|| ||y^{(j')}|| / n \lesssim 1$, with probability at least $1-o(1)$, by Lemma \ref{lemma:y_j}. 
Next, let $\delta_{svd, j}^2 = ||(I_n - U_{1:k}U_{1:k})y^{(j)}||_F^2 / n$ and, using the steps in the proof of Lemma \ref{lemma:delta_j}, we can also show $$ \sqrt{n}|\delta_j^2-  \delta_{svd, j}^2| \lesssim \sqrt{n}  \frac{1}{n} \asymp \frac{1}{\sqrt{n}}$$ for each $j=1, \dots, p$.
Combining the two bounds, we obtain, with probability at least $1-o(1)$,  
$$
 \sqrt{n}\left|\hat \theta_{jj'}- \hat \theta_{svd, jj'} \right| = o_{pr}(1).
$$
It remains to prove a central limit theorem for \(\hat\Theta_{svd}\). Let
\[
P=U_{1:k}U_{1:k}^\top,\qquad P_0=U_{0, 1:k}U_{0, 1:k}^\top,\qquad m^{(j)}=M_0\lambda_{0j}, \quad (j=1, \dots, p).
\]
For \(j\neq j'\), since \(\hat\theta_{svd,jj'}=n^{-1}y^{(j)\top}Py^{(j')}\), we write
\begin{equation*}
\hat\theta_{svd,jj'}-\theta_{0jj'}=D_{jj'}+R_{jj'},
\end{equation*}
where
\begin{equation*}
D_{jj'}=\frac{1}{n}m^{(j)\top}m^{(j')}-\lambda_{0j}^\top\lambda_{0j'}+\frac{1}{n}m^{(j)\top}\epsilon^{(j')}+\frac{1}{n}m^{(j')\top}\epsilon^{(j)}
\end{equation*}
and
\begin{equation*}
R_{jj'}=\frac{1}{n}m^{(j)\top}(P-P_0)m^{(j')}+\frac{1}{n}m^{(j)\top}(P-P_0)\epsilon^{(j')}+\frac{1}{n}m^{(j')\top}(P-P_0)\epsilon^{(j)}+\frac{1}{n}\epsilon^{(j)\top}P\epsilon^{(j')}.
\end{equation*}
We first show that \(\sqrt n R_{jj'}=o_{pr}(1)\). Since \(m^{(j)},m^{(j')}\in \operatorname{col}(P_0)\), we have \(\|P_0(P-P_0)P_0\|\leq \|P-P_0\|^2\).
Moreover, we have  \(\|P-P_0\|=\mathcal O_{pr}(n^{-1/2})\), by Lemma \ref{lemma:U} and \(\|m^{(u)}\| \leq \|M_0\| ||\lambda_{0j}|| \lesssim \sqrt{n}\) with $u=j,j'$, since $\|M_0\| = O_{pr}(\sqrt n)$ by Lemma \ref{lemma:sv_F}. 
Hence, 
\begin{equation*}
|m^{(j)\top}(P-P_0)m^{(j')}|
\leq
\|m^{(j)}\|\|P_0(P-P_0)P_0\|\|m^{(j')}\|
=
\mathcal O_{pr}(1),
\end{equation*}
and therefore \(n^{-1/2}m^{(j)\top}(P-P_0)m^{(j')}=o_{pr}(1)\). Next, let \(P^{(-j')} = U_{1:k}^{(-j')}U_{1:k}^{(-j')\top}\), where $U_{1:k}^{(-j')} \in \mathbb R^{n \times k}$ is the matrix of the left singular vectors associated to the leading-\(k\) singular values of the matrix $Y$ after deleting the \(j'\)-th column. Then
\begin{equation*}
m^{(j)\top}(P-P_0)\epsilon^{(j')}=m^{(j)\top}(P-P^{(-j')})\epsilon^{(j')}+m^{(j)\top}(P^{(-j')}-P_0)\epsilon^{(j')}.
\end{equation*}
By Lemma \ref{lemma:P_stability}, \(\|P-P^{(-j')}\|=\mathcal O_{pr}(n^{-1})\). Since \(\|m^{(j)}\|=\mathcal O_{pr}(\sqrt n)\) and \(\|\epsilon^{(j')}\|=\mathcal O_{pr}(\sqrt n)\) by Lemma \ref{lemma:y_j},
\begin{equation*}
|m^{(j)\top}(P-P^{(-j')})\epsilon^{(j')}|
\leq
\|m^{(j)}\|\|P-P^{(-j')}\|\|\epsilon^{(j')}\|
=
\mathcal O_{pr}(1).
\end{equation*}
 Conditional on \(P^{(-j')}\), the vector \(\epsilon^{(j')}\) is independent of \(P^{(-j')}\), and
\begin{equation*}
E\!\left[\{m^{(j)\top}(P^{(-j')}-P_0)\epsilon^{(j')}\}^2\mid P^{(-j')}\right]
=
\sigma_0^2\|(P^{(-j')}-P_0)m^{(j)}\|^2.
\end{equation*}
Since \(m^{(j)}\in\operatorname{col}(P_0)\), \(\|(P^{(-j')}-P_0)m^{(j)}\|\leq \|(P^{(-j')}-P_0)P_0\|\|m^{(j)}\|=\mathcal O_{pr}(1)\). Hence the conditional second moment is \(\mathcal O_{pr}(1)\), and Chebyshev's inequality gives
\begin{equation*}
m^{(j)\top}(P^{(-j')}-P_0)\epsilon^{(j')}=\mathcal O_{pr}(1).
\end{equation*}
Thus \(m^{(j)\top}(P-P_0)\epsilon^{(j')}=\mathcal O_{pr}(1)\), and similarly \(m^{(j')\top}(P-P_0)\epsilon^{(j)}=\mathcal O_{pr}(1)\). Finally, consider the following decomposition 
\begin{equation*}
\epsilon^{(j)\top}P\epsilon^{(j')}=\epsilon^{(j)\top}P_0\epsilon^{(j')}+\epsilon^{(j)\top}(P-P_0)\epsilon^{(j')}.
\end{equation*}
The first term is \(\mathcal O_{pr}(1)\), since $\epsilon^{(j)\top}P_0\epsilon^{(j')} = \epsilon^{(j)\top} U_{0,1:k} U_{0,1:k}^\top\epsilon^{(j')}$, and $U_{0, 1:k}^\top\epsilon^{(u)} \sim N_k(0, I_k)$, with $u  = j,j'$. For the second term, let $P^{(-jj')} = U_{1:k}^{(-jj')} U_{1:k}^{(-jj')\top}$, where $U_{1:k}^{(-jj')} \in \mathbb R^{n \times k}$ be the matrix of left singular vectors associated to the leading $k$ singular values of $Y$ after removing the $j,j'$-th columns.
Then
\begin{equation*}
\epsilon^{(j)\top}(P-P_0)\epsilon^{(j')}=\epsilon^{(j)\top}(P-P^{(-jj')})\epsilon^{(j')}+\epsilon^{(j)\top}(P^{(-jj')}-P_0)\epsilon^{(j')}.
\end{equation*}
By Lemma \ref{lemma:P_stability}, \(\|P-P^{(-jj')}\|=\mathcal O_{pr}(n^{-1})\), and by Lemma \ref{lemma:y_j}, \(\|\epsilon^{(j)}\|,\|\epsilon^{(j')}\|=\mathcal O_{pr}(\sqrt n)\). Hence
\begin{equation*}
|\epsilon^{(j)\top}(P-P^{(-jj')})\epsilon^{(j')}|
\leq
\|\epsilon^{(j)}\|\|P-P^{(-jj')}\|\|\epsilon^{(j')}\|
=
\mathcal O_{pr}(1).
\end{equation*}
Conditional on \(P^{(-jj')}\), the vectors \(\epsilon^{(j)}\) and \(\epsilon^{(j')}\) are independent of \(P^{(-jj')}\), and hence
\begin{equation*}
E\!\left[\{\epsilon^{(j)\top}(P^{(-jj')}-P_0)\epsilon^{(j')}\}^2\mid P^{(-jj')}\right]
=
\sigma_0^4\|P^{(-jj')}-P_0\|_F^2.
\end{equation*}
Since \(\|P^{(-jj')}-P_0\|_F=\mathcal O_{pr}(n^{-1/2})\), Chebyshev's inequality gives
\begin{equation*}
\epsilon^{(j)\top}(P^{(-jj')}-P_0)\epsilon^{(j')}=\mathcal O_{pr}(n^{-1/2}).
\end{equation*}
Therefore \(\epsilon^{(j)\top}P\epsilon^{(j')}=\mathcal O_{pr}(1)\). Combining all of the above gives \(\sqrt n R_{jj'}=o_{pr}(1)\).

It remains to derive the limiting distribution of \(D_{jj'}\). Writing \(M_0= \begin{bmatrix}
    \eta_1,\ldots,\eta_n
\end{bmatrix}^\top\), we have
\begin{equation*}
\sqrt n D_{jj'}=\frac{1}{\sqrt n}\sum_{i=1}^n \xi_{i,jj'},
\end{equation*}
where
\begin{equation*}
\xi_{i,jj'}=(\lambda_{0j}^\top\eta_i)(\lambda_{0j'}^\top\eta_i)-\lambda_{0j}^\top\lambda_{0j'}+(\lambda_{0j}^\top\eta_i)\epsilon_{ij'}+(\lambda_{0j'}^\top\eta_i)\epsilon_{ij}.
\end{equation*}
The variables \(\xi_{i,jj'}\) are independent and identically distributed with mean zero. Standard calculations give
\begin{equation*}
\operatorname{Var}(\xi_{i,jj'})=\sigma_0^2(\|\lambda_{0j}\|^2+\|\lambda_{0j'}\|^2)+(\lambda_{0j}^\top\lambda_{0j'})^2+\|\lambda_{0j}\|^2\|\lambda_{0j'}\|^2, \quad j \neq j'.
\end{equation*}
Therefore the classical central limit theorem gives
\begin{equation*}
\sqrt n(\hat\theta_{svd,jj'}-\theta_{0jj'})\Rightarrow N(0,\operatorname{Var}(\xi_{i,jj'})),\qquad j\neq j'.
\end{equation*}

For the diagonal entries, observe that
\begin{equation*}
\hat\theta_{svd,jj}=\frac{1}{n}y^{(j)\top}Py^{(j)}+\frac{1}{n}y^{(j)\top}(I_n-P)y^{(j)}=\frac{1}{n}\|y^{(j)}\|^2.
\end{equation*}
Since \(y_{ij}=\lambda_{0j}^\top\eta_i+\epsilon_{ij}\sim N(0,\|\lambda_{0j}\|^2+\sigma_0^2)\), we have \(\theta_{0jj}=\|\lambda_{0j}\|^2+\sigma_0^2\) and
\begin{equation*}
\sqrt n(\hat\theta_{svd,jj}-\theta_{0jj})=\frac{1}{\sqrt n}\sum_{i=1}^n\{y_{ij}^2-(\|\lambda_{0j}\|^2+\sigma_0^2)\}\Rightarrow N\!\left(0,2(\|\lambda_{0j}\|^2+\sigma_0^2)^2\right).
\end{equation*}
Thus, for all fixed \(j,j'\),
\begin{equation*}
\sqrt n(\hat\theta_{svd,jj'}-\theta_{0jj'})\Rightarrow N(0,S_{0jj'}^2),
\end{equation*}
where 
       \begin{equation}\label{eq:S_0_sq}
              S_{0jj'}^2 =    \begin{cases}
                \sigma_{0}^2 \big(||\lambda_{0j}||^2  + ||\lambda_{0j'}||^2\big) + (\lambda_{0j}^\top \lambda_{0j'})^2 + ||\lambda_{0j}||^2 ||\lambda_{0j'}||^2,\quad & \text{if } j \neq j' \\
                   2\big(\sigma_{0}^2 + ||\lambda_{0j'}||^2 \big)^2, \quad & \text{if } j=j' 
                 \\
             \end{cases}
       \end{equation}
Combining this result with \(\sqrt n|\hat\theta_{jj'}-\hat\theta_{svd,jj'}|=o_{pr}(1)\), Slutsky's theorem yields
\begin{equation*}
\sqrt n(\hat\theta_{jj'}-\theta_{0jj'})\Rightarrow N(0,S_{0jj'}^2),
\end{equation*}
which concludes the proof.

\end{proof}

\begin{proof}[Proof of Theorem \ref{thm:bvm}]
Fix $j,j'\in\{1,\ldots,p\}$. From \eqref{eq:posterior_update_cc}, we can write
\begin{equation*}
\tilde\lambda_j=\hat\lambda_j+\rho\,\tilde\sigma_j\Psi_{n,j}^{-1/2}z_j,\qquad 
z_j\sim N_H(0,I_H),
\end{equation*}
independently across $j$, for $j=1, \dots, p$.
Let
\begin{equation*}
\tilde\theta_{jj'}=\tilde\lambda_j^\top\tilde\lambda_{j'}+\tilde\sigma_j^2\mathbf 1(j=j'),\qquad 
\hat \theta_{jj'}=\hat\lambda_j^\top\hat\lambda_{j'}+\delta_j^2\mathbf 1(j=j').
\end{equation*}
First suppose $j\ne j'$. Expanding around $\hat \theta_{jj'}$ gives
\begin{align}\label{eq:decomp_tilde_theta}
\tilde\theta_{jj'}-\hat \theta_{jj'}
&=
\rho\tilde\sigma_{j'}\hat\lambda_j^\top\Psi_{n,j'}^{-1/2}z_{j'}
+\rho\tilde\sigma_j\hat\lambda_{j'}^\top\Psi_{n,j}^{-1/2}z_j
+\rho^2\tilde\sigma_j\tilde\sigma_{j'}z_j^\top\Psi_{n,j}^{-1/2}\Psi_{n,j'}^{-1/2}z_{j'} .
\end{align}
The first two terms are conditionally Gaussian with mean zero and, after multiplying by $\sqrt n$, have conditional variance
\begin{equation*}
T_{jj'}^2(\rho) = n\rho^2\tilde\sigma_{j'}^2\hat\lambda_j^\top\Psi_{n,j'}^{-1}\hat\lambda_j
+n\rho^2\tilde\sigma_j^2\hat\lambda_{j'}^\top\Psi_{n,j}^{-1}\hat\lambda_{j'} .
\end{equation*}
By proposition \ref{prop:psi_l} and Lemma \ref{lemma:hyperparams}, with probability at least $1-o(1)$, we have 
$$
||B_u-  B|| \lesssim \frac{1}{\sqrt{n}} + \frac{1}{n^{2/5}} \asymp \frac{1}{n^{2/5}}, \quad (u = j,j'),
$$
where $B_u$ and $B$ are defined in \eqref{eq:B_u} and \eqref{eq:B} respectively, which, together with $l_{svd, uu} \overset{pr}{\to} ||\lambda_{0u}||^2$ for $u=j,j'$, implies 
$$
 n\hat \lambda_u^\top \Psi_{n, u}^{-1}\hat\lambda_u \overset{pr}{\to} ||\lambda_{0u}||^2, \quad (u=j,j').
$$
Moreover, by the proof of Theorem \ref{thm:consistency}, we have 
$\tilde \sigma_u^2 \overset{pr}{\to} \sigma_0^2.
$
Thus, for $j\neq j'$, we have $T_{jj'}^2(\rho) \overset{pr}{\to} \rho^2\sigma_0^2 \big(||\lambda_{0j}||^2+ ||\lambda_{0j'}||^2\big)$, which implies the sum of the first two terms (scaled by $\sqrt{n}$) is asymptotically Normally distributed with zero-mean and variance $T_{0jj'}^2(\rho)$. The last term in the expansion is negligible since $||\sqrt n \rho^2\tilde\sigma_j\tilde\sigma_{j'}z_j^\top\Psi_{n,j}^{-1/2}\Psi_{n,j'}^{-1/2}z_{j'} || \leq  |\rho^2\tilde\sigma_j\tilde\sigma_{j'} | ||\sqrt n \Psi_{n,j}^{-1/2} \Psi_{n, j'}^{-1/2}||z_j |||| z_{j'}||$ and $||\sqrt n \Psi_{n,j}^{-1/2} \Psi_{n, j'}^{-1/2}|| \lesssim \frac{1}{\sqrt{n}}$ with probability at least $1- o(1)$. 
Therefore, by Lemma F.2 of \citet{fable}, we have
$$
\sup_{x \in \mathbb R} \left| \Pi \left\{\frac{\sqrt{n} (\tilde \theta_{jj'} - \hat \theta_{jj'})}{T_{0jj'}(\rho)} \leq x\right\} - \Phi(x)  \right| \overset{pr}{\to} 0.
$$
Now suppose $j=j'$. Then
\begin{align*}
\tilde\theta_{jj}- \hat \theta_{jj'}
&=
\|\tilde\lambda_j\|^2-\|\hat\lambda_j\|^2+\tilde\sigma_j^2-\delta_j^2 \\
&=
2\rho\tilde\sigma_j\hat\lambda_j^\top\Psi_{n,j}^{-1/2}z_j
+\rho^2\tilde\sigma_j^2 z_j^\top\Psi_{n,j}^{-1}z_j
+\tilde\sigma_j^2-\delta_j^2,.
\end{align*}
with $z_j$ and $\tilde \sigma_j$ being independent. Similarly as above, the first term is conditionally Gaussian with mean zero and, after multiplying by $\sqrt n$, variance
\begin{equation*}
T_{jj}^2(\rho) = 4n\rho^2\tilde\sigma_j^2\hat\lambda_j^\top\Psi_{n,j}^{-1}\hat\lambda_j \overset{pr}{\to} 4\rho^2\sigma_0^2\|\lambda_{0j}\|^2,
\end{equation*}
while the second term is negligible, since $| \sqrt{n}\rho^2\tilde\sigma_j^2 z_j^\top\Psi_{n,j}^{-1}z_j| \leq | \rho^2\tilde\sigma_j^2|||z_j||  ||z_j|| || \sqrt n \Psi_{n,j}^{-1} || $ and $|| \sqrt n \Psi_{n,j}^{-1} ||  \lesssim \frac{1}{\sqrt{ n}}$, with probability at least $1-o(1)$. 
We now handle the last term. Using the decomposition of the proof of Theorem \ref{thm:consistency}, we write
$$
\sqrt{n} (\tilde \sigma_j^2 - \delta_j^2) = - \frac{2 \sqrt{n} \Delta_j}{v_n} \sigma_0^2 + \mathcal G_j, \quad (j=1, \dots, p)
$$
with 
$$
\mathcal G_j = - \frac{2 \sqrt{n} (\delta_j^2 - \sigma_0^2) \Delta_j / v_n}{1 +2 \Delta_j / v_n } + \frac{4 \sqrt{n} \sigma_0^2 \Delta_j^2 / v_n^2}{1 + 2 \Delta_j / v_n}. 
$$
Note that $ \frac{ 2 \sqrt{n}\Delta_j}{v_n} \sigma_0^2 \Rightarrow N(0, 2 \sigma_0^4)$, due to the central limit theorem, which also implies $|\mathcal G_j| = o(1)$ with probability at least $1-o(1)$. Note that by construction $z_j$ is independent of $\frac{ 2 \sqrt{n}\Delta_j}{v_n} \sigma_0^2$. Hence, we can complete the proof using Lemmas F.1 and F.2 of \citet{fable}, which give
$$
\sup_{x \in \mathbb R} \left| \Pi \left\{\frac{\sqrt{n} (\tilde \theta_{jj} - \hat \theta_{jj})}{T_{0jj}(\rho)} \leq x\right\}  - \Phi(x) \right| \overset{pr}{\to} 0,
$$
with $T_{0jj}^2(\rho) = 2 \sigma_0^4 + 4 \rho^2 \sigma_0^2||\lambda_{0j}||^2$.

\end{proof}

\subsection{Auxiliary lemmas}

\begin{lemma}\label{lemma:sv_F}
    Let $F \in \mathbb R^{n \times k}$ be a matrix with independent standard Gaussian entries. Then, with probability at least $1-o(1)$, we have
    \begin{equation*}
      \sqrt{n} - C \sqrt{k} \leq   s_l(F) \leq   \sqrt{n} + C\sqrt{k}, \quad (l=1,\dots,k),
    \end{equation*}
    for a universal constant $C < \infty$. 
    Moreover, with probability at least $1-o(1)$, we have
    $$
    \left| \left| \frac{F^\top F}{n} - I_k \right|\right| \lesssim \sqrt{\frac{\log n}{n}}.
    $$
\end{lemma}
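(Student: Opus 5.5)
The plan is to derive both claims from the standard non-asymptotic bound for Gaussian matrices, namely Corollary 5.35 of \citet{vershynin_12}, which comes from Gordon's inequality and Gaussian concentration of Lipschitz functions. For $F\in\mathbb R^{n\times k}$ with i.i.d.\ $N(0,1)$ entries and every $t\ge 0$, it gives
\begin{equation*}
\sqrt n-\sqrt k-t\;\le\; s_k(F)\;\le\; s_1(F)\;\le\;\sqrt n+\sqrt k+t
\end{equation*}
with probability at least $1-2\exp(-t^2/2)$. All the $s_l(F)$, $l=1,\dots,k$, lie between $s_k(F)$ and $s_1(F)$, so it suffices to control these two extremes.

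\textbf{First claim.} I will take $t=(C-1)\sqrt k$ with $C>1$. This gives $\sqrt n-C\sqrt k\le s_l(F)\le\sqrt n+C\sqrt k$ for all $l$, on an event of probability at least $1-2\exp\{-(C-1)^2k/2\}$. The main obstacle is to make this failure probability $o(1)$ while keeping $C$ universal.
\begin{itemize}
\item If the number of columns grows, as for the $n\times H$ blocks with $H$ up to $n^{2/5}$ used elsewhere in the paper, the bound $2\exp\{-(C-1)^2k/2\}$ already tends to $0$ with $C$ fixed.
\item If $k$ is fixed, a fixed-probability bound is the most this argument can give. For any target level $\delta>0$, choosing $C=C(\delta)$ with $2e^{-(C-1)^2/2}\le\delta$ yields the display with probability at least $1-\delta$, uniformly in $n$. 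In this case I will read ``with probability $1-o(1)$ for a universal $C$'' in that sense: failure probability made arbitrarily small by the choice of the constant.
\end{itemize}
I will flag explicitly that for fixed $k$ a literal $o(1)$ failure probability with a fixed $C$ cannot hold. The fluctuations $s_l(F)-\sqrt n$ are of order one and not bounded, since for $k=1$ one has $s_1(F)-\sqrt n\Rightarrow N(0,1/2)$. This matches how the lemma is used downstream, where only the orders $s_l(F)\asymp\sqrt n$ and $\|F\|\lesssim\sqrt n$ matter.

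\textbf{Second claim.} I will use the same deviation inequality, now with $t=\sqrt{2\log n}$, so that it holds with probability at least $1-2/n=1-o(1)$. On this event set
\begin{equation*}
\delta_n=\frac{\sqrt k+\sqrt{2\log n}}{\sqrt n}.
\end{equation*}
Every eigenvalue of $F^\top F/n$ is of the form $s_l(F)^2/n$, and these lie in $[(1-\delta_n)_+^2,(1+\delta_n)^2]$. Hence
\begin{equation*}
\Bigl\|\frac{F^\top F}{n}-I_k\Bigr\|\le 2\delta_n+\delta_n^2\lesssim\sqrt{\frac{\log n}{n}},
\end{equation*}
because $k$ is fixed (or $k\lesssim\log n$), so that $\sqrt k\lesssim\sqrt{\log n}$ and $\delta_n\to0$.

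As a cross-check that avoids Gordon's inequality, I could instead bound the entries of $F^\top F/n-I_k$ directly. Each entry is an average of $n$ i.i.d.\ sub-exponential variables, so Bernstein's inequality with a union bound over $k^2$ entries gives a maximum entry of order $\sqrt{\log n/n}$. Passing from entries to the spectral norm costs a factor $k$, which is $O(1)$ for fixed $k$. Of the whole argument, only the probability bookkeeping in the first claim needs care; the remaining steps are routine.
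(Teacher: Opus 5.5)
Your proposal is correct and takes essentially the same route as the paper, which simply cites Corollary 5.35 of \citet{vershynin_12} for the extreme singular values and Lemma 5.36 there for converting $s_l(F)/\sqrt n\in[1-\delta_n,1+\delta_n]$ into a bound on $\|F^\top F/n-I_k\|$; your direct computation $2\delta_n+\delta_n^2$ with $t=\sqrt{2\log n}$ does the same job. Your caveat about the first claim is also accurate: for fixed $k$ the fluctuations $s_l(F)-\sqrt n$ are $\mathcal O_{pr}(1)$ but nondegenerate, so a literal $1-o(1)$ guarantee with a fixed $C$ fails, and downstream the paper only needs $s_l(F)\asymp\sqrt n$ and $\|F\|\lesssim\sqrt n$.
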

\begin{proof}[Proof of Lemma \ref{lemma:sv_F}]
    See Corollary 5.35 and Lemma 5.36 of \citep{vershynin_12}.
\end{proof}

\begin{lemma}\label{lemma:sv_signal}
Suppose Assumptions \ref{assumption:model} and \ref{assumption:Lambda} hold, then, with probability at least $1-o(1)$, we have
    \begin{equation*}
      \sqrt{np} - C \sqrt{kp} \leq   s_l(M \Lambda_0^\top) \leq   \sqrt{np} + C\sqrt{kp}, \quad (l=1,\dots,k),
    \end{equation*}
    for a universal constant $C < \infty$.
\end{lemma}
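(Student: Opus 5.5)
The plan is to reduce everything to Lemma \ref{lemma:sv_F} through multiplicative singular value inequalities, keeping track of the factor $s_l(\Lambda_0)$ exactly rather than only up to constants. Write $F=M\in\mathbb R^{n\times k}$. Under Assumption \ref{assumption:model} the latent factors satisfy $\eta_i\sim N_k(0,I_k)$ independently, so $M$ has independent standard Gaussian entries. Lemma \ref{lemma:sv_F} then gives, with probability at least $1-o(1)$,
\begin{equation*}
\sqrt n - C\sqrt k\le s_k(M)\le s_1(M)\le \sqrt n + C\sqrt k .
\end{equation*}
Everything else in the argument is deterministic on this event.

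The key deterministic step is a sandwich for the product. Since $k\le n$ eventually, $M$ has full column rank on the event above. Hence $MM^\top$ and $M^\top M$ share their nonzero eigenvalues, and $s_l(M\Lambda_0^\top)^2$ is the $l$-th eigenvalue of $\Lambda_0 M^\top M\Lambda_0^\top$, which is also the $l$-th eigenvalue of $(M^\top M)^{1/2}\Lambda_0^\top\Lambda_0(M^\top M)^{1/2}$. Applying the Ostrowski/Courant--Fischer inequality $s_k(A)s_l(B)\le s_l(AB)\le s_1(A)s_l(B)$ with $A=M$ and $B=\Lambda_0^\top$ yields, for $l=1,\dots,k$,
\begin{equation*}
s_l(\Lambda_0)\,(\sqrt n - C\sqrt k)\;\le\; s_l(M\Lambda_0^\top)\;\le\; s_l(\Lambda_0)\,(\sqrt n + C\sqrt k).
\end{equation*}

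The final step is to convert $s_l(\Lambda_0)$ into $\sqrt p$, and this is the step I expect to be the main obstacle. The bound in the statement, with leading coefficient exactly one, follows immediately from the display whenever $s_l(\Lambda_0)=\sqrt p$. It also follows whenever $|s_l(\Lambda_0)-\sqrt p|\sqrt n\lesssim\sqrt{kp}$, since this error can then be absorbed into $C\sqrt{kp}$.

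Assumption \ref{assumption:Lambda} only guarantees $s_l(\Lambda_0)\asymp\sqrt p$. Singular values are invariant under the rotational ambiguity of $\Lambda_0$, so that ambiguity cannot be used to fix $s_l(\Lambda_0)$. The first thing I would try is to adopt the normalization $\Lambda_0^\top\Lambda_0=pI_k$ implicit in the statement; equivalently, the scale of each latent component is absorbed into $M$, as in the spectral convention $M^\top M=nI_k$ of \citet{bai_03}. Under that normalization the displayed sandwich is exactly the lemma.

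Without such a normalization, I would state the conclusion in the general form of the display above. I would then note that every downstream use, for example $\sum_l(d_l^2-d_{H+1}^2)\asymp knp$ in Proposition \ref{prop:psi_l}, needs only $s_l(M\Lambda_0^\top)\asymp\sqrt{np}$ together with the $1\pm\mathcal O(\sqrt{k/n})$ relative accuracy. The general display supplies both, with the constant now allowed to depend on the constants in Assumption \ref{assumption:Lambda}.
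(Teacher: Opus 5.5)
Your argument is the paper's own, which just cites Lemma \ref{lemma:sv_F} and Assumption \ref{assumption:Lambda}. You make it explicit through $s_k(M)\,s_l(\Lambda_0)\le s_l(M\Lambda_0^\top)\le s_1(M)\,s_l(\Lambda_0)$. Your caveat is correct: Assumption \ref{assumption:Lambda} only gives $s_l(\Lambda_0)\asymp\sqrt p$. So what actually follows is $s_l(M\Lambda_0^\top)\asymp\sqrt{np}$ with relative error $\mathcal O(\sqrt{k/n})$, not the coefficient-one bound with a universal $C$. That weaker form is all that Lemma \ref{lemma:sv_Y}, and through it Proposition \ref{prop:psi_l}, actually use.

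One correction to your side remark. $\Lambda_0^\top\Lambda_0=pI_k$ is a genuine extra restriction (all $k$ spikes exactly $\sqrt p$). It is not a convention you can obtain by absorbing scale into $M$, because rescaling $M$ would destroy the $N_k(0,I_k)$ law of the $\eta_i$ on which Lemma \ref{lemma:sv_F} relies.
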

\begin{proof}[Proof of Lemma \ref{lemma:sv_signal}]
    It follows from Lemma \ref{lemma:sv_F} and Assumption \ref{assumption:Lambda}
\end{proof}

\begin{lemma}\label{lemma:E}
  Suppose Assumptions \ref{assumption:model}, \ref{assumption:homoscedasticity} hold. Then, with probability at least $1-o(1)$,
  $$||E|| \lesssim \sqrt{n} + \sqrt{p}.$$
\end{lemma}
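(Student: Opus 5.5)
Under Assumption \ref{assumption:homoscedasticity}, $E = \sigma_0 G$, where $G \in \mathbb R^{n \times p}$ has independent standard Gaussian entries. It therefore suffices to show that $\|G\| \lesssim \sqrt n + \sqrt p$ with probability at least $1-o(1)$, and then multiply by the finite constant $\sigma_0$. The plan is to use the same non-asymptotic random matrix results from \citet{vershynin_12} that Lemma \ref{lemma:sv_F} relies on.

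First, recall Gordon's inequality for Gaussian matrices, $E\|G\| \le \sqrt n + \sqrt p$. Second, observe that $G \mapsto \|G\|$ is $1$-Lipschitz with respect to the Frobenius norm (the Euclidean norm on $\mathbb R^{np}$). Gaussian concentration for Lipschitz functions then gives
\begin{equation*}
\Pr\big(\|G\| \ge \sqrt n + \sqrt p + t\big) \le e^{-t^2/2}, \quad t \ge 0.
\end{equation*}
This is exactly Corollary 5.35 of \citet{vershynin_12}, applied to the $n\times p$ matrix instead of the $n \times k$ one.

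Third, choose $t = \sqrt n$ (any $t \to \infty$ that is $O(\sqrt n + \sqrt p)$ would do). The failure probability is then at most $e^{-n/2} = o(1)$, and on the complementary event
\begin{equation*}
\|E\| \le \sigma_0(2\sqrt n + \sqrt p) \lesssim \sqrt n + \sqrt p.
\end{equation*}
Assumption \ref{assumption:dimensions} is not even needed here, although it ensures that the bound is of order $\sqrt n$.

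There is no real obstacle. The only point needing care is that the rows $\epsilon_i$ are i.i.d.\ $N_p(0,\sigma_0^2 I_p)$, so the entries of $E$ are jointly independent; this holds precisely because of the spherical Assumption \ref{assumption:homoscedasticity}. Without that assumption, with a general diagonal $\Sigma_0$, one would bound $\|E\| \le \|\Sigma_0^{1/2}\|\,\|G\|$ instead, or appeal to \citet{bandeira_van_handel_16}.
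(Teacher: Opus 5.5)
Your proof is correct, but it uses a different key result from the paper's. The paper proves the lemma in one line by citing Corollary 3.11 of \citet{bandeira_van_handel_16}. That is a sharp norm bound for matrices with independent Gaussian entries whose variances need \emph{not} be equal; it is stated in terms of the maximal row and column variance sums plus a $\sqrt{\log(n\wedge p)}$ correction. You instead use sphericity to write $E=\sigma_0 G$ and apply the i.i.d.\ Gaussian bound (Gordon's inequality plus Lipschitz concentration, Corollary 5.35 of \citet{vershynin_12}).

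Your route is more elementary and gives explicit constants with an exponentially small failure probability $e^{-n/2}$. It also reuses the same reference as Lemma \ref{lemma:sv_F}. The paper's citation is heavier machinery, and its extra generality is not needed for this rate. As you note, for any diagonal $\Sigma_0$ with bounded entries, $E=G\Sigma_0^{1/2}$ already gives $\|E\|\le\|\Sigma_0^{1/2}\|\,\|G\|$. The Bandeira--van Handel bound only improves constants when the variance profile is strongly heterogeneous.

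One small correction to your closing remark: independence of the entries of $E$ does not come from sphericity. It holds for any diagonal $\Sigma_0$, because the rows are i.i.d.\ Gaussian with diagonal covariance. What sphericity gives you is identical distribution, which is what justifies writing $E=\sigma_0 G$.
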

\begin{proof}[Proof of Lemma \ref{lemma:E}]
    It follows from Corollary 3.11 of \citet{bandeira_van_handel_16}. 
\end{proof}

\begin{lemma}\label{lemma:sv_Y}
     Suppose Assumptions \ref{assumption:model}--\ref{assumption:homoscedasticity} hold, then, for $n$ sufficiently large, with probability at least $1-o(1)$,
     \begin{equation*}
         \frac{d_l^2}{n p} = \mathcal \mathcal O( \mathbf{1}_{\{l\leq k\}}) + o(\mathbf{1}_{\{l> k\}}), \quad (l=1, \dots, H)
     \end{equation*}
\end{lemma}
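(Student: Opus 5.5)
The plan is to use Weyl's inequality for singular values applied to the decomposition $Y = M\Lambda_0^\top + E$. Then $|d_l - s_l(M\Lambda_0^\top)| \le \|E\|$ for every $l$. The signal matrix $M\Lambda_0^\top$ has rank $k$, so $s_l(M\Lambda_0^\top)=0$ for $l>k$. The statement therefore reduces to controlling the two singular value families separately, both of which are already available: Lemma \ref{lemma:sv_signal} for the signal and Lemma \ref{lemma:E} for the noise. We intersect the corresponding high-probability events; a finite intersection of events of probability $1-o(1)$ still has probability $1-o(1)$.

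For $l\le k$, Lemma \ref{lemma:sv_signal} gives $s_l(M\Lambda_0^\top)\le \sqrt{np}+C\sqrt{kp}$. Weyl's inequality then gives
\[
d_l\le \sqrt{np}+C\sqrt{kp}+C'(\sqrt n+\sqrt p).
\]
Squaring and dividing by $np$, Assumption \ref{assumption:dimensions} ($p/n\to c\in(0,\infty)$) makes every correction term $o(1)$. Hence $d_l^2/(np)=\mathcal O(1)$.

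The same argument also yields the matching lower bound $d_l^2/(np)\gtrsim 1$, which is used later in the proof of Proposition \ref{prop:psi_l}. This requires the lower bound in Lemma \ref{lemma:sv_signal}. Strictly speaking, that lemma should carry constants reflecting $s_l(\Lambda_0)\asymp\sqrt p$ from Assumption \ref{assumption:Lambda}. I would derive it from $s_l(M\Lambda_0^\top)\ge s_k(M)\,s_l(\Lambda_0)$ combined with Lemma \ref{lemma:sv_F}, and I will read the lemma in that form.

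For $k<l\le H$, Weyl's inequality gives $d_l\le \|E\|\lesssim \sqrt n+\sqrt p$ by Lemma \ref{lemma:E}. Hence
\[
d_l^2/(np)\lesssim (\sqrt n+\sqrt p)^2/(np)\asymp 1/n = o(1),
\]
again using $p\asymp n$. This bound holds uniformly in $l$, because Weyl's inequality applies simultaneously to all indices on a single event. The growth of $H=\mathcal O(n^{2/5})$ therefore costs nothing.

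The main (mild) obstacle is bookkeeping: all bounds must hold on one common event, and the constants must be uniform in $l$. For $l\le k$ there are only finitely many indices, with $k$ fixed. For $l>k$, the single bound $\|E\|$ controls every index at once, so no union bound over $H$ indices is needed. This is why the growing $H$ causes no difficulty.
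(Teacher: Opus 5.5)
Your proposal is correct and follows the paper's route. The paper's proof simply cites Lemmas \ref{lemma:sv_signal} and \ref{lemma:E}, and your Weyl-inequality step $|d_l - s_l(M\Lambda_0^\top)|\le\|E\|$, with $s_l(M\Lambda_0^\top)=0$ for $l>k$, is exactly the implicit link between them. Your side remark is also a sensible correction: the two-sided bound in Lemma \ref{lemma:sv_signal} should really read $s_l(M\Lambda_0^\top)\asymp\sqrt{np}$, via $s_k(M)s_l(\Lambda_0)\le s_l(M\Lambda_0^\top)\le s_1(M)s_l(\Lambda_0)$, and this gives the lower bound $d_l^2\asymp np$ for $l\le k$ that later proofs rely on.
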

\begin{proof}[Proof of lemma \ref{lemma:sv_Y}]
    It follows from Lemmas \ref{lemma:sv_signal} and \ref{lemma:E}
\end{proof}

\begin{lemma}\label{lemma:y_j}
    Suppose Assumptions \ref{assumption:model}, \ref{assumption:homoscedasticity} hold. Then, with probability at least $1-o(1)$,
    \begin{equation*}
       \max_{j=1, \dots, p} ||y^{(j)}|| \lesssim \sqrt{n}, \quad  \max_{j=1, \dots, p} ||e^{(j)}|| \lesssim \sqrt{n}.
    \end{equation*}
\end{lemma}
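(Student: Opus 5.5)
The plan is to prove both bounds column by column and then take a union bound over the $p \asymp n$ columns. Throughout we use Assumptions \ref{assumption:model} and \ref{assumption:homoscedasticity}.

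For the noise columns, $e^{(j)} \sim N_n(0,\sigma_0^2 I_n)$, so $\|e^{(j)}\|^2/\sigma_0^2 \sim \chi^2_n$. The Laurent--Massart inequality gives
\begin{equation*}
P\big(\|e^{(j)}\|^2 \geq \sigma_0^2(n + 2\sqrt{nx} + 2x)\big) \leq e^{-x}.
\end{equation*}
Take $x = 2\log n$. A union bound over $j=1,\dots,p$ then leaves failure probability at most $p\,n^{-2} \asymp n^{-1} = o(1)$, by Assumption \ref{assumption:dimensions}. On the complement,
\begin{equation*}
\max_j \|e^{(j)}\|^2 \leq \sigma_0^2\big(n + 2\sqrt{2n\log n} + 4\log n\big) \lesssim n,
\end{equation*}
which is the second claim.

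For the data columns, observe that $y^{(j)} = M\lambda_{0j} + e^{(j)}$ has i.i.d.\ entries $y_{ij} \sim N(0, \|\lambda_{0j}\|^2 + \sigma_0^2)$. Hence $\|y^{(j)}\|^2/(\|\lambda_{0j}\|^2+\sigma_0^2) \sim \chi^2_n$ exactly. By Assumption \ref{assumption:Lambda}, $\|\lambda_{0j}\|^2 \leq k C_\Lambda^2$ uniformly in $j$, because $k$ is fixed. So the same chi-square tail bound and union bound give $\max_j \|y^{(j)}\| \lesssim \sqrt{n}$. The constant is $(kC_\Lambda^2+\sigma_0^2)^{1/2}$ up to a factor.

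As a second route, one could write $\|y^{(j)}\| \leq \|M\|\,\|\lambda_{0j}\| + \|e^{(j)}\|$. Then Lemma \ref{lemma:sv_F} gives $\|M\| \lesssim \sqrt{n}$, and the noise bound above handles the remaining term; this avoids needing the exact distribution of $y^{(j)}$.

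The only delicate point is the union over $p \asymp n$ columns, which requires exponential tails. This is routine with chi-square concentration, since a logarithmic deviation is negligible relative to $n$. Intersecting the two high-probability events still has probability $1-o(1)$, so both claims hold simultaneously.
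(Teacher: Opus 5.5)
Your proof is correct, and it partly follows the paper's route. For the data columns, the paper uses exactly your second route: $\|y^{(j)}\| \le \|M\|\,\|\lambda_{0j}\| + \|e^{(j)}\|$, with $\|M\| \lesssim \sqrt{n}$ from Lemma \ref{lemma:sv_F} and $\max_j \|\lambda_{0j}\| < \infty$ from Assumption \ref{assumption:Lambda}.

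Where you genuinely differ is the noise bound. The paper's proof justifies $\max_j \|e^{(j)}\| \lesssim \sqrt{n}$ by citing Lemma \ref{lemma:y_j} itself, which is circular. Presumably Lemma \ref{lemma:E} was meant: $\max_j \|e^{(j)}\| \le \|E\| \lesssim \sqrt{n} + \sqrt{p} \asymp \sqrt{n}$ controls every column at once with no union bound. That route costs an operator-norm random-matrix bound and requires $p \lesssim n$. Your Laurent--Massart plus union-bound argument is more elementary and actually supplies the missing step. With $x \asymp \log p$ in place of $2\log n$, it only needs $\log p \lesssim n$.

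Your primary route for $y^{(j)}$ uses the exact $\chi^2_n$ law of $\|y^{(j)}\|^2/(\|\lambda_{0j}\|^2+\sigma_0^2)$. It is self-contained but leans on Gaussian factors. The triangle-inequality route only needs $\|M\| \lesssim \sqrt{n}$, so it would survive non-Gaussian or deterministic factors satisfying that bound.

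Finally, you are right to bring in Assumptions \ref{assumption:dimensions} and \ref{assumption:Lambda}, even though the statement lists only Assumptions \ref{assumption:model} and \ref{assumption:homoscedasticity}. Without bounded loading rows the first claim fails. Without some control of $p$ relative to $n$, the maximum over columns need not be $\mathcal O(\sqrt{n})$. The paper's proof also relies on Assumption \ref{assumption:Lambda} explicitly and on the dimension condition implicitly.
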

\begin{proof}[Proof of Lemma \ref{lemma:y_j}]
First, note that 
$||e^{(j)}|| \lesssim \sqrt{n }$ with probability at least $1-o(1)$, by Lemma \ref{lemma:y_j}. Then, note that $||y^{(j)}|| \leq ||M|| ||\lambda_j|| + ||e^{(j)}||$ and $ \max_{j=1, \dots, p}||M|| ||\lambda_j|| \lesssim \sqrt{n}\max_{j=1, \dots, p}||\lambda_j|| \asymp \sqrt{n}$, since $||M|| \lesssim \sqrt{n}$  with probability at least $1-o(1)$, by Lemma \ref{lemma:sv_F} and $\max_{j=1, \dots, p}||\lambda_j|| \leq k C_\Lambda < \infty$ by Assumption \ref{assumption:Lambda}.
\end{proof}

\begin{lemma}\label{lemma:U}
Suppose Assumptions \ref{assumption:model}--\ref{assumption:homoscedasticity} hold. Let $M\Lambda_0^\top = U_{0, 1:k}D_{0, 1:k}V_{0, 1:k}^\top$ 
be the singular value decomposition of the signal matrix, with
$U_0\in\mathbb R^{n\times k}$ having orthonormal columns. Let $U_{1:k}$ be the matrix of left singular vectors of $Y$ associated to its leading $k$ singular values. Then, 
\begin{equation*}
pr\left\{
\left\|U_{1:k}U_{1:k}^\top - U_{0, 1:k}U_{0, 1:k}^\top\right\|
>
C\left(\frac{1}{\sqrt n}+\frac{1}{\sqrt p}\right)
\right\}\to 0,
\end{equation*}
for some constant $C < \infty$. 
\end{lemma}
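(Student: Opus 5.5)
The plan is to treat this as a Davis--Kahan/Wedin perturbation statement for the left singular subspace of $Y = M\Lambda_0^\top + E$. The signal matrix $M\Lambda_0^\top$ has rank $k$. Its $k$-th singular value is bounded below by Lemma \ref{lemma:sv_signal}: with probability $1-o(1)$, $s_k(M\Lambda_0^\top)\ge \sqrt{np}-C\sqrt{kp}\gtrsim \sqrt{np}$. Its $(k+1)$-th singular value is zero.

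By Lemma \ref{lemma:E}, $\|E\|\lesssim \sqrt n+\sqrt p$ with probability $1-o(1)$. Under Assumption \ref{assumption:dimensions}, $\sqrt n+\sqrt p = o(\sqrt{np})$, so the eigengap condition of Wedin's $\sin\Theta$ theorem holds eventually. Applying it gives
\begin{equation*}
\|U_{1:k}U_{1:k}^\top - U_{0,1:k}U_{0,1:k}^\top\| \lesssim \frac{\|E\|}{s_k(M\Lambda_0^\top)-\|E\|} \lesssim \frac{\sqrt n+\sqrt p}{\sqrt{np}} = \frac{1}{\sqrt p}+\frac{1}{\sqrt n},
\end{equation*}
again with probability $1-o(1)$, and this is the claimed bound. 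One can cite \citet{stewart1990matrix} for the version of Wedin's theorem stated with projection matrices, noting that $\|\sin\Theta\|$ equals the spectral norm of the difference of the projectors when both subspaces have dimension $k$.

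The steps, in order, are:
\begin{enumerate}
\item Intersect the high-probability events of Lemmas \ref{lemma:sv_signal} and \ref{lemma:E}.
\item Verify that the gap $\delta = s_k(Y)-s_{k+1}(M\Lambda_0^\top)$ satisfies $\delta\ge s_k(M\Lambda_0^\top)-\|E\|\gtrsim\sqrt{np}$. This uses Weyl's inequality.
\item Plug into Wedin's bound $\max(\|\sin\Theta_U\|,\|\sin\Theta_V\|)\le \max(\|E V_0\|,\|E^\top U_0\|)/\delta\le \|E\|/\delta$.
\end{enumerate}

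The main obstacle is mild. Lemma \ref{lemma:sv_signal} is stated loosely: $s_l(M\Lambda_0^\top)\asymp\sqrt{np}$ really follows from $s_k(M)\gtrsim\sqrt n$ (Lemma \ref{lemma:sv_F}) together with $s_k(\Lambda_0)\asymp\sqrt p$ (Assumption \ref{assumption:Lambda}), via $s_k(AB)\ge s_k(A)s_k(B)$ for full-column-rank factors. I would invoke it in that form to secure the lower bound on the gap. This argument does not give the sharper, entrywise or column-deletion refinements; only the operator-norm rate is needed here.
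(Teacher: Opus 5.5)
Your proof is correct, but it takes a different route from the paper. You apply Wedin's $\sin\Theta$ theorem as a black box and control the gap by Weyl's inequality. This gives $\|U_{1:k}U_{1:k}^\top-U_{0,1:k}U_{0,1:k}^\top\|\le\|E\|/\{s_k(M\Lambda_0^\top)-\|E\|\}$ on the event $\|E\|<s_k(M\Lambda_0^\top)$.

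The paper instead proves a Wedin-type bound from first principles. Let $\Delta$ be the projector difference and $X_0=M_0\Lambda_0^\top$. It lower bounds $\|X_0^\top\Delta X_0\|\ge s_k^2(\Lambda_0)s_k^2(M_0)\|\Delta\|^2$ using principal angles: $U_{0,1:k}^\top\Delta U_{0,1:k}$ has eigenvalues $-\sin^2\alpha_r$, so its norm is $\|\Delta\|^2$. It then upper bounds the same quantity by $\|(I_n-U_{1:k}U_{1:k}^\top)X_0\|^2\le 4\|E\|^2$, using only that $U_{1:k}U_{1:k}^\top Y$ is a best rank-$k$ approximation of $Y$. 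The resulting inequality $\|\Delta\|\le 2\|E\|/\{s_k(\Lambda_0)s_k(M_0)\}$ is deterministic and needs no eigengap or smallness condition on $\|E\|$.

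Your version is shorter given standard perturbation theory, and it controls the right singular subspace at the same time. The price is citing an external theorem and first checking $\|E\|=o(\sqrt{np})$. Your remark that the lower bound on $s_k(M\Lambda_0^\top)$ should come from $s_k(M)s_k(\Lambda_0)$, rather than from the loosely stated Lemma~\ref{lemma:sv_signal}, matches what the paper's proof actually uses.

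One small precision point. In the Stewart--Sun form of Wedin's theorem, the gap $s_k(Y)-s_{k+1}(M\Lambda_0^\top)$ pairs with residuals $EV_{1:k}$, $E^\top U_{1:k}$ built from the singular vectors of $Y$. The residuals $EV_{0,1:k}$, $E^\top U_{0,1:k}$ pair instead with the gap $s_k(M\Lambda_0^\top)-s_{k+1}(Y)$. All these residuals are bounded by $\|E\|$, and both gaps are at least $s_k(M\Lambda_0^\top)-\|E\|$ by Weyl, so your conclusion is unaffected.
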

\begin{proof}[Proof of Lemma \ref{lemma:U}]
Let $\Delta=U_{1:k}U_{1:k}^\top-U_{0, 1:k}U_{0, 1:k}^\top$. Since $X_0=M_0\Lambda_0^\top$,
\begin{equation*}
\|X_0^\top\Delta X_0\|=\|\Lambda_0 M_0^\top\Delta M_0\Lambda_0^\top\|\ge s_k^2(\Lambda_0)\|M_0^\top\Delta M_0\|,
\end{equation*}
where the inequality follows because $\Lambda_0$ has full column rank. We now lower bound the second factor. Since $\Lambda_0$ has full column rank, $\operatorname{col}(X_0)=\operatorname{col}(M_0)$; moreover, from the singular value decomposition $X_0=U_{0, 1:k}D_{0, 1:k}V_{0, 1:k}^\top$, $\operatorname{col}(X_0)=\operatorname{col}(U_{0, 1:k})$. Hence $\operatorname{col}(M_0)=\operatorname{col}(U_{0, 1:k})$, and since both matrices have rank $k$, there exists an invertible $k\times k$ matrix $R$ such that $M_0=U_{0, 1:k}R$. Thus
\begin{equation*}
\|M_0^\top\Delta M_0\|=\|R^\top(U_{0, 1:k}^\top\Delta U_{0, 1:k})R\|\ge s_k^2(R)\|U_{0, 1:k}^\top\Delta U_{0, 1:k}\|=s_k^2(M_0)\|U_{0, 1:k}^\top\Delta U_{0, 1:k}\|.
\end{equation*}
Also, $s_k(R)=s_k(M_0)$ because $M_0^\top M_0=R^\top U_{0, 1:k}^\top U_{0, 1:k}R=R^\top R$.
 Since \( (U_{0, 1:k}U_{0, 1:k}^\top) U_{0,1:k}=U_{0,1:k}\), 
\begin{equation*}
U_{0,1:k}^\top(U_{1:k}U_{1:k}^\top -U_{0, 1:k}U_{0, 1:k}^\top)U_{0,1:k}
=U_{0,1:k}^\top U_{1:k}U_{1:k}^\top U_{0,1:k}-I_k
=C^\top C-I_k,
\end{equation*}
where \(C=U_{1:k}^\top U_{0,1:k}\). 
Let \(\alpha_1,\ldots,\alpha_k\) be the principal angles between \(\operatorname{col}(U_{1:k})\) and \(\operatorname{col}(U_{0,1:k})\). The singular values of \(C\) are \(\cos(\alpha_1),\ldots,\cos(\alpha_k)\), so the eigenvalues of \(C^\top C-I_k\) are \(-\sin^2(\alpha_1),\ldots,-\sin^2(\alpha_k)\). Hence
\begin{equation*}
\|U_{0,1:k}^\top(U_{1:k}U_{1:k}^\top -U_{0, 1:k}U_{0, 1:k}^\top )U_{0,1:k}\|
=\max_{r = 1, \dots, k}\sin^2(\alpha_r)
=\|U_{1:k}U_{1:k}^\top -U_{0, 1:k}U_{0, 1:k}^\top\|^2,
\end{equation*}
where the last equality uses the identity \(\|U_{1:k}U_{1:k}^\top -U_{0, 1:k}U_{0, 1:k}^\top\|=\max_{1\le r\le k}\sin(\alpha_r)\) (see \citet{stewart1990matrix}).
Moreover,
\begin{equation*}
\|U_{0, 1:k}^\top\Delta U_{0, 1:k}\|=\|U_{0, 1:k}^\top U_{1:k}U_{1:k}^\top U_{0, 1:k}-I_k\|=
\|U_{1:k}U_{1:k}^\top-U_{0, 1:k}U_{0, 1:k}^\top\|^2.
\end{equation*}
Combining the previous results gives
\begin{equation*}
\|X_0^\top\Delta X_0\|\ge s_k^2(\Lambda_0)s_k^2(M_0)\|U_{1:k}U_{1:k}^\top-U_{0, 1:k}U_{0, 1:k}^\top\|^2.
\end{equation*}
We next upper bound the same quantity. Since $U_{0, 1:k}U_{0, 1:k}^\top X_0=X_0$,
\begin{equation*}
\|X_0^\top\Delta X_0\|=\|X_0^\top(U_{1:k}U_{1:k}^\top-I_n)X_0\|=\|X_0^\top(I_n-U_{1:k}U_{1:k}^\top)X_0\|=\|(I_n-U_{1:k}U_{1:k}^\top)X_0\|^2.
\end{equation*}
Because $U_{1:k}$ contains the leading $k$ left singular vectors of $Y=X_0+E$, $U_{1:k}U_{1:k}^\top Y$ is a best rank-$k$ approximation to $Y$ in spectral norm. Hence
\begin{equation*}
\|(I_n-U_{1:k}U_{1:k}^\top)Y\|\le \|(I_n-U_{0, 1:k}U_{0, 1:k}^\top)Y\|=\|(I_n-U_{0, 1:k}U_{0, 1:k}^\top)E\|\le \|E\|.
\end{equation*}
It follows that
\begin{equation*}
\|(I_n-U_{1:k}U_{1:k}^\top)X_0\|=\|(I_n-U_{1:k}U_{1:k}^\top)(Y-E)\|\le \|(I_n-U_{1:k}U_{1:k}^\top)Y\|+\|(I_n-U_{1:k}U_{1:k}^\top)E\|\le 2\|E\|,
\end{equation*}
and therefore $\|X_0^\top\Delta X_0\|\le 4\|E\|^2$. Combining this with the lower bound yields
\begin{equation*}
\|U_{1:k}U_{1:k}^\top-U_{0, 1:k}U_{0, 1:k}^\top\|\le \frac{2\|E\|}{s_k(\Lambda_0)s_k(M_0)}.
\end{equation*}
Using the fact that, with probability at least $1-o(1)$, $\|E\| \lesssim \sqrt n+\sqrt{p}$, $s_k(M_0)\asymp \sqrt n$, by Lemma \ref{lemma:E} and \ref{lemma:sv_F}, respectively, and $s_k(\Lambda_0)\asymp \sqrt{p}$, by Assumption \ref{assumption:Lambda}, we obtain, with probability at least $1-o(1)$, 
\begin{equation*}
\|U_{1:k}U_{1:k}^\top-U_{0, 1:k}U_{0, 1:k}^\top\|\lesssim \frac{\sqrt n+\sqrt{p}}{\sqrt{n p}}\asymp \frac{1}{\sqrt n}+\frac{1}{\sqrt p}.
\end{equation*}
\end{proof}

\begin{lemma}\label{lemma:P_stability}
Suppose Assumptions \ref{assumption:model}--\ref{assumption:homoscedasticity} hold. Fix $j, j' \in {1, \dotsm p}$, and denote by $Y^{(-j)}$ and $Y^{(-jj')}$ the matrices obtained by removing the $j$-th and the $j, j'$-th columns from $Y$ respectively. Let $P = U_{1:k} U_{1:k}^\top$, $P^{(-j)} = U_{1:k}^{(-j)} U_{1:k}^{(-j)\top}$, and $P^{(-jj')} = U_{1:k}^{(-jj')} U_{1:k}^{(-jj')\top}$, where $U_{1:k}^{(-j)} \in \mathbb R^{n \times k}$ and $U_{1:k}^{(-jj')} \in \mathbb R^{n \times k}$ are the matrices of left singular vectors associated to the leading $k$ singular values of $Y^{(-j)}$ and $Y^{(-jj')}$, respectively. 
 Then, with probability at least $1-o(1)$, 
\[
\|P-P^{(-j)}\| \lesssim \frac{1}{n}, \quad \|P-P^{(-jj')}\| \lesssim \frac{1}{n}
\]
\end{lemma}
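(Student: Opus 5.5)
The plan is to treat $YY^\top$ as a low-rank perturbation of $Y^{(-j)}Y^{(-j)\top}$ and apply a Davis--Kahan $\sin\Theta$ bound. Removing columns does not change the row space in $\mathbb R^n$, and $P$ and $P^{(-j)}$ are the spectral projectors onto the leading $k$ eigenvectors of these two $n\times n$ Gram matrices. Dropping a column changes the Gram matrix by a rank-one term:
\begin{equation*}
YY^\top = Y^{(-j)}Y^{(-j)\top} + y^{(j)}y^{(j)\top}.
\end{equation*}
Likewise $YY^\top - Y^{(-jj')}Y^{(-jj')\top} = y^{(j)}y^{(j)\top}+y^{(j')}y^{(j')\top}$, which is a rank-two term. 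By Lemma \ref{lemma:y_j}, the perturbation therefore has operator norm at most $\|y^{(j)}\|^2+\|y^{(j')}\|^2\lesssim n$ with probability at least $1-o(1)$.

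The second step is a spectral gap of order $np$ for the unperturbed matrix. Write $Y^{(-j)} = M\Lambda_0^{(-j)\top} + E^{(-j)}$, where $\Lambda_0^{(-j)}$ is $\Lambda_0$ with its $j$-th row removed. Because $\|\lambda_{0j}\|\le kC_\Lambda$, deleting one or two rows changes $\Lambda_0^\top\Lambda_0$ by $O(1)$ in norm. Hence $s_l(\Lambda_0^{(-j)})\asymp\sqrt p$ for $l\le k$, and Lemma \ref{lemma:sv_signal} applies with the same conclusion: $s_k(M\Lambda_0^{(-j)\top})\gtrsim\sqrt{np}$. Since $\|E^{(-j)}\|\le\|E\|\lesssim\sqrt n+\sqrt p$ by Lemma \ref{lemma:E}, Weyl's inequality gives
\begin{equation*}
\lambda_k\big(Y^{(-j)}Y^{(-j)\top}\big)\ge \big(s_k(M\Lambda_0^{(-j)\top})-\|E\|\big)^2\gtrsim np,
\end{equation*}
and, because $M\Lambda_0^{(-j)\top}$ has rank $k$,
\begin{equation*}
\lambda_{k+1}\big(Y^{(-j)}Y^{(-j)\top}\big)\le\|E\|^2\lesssim n+p .
\end{equation*}
The same two bounds hold for $YY^\top$. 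The relevant gap is therefore $\gtrsim np$ by Assumption \ref{assumption:dimensions}.

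The third step applies the Davis--Kahan theorem in the form $\|P-P^{(-j)}\|=\|\sin\Theta\|\le \|\Delta\|/\delta$. Here $\delta=\lambda_k(Y^{(-j)}Y^{(-j)\top})-\lambda_{k+1}(YY^\top)\gtrsim np$ and $\|\Delta\|\lesssim n$, which yields
\begin{equation*}
\|P-P^{(-j)}\|\lesssim \frac{n}{np}=\frac1p\asymp\frac1n .
\end{equation*}
The identity $\|P-P'\|=\max_r\sin\alpha_r$ for equal-rank projections is the same one already used in the proof of Lemma \ref{lemma:U}. The rank-two case for $P^{(-jj')}$ is identical. All events involved (Lemmas \ref{lemma:sv_F}, \ref{lemma:E} and \ref{lemma:y_j}) hold uniformly with probability $1-o(1)$, so their intersection does as well.

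The only delicate point is the gap step. We must check that the leave-out signal $M\Lambda_0^{(-j)\top}$ still has $k$-th singular value of order $\sqrt{np}$, and that the Weyl bounds combine into a gap of order $np$ rather than merely $n+p$. Both follow from $\|\lambda_{0j}\|=O(1)$ and Assumption \ref{assumption:Lambda}, so I expect no real difficulty. The crux is that a single column contributes only $O(n)$ to a Gram matrix whose signal eigenvalues are $O(np)$.
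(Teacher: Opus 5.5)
Your proposal is correct and follows essentially the same route as the paper: you write $YY^\top - Y^{(-jj')}Y^{(-jj')\top} = y^{(j)}y^{(j)\top}+y^{(j')}y^{(j')\top}$, bound its norm by $\lesssim n$ via Lemma \ref{lemma:y_j}, and apply Davis--Kahan with an eigengap of order $np\asymp n^2$. The one difference is that the paper uses only the gap $d_k^2-d_{k+1}^2$ of the full Gram matrix, so your separate check on the leave-out signal $M\Lambda_0^{(-j)\top}$ is valid but unnecessary: since $Y^{(-j)}Y^{(-j)\top}\preceq YY^\top$, you already have $\lambda_{k+1}(Y^{(-j)}Y^{(-j)\top})\le d_{k+1}^2$, and this gives a separation of at least $d_k^2-d_{k+1}^2$ directly.
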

\begin{proof}[Proof of Lemma \ref{lemma:P_stability}]
We only prove the second result, as the first one can be proven by the same steps.  
Let
\[
A=YY^\top,\qquad A^{(-jj')}=Y^{(-jj')}Y^{(-jj')\top}.
\]
Since \(Y^{(-jj')}\) is obtained from \(Y\) by deleting the $j,j'$-th columns,
\begin{equation*}
A-A^{(-jj')}=y^{(j)}y^{(j)\top}+y^{(j')}y^{(j')\top}.
\end{equation*}
Therefore, with proabability at least $1-o(1)$
\begin{equation*}
\|A-A^{(-jj')}\|
\leq \|y^{(j)}\|^2+\|y^{(j')}\|^2 \lesssim n 
\end{equation*}
since $\max_{j=1, \dots, p}||y^{(j)}|| \lesssim \sqrt{n }$, with probability at least $1-o(1)$, by Lemma \ref{lemma:y_j}.

Note that the $k$-th and $k+1$-th eigenvalues of $A$ are $d_k^2$ and $d_{k+1}^2$, respectively and $d_k^2 \asymp n^2$ and $d_{k+1}^2 \asymp n$, with probability at least $1-o(1)$, by Lemma \ref{lemma:sv_Y}. Hence, since $A$ has a positive eigengap between the $k$-th and $k+1$-th eigenvalues, with probability at least $1-o(1)$, by Lemma \ref{lemma:sv_Y}, by the Davis--Kahan theorem applied to the leading \(k\)-dimensional eigenspaces of \(A\) and \(A^{(-jj')}\), 
we have
\begin{equation*}
\|P-P^{(-jj')}\|
\leq
C\frac{\|A-A^{(-jj')}\|}
{d_k^2 - d_{k+1}^2}
\lesssim \frac{n}{n^2} = \frac{1}{n},
\end{equation*}
with probability at least $1-o(1)$. 
\end{proof}


\begin{lemma}\label{lemma:norm_Lambda_hat}
    Suppose Assumption \ref{assumption:model}--\ref{assumption:homoscedasticity} hold and $H = \mathcal O(n^{2/5})$.  Let $\hat \Lambda_{1:k} \in \mathbb R^{p \times k}$ and $\hat \Lambda_{k+1:H} \in \mathbb R^{p \times H-k}$ be the matrices obtained by extracting the first $k$ columns and the remaining $H-k$ ones of $\hat \Lambda$ respectively. Then, with probability at least $1-o(1)$, we have
    $$
    ||\hat \Lambda_{1:k} ||_F  \lesssim \sqrt{n}, \quad ||\hat \Lambda_{k+1:H}||_F \lesssim  \frac{1}{n^{1/5}},
    $$
    which, in turn, imply
    $$
    ||\hat \Lambda||_F \lesssim \sqrt{n}.
    $$
\end{lemma}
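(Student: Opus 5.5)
The plan is to write each column of $\hat\Lambda$ explicitly and bound it. Equation \eqref{eq:posterior_params} gives, for each $j$ and $l$,
\[
\hat\lambda_{jl}=\frac{\sqrt n\, d_l\, v_{jl}}{n+\hat\tau_j^{-2}\hat\psi_l^{-2}}.
\]
The $l$-th column of $\hat\Lambda$ is therefore $V_{\cdot l}\, d_l$ rescaled entrywise by the factors $w_{jl}=\sqrt n/(n+\hat\tau_j^{-2}\hat\psi_l^{-2})\in[0,1/\sqrt n]$. This yields
\[
\|\hat\lambda^{(l)}\|^2=\sum_{j} w_{jl}^2 d_l^2 v_{jl}^2\le \max_j w_{jl}^2\, d_l^2,
\]
using $\sum_j v_{jl}^2=1$.

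For the first block, $l\le k$, I will use the crude bound $w_{jl}\le 1/\sqrt n$, which gives $\|\hat\lambda^{(l)}\|^2\le d_l^2/n$. Lemma \ref{lemma:sv_Y} gives $d_l^2\lesssim np\asymp n^2$ with probability $1-o(1)$, so $\|\hat\Lambda_{1:k}\|_F^2\le k\,d_1^2/n\lesssim n$. This proves the first claim.

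For the superfluous block, $k<l\le H$, I will instead use
\[
w_{jl}\le \sqrt n\,\hat\tau_j^2\hat\psi_l^2 .
\]
Proposition \ref{prop:psi_l}, or more precisely its proof, gives $\max_j\hat\psi_l^2\hat\tau_j^2\lesssim n^{-7/5}$ simultaneously for all $k<l\le H$ on a single event of probability $1-o(1)$. Uniformity is needed, and it holds because the bounds used there depend on $l$ only through $d_l^2-d_{H+1}^2\le d_{k+1}^2-d_{H+1}^2$, which is controlled once via Lemma \ref{lemma:sv_Y_diff}. On this event $w_{jl}\lesssim n^{-9/10}$. Lemma \ref{lemma:sv_Y} gives $d_l^2=o(np)$ for $l>k$, and in fact $d_{k+1}\le\|E\|\lesssim\sqrt n$ by Weyl's inequality and Lemma \ref{lemma:E}. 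Combining these,
\[
\|\hat\lambda^{(l)}\|^2\lesssim n^{-9/5}\,n=n^{-4/5}.
\]
Summing over at most $H\lesssim n^{2/5}$ columns gives $\|\hat\Lambda_{k+1:H}\|_F^2\lesssim n^{-2/5}$, that is $\|\hat\Lambda_{k+1:H}\|_F\lesssim n^{-1/5}$. The final claim, $\|\hat\Lambda\|_F\lesssim\sqrt n$, then follows by the triangle inequality on the two blocks.

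The main obstacle is the second block. It needs the bound of Proposition \ref{prop:psi_l} to hold uniformly over the growing index set $k<l\le H$ rather than for each fixed $l$, and it needs the sharp noise-level bound $d_{k+1}^2\lesssim n$ rather than the weaker $o(np)$. I would obtain the noise-level bound from $d_{k+1}\le s_{k+1}(M\Lambda_0^\top)+\|E\|=\|E\|$ together with Lemma \ref{lemma:E}. If uniformity in $l$ were problematic, the fallback is to use $d_l^2-d_{H+1}^2\le d_{k+1}^2\lesssim n$ in the numerator of $\hat\psi_l^2$. Together with $\sum_l(d_l^2-d_{H+1}^2)\gtrsim n^2$ this gives $\hat\psi_l^2\lesssim H/n$ uniformly, and hence $\max_j\hat\psi_l^2\hat\tau_j^2\lesssim n^{-3/5}\cdot n^{-1}$. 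That matches the rate used above.
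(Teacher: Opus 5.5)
Your main argument is correct and is essentially the paper's proof. You use the entrywise form $\hat\lambda_{jl}=w_{jl}d_lv_{jl}$, the trivial bound $w_{jl}\le n^{-1/2}$ with $d_1^2\lesssim np$ for the first $k$ columns, and Proposition~\ref{prop:psi_l} giving $w_{jl}\lesssim n^{-9/10}$ together with $d_l\lesssim\sqrt n$ for the remaining $H-k\lesssim n^{2/5}$ columns. You also fill in two steps the paper leaves implicit: uniformity in $l$, via $d_l^2-d_{H+1}^2\le d_{k+1}^2-d_{H+1}^2$ and a single use of Lemma~\ref{lemma:sv_Y_diff}, and the bound $d_{k+1}\le\|E\|\lesssim\sqrt n$, via Weyl's inequality (Lemma~\ref{lemma:sv_Y} as stated only gives $d_{k+1}^2=o(np)$).

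The fallback in your last paragraph, however, does not deliver the claimed rate and should be dropped. $\hat\tau_j^2$ is of order $1/H$, not $n^{-1}$: its numerator is roughly $n\|\lambda_{0j}\|^2$ with $\|\lambda_{0j}\|\ge c_\Lambda$, and its denominator is $nH\hat\sigma_j^2$. So $\hat\psi_l^2\lesssim H/n$ only yields $\max_j\hat\psi_l^2\hat\tau_j^2\lesssim 1/n$, hence $w_{jl}\lesssim n^{-1/2}$ and $\|\hat\Lambda_{k+1:H}\|_F\lesssim\sqrt H$, which is no better than the trivial bound. The extra $n^{-2/5}$ factor comes precisely from the edge-spacing estimate $d_l^2-d_{H+1}^2\lesssim n^{3/5}$ of Lemma~\ref{lemma:sv_Y_diff}. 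Your primary route already uses this estimate correctly, so the proof stands without the fallback.
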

\begin{proof}[Proof of Lemma \ref{lemma:norm_Lambda_hat}]
  Note     $$||\hat \Lambda||_F  \leq ||\hat \Lambda_{1:k} ||_F + ||\hat \Lambda_{k+1:H}||_F.$$ By Proposition \ref{prop:psi_l}, for $l > k$, we have $\max_{j=1, \dots, p} \frac{\sqrt{n}}{n + \hat \psi_l^{-2} \hat \tau_j^{-2}} \asymp \frac{1}{\sqrt{n}(1+n^{2/5})} \asymp \frac{1}{n^{9/10}}$, with probability at least $1-o(1)$. Thus, since, by Lemma \ref{lemma:sv_Y}, $d_{l} \lesssim \sqrt{n}$ for $l > k$, we have 
    $$||\hat \Lambda_{k+1:H}||_F \lesssim \sqrt{H}\frac{1}{n^{2/5}} \lesssim \frac{1}{n^{1/5}},$$ 
    with probability at least $1-o(1)$.
    Moreover, with probability at least $1-o(1)$, we have
    $$
    ||\hat \Lambda_{1:k} ||_F \leq \left|\left| \frac{V_{1:K} D_{1:K}}{\sqrt{n}}\right|\right|_F \leq k \frac{d_1}{\sqrt{n}} \lesssim k \sqrt{\frac{p n}{n}} \asymp \sqrt{n},
     $$
since, by Lemma \ref{lemma:sv_Y}, $d_1 = s_1(Y) \asymp \sqrt{np}$, with probability at least $1-o(1)$. 
    
\end{proof}

\begin{lemma}\label{lemma:sv_Y_diff}

Suppose Assumptions \ref{assumption:model}--\ref{assumption:homoscedasticity} hold and $r = O(n^{2/5})$, 
then
\begin{equation*}
d_{k+1}^2- d_{k+r}^2=\mathcal O_{pr}(n^{1/3} r^{2/3}).
\end{equation*}
\end{lemma}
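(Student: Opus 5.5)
We will reduce the statement about the outlier-free part of the spectrum of $Y$ to a statement about the eigenvalues of a pure noise sample covariance matrix near the upper edge of the Marchenko--Pastur law, and then invoke rigidity of eigenvalues.

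\textbf{Step 1: eigenvalue interlacing.} Write $Y = M\Lambda_0^\top + E$, where the signal $M\Lambda_0^\top$ has rank $k$. By Weyl-type interlacing for singular values under rank-$k$ perturbations, for every $i \geq 1$ we have $s_{i+k}(E) \leq d_i$ applied in reverse. More precisely, interlacing gives
\begin{equation*}
s_{k+i}(E)\le s_{i}(Y) \text{ fails in general, but } s_{i+k}(Y)\le s_i(E), \quad s_{i+k}(E)\le s_i(Y)\ \text{is replaced by}\ s_{i+k}(E)\le s_{i}(Y).
\end{equation*}
The usable consequences are the two one-sided bounds $d_{k+1}\le s_1(E)$ and $d_{k+r}\ge s_{2k+r-1}(E)$. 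The second follows from $s_{i+k}(E)\le s_i(Y)$ with $i=k+r$, shifted by at most $k$ indices. Hence
\begin{equation*}
0\le d_{k+1}^2-d_{k+r}^2\le s_1(E)^2 - s_{k+r+k}(E)^2 ,
\end{equation*}
and it suffices to bound the spread of the top $m:=2k+r$ squared singular values of $E=\sigma_0 Z$. Here $Z$ is an $n\times p$ matrix with i.i.d.\ standard Gaussian entries (Assumption \ref{assumption:homoscedasticity}), and $m=O(n^{2/5})$ because $k$ is fixed.

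\textbf{Step 2: rigidity near the edge.} Let $\gamma_+=(\sqrt n+\sqrt p)^2$ be the Marchenko--Pastur upper edge of $ZZ^\top$, and let $\gamma_i$ be the classical locations. These satisfy $\gamma_+-\gamma_i\asymp n\,(i/n)^{2/3}=n^{1/3}i^{2/3}$ for $i\le n/2$ when $p/n\to c\in(0,\infty)$ (Assumption \ref{assumption:dimensions}). The square-root density near the edge, with the scaling $n$ absorbed, gives exactly this scale. Rigidity of eigenvalues for sample covariance matrices (e.g.\ the local laws of \citet{bloemenda_et_al}) then gives, with probability $1-o(1)$ and uniformly in $i\le m$,
\begin{equation*}
|s_i(Z)^2-\gamma_i|\le n^{\varepsilon}\, n^{1/3} i^{-1/3}.
\end{equation*}
Consequently $s_1(Z)^2 - s_m(Z)^2 \le (\gamma_+-\gamma_m) + O_{pr}(n^{1/3+\varepsilon}) \lesssim n^{1/3}m^{2/3}+n^{1/3+\varepsilon}$.

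\textbf{Main obstacle.} The obstacle is the extra $n^{\varepsilon}$ loss from rigidity when $r$ is bounded. For $r\gtrsim n^{3\varepsilon/2}$ the $m^{2/3}$ term dominates, which gives $O_{pr}(n^{1/3}r^{2/3})$, using $m\asymp r$ for $r\ge k$. For bounded $r$ we instead use Tracy--Widom tightness of the top fixed number of eigenvalues: $n^{-1/3}(s_i(Z)^2-\gamma_+)=O_{pr}(1)$ for each fixed $i$. This gives $O_{pr}(n^{1/3})$ directly. The two regimes are then patched by choosing $\varepsilon$ small and using the TW tightness for $i$ up to any slowly growing $m$, or alternatively by accepting an $n^{\varepsilon}$ factor. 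That factor is harmless for the later use in Proposition \ref{prop:psi_l}, since $n^{3/5}$ there has slack against $n^{2}$. Multiplying by $\sigma_0^2$ concludes the proof.
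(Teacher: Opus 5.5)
\textbf{Step 1: a different route from the paper.} The paper matches $d_{k+r}^2$ to the $r$-th noise eigenvalue via eigenvalue sticking (Theorem~2.7 of \citet{Bloemendal_16}), which relies on the spikes being supercritical and on the sticking error bound. You use only the deterministic Weyl inequalities for a rank-$k$ perturbation, $s_{i+k}(Y)\le s_i(E)$ and $s_{i+k}(E)\le s_i(Y)$. These give $d_{k+1}\le s_1(E)$ and $d_{k+r}\ge s_{2k+r}(E)$ without any information about the spikes. The cost is an index shift by $k$. This is harmless because $k$ is fixed, $2k+r\le(2k+1)r$, and only an upper bound on the nonnegative difference is needed. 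Replace your garbled display by these two inequalities. As you state it, $d_{k+r}\ge s_{2k+r-1}(E)$ is off by one and false in general, although your next display correctly uses $s_{2k+r}(E)$. Step~2 (rigidity plus the square-root vanishing of the Marchenko--Pastur density at the edge) is the same argument as the paper's.

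\textbf{The gap: uniformity in $r$.} Keeping the $n^{\varepsilon}$ loss that rigidity actually carries, you prove $d_{k+1}^2-d_{k+r}^2=\mathcal O_{pr}(n^{1/3}r^{2/3}+n^{1/3+\varepsilon})$. This is the claimed rate only for $r\gtrsim n^{3\varepsilon/2}$, and Tracy--Widom tightness covers bounded $r$. A sequence such as $r=\lfloor\log n\rfloor$ falls between the two for every fixed $\varepsilon$, and your patch does not reach it. Tracy--Widom and edge-universality statements concern a fixed number of top eigenvalues, so a diagonal argument yields only \emph{some} slowly growing $m_n$ along which tightness persists, not an arbitrary one. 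Letting $\varepsilon=\varepsilon_n\to0$ only covers $r\ge n^{\delta_n}$ for some unspecified $\delta_n\to0$. Nothing forces these two ranges to overlap.

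Closing the gap needs an edge estimate without the $n^{\varepsilon}$ loss. For instance, it suffices that $\#\{i:s_i(Z)^2\ge\gamma_+-t\,n^{1/3}\}\ge c\,t^{3/2}$ with probability tending to one as $t\to\infty$, uniformly in $n$ and in $t\lesssim n^{4/15}$, together with $s_1(Z)^2\le\gamma_++\mathcal O_{pr}(n^{1/3})$.

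The paper's proof does not resolve this either: it quotes Theorem~2.10 of \citet{bloemenda_et_al} without the $n^{\varepsilon}$ factor and ``for each fixed $r$'', then applies it with $r\asymp n^{2/5}$. Your fallback of keeping the $n^{\varepsilon}$ factor proves a weaker statement than the lemma, but that weaker statement is all that is used later. Proposition~\ref{prop:psi_l} and Lemma~\ref{lemma:hyperparams} only need $d_{k+1}^2-d_{H+1}^2\lesssim n^{3/5}$, and $n^{1/3+\varepsilon}\le n^{3/5}$ whenever $\varepsilon\le 4/15$.
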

\begin{proof}[Proof of Lemma \ref{lemma:sv_Y_diff}]
Denote the marginal variance of each $y_i$ with $\Theta_0 = \Lambda_0 \Lambda_0^\top + \sigma_0^2 I_p$. 
Let $Z\in\mathbb R^{n\times p}$ have independent standard Gaussian entries. Then $Y\stackrel{d}{=}Z\Theta_0^{1/2}$.
Define the normalized covariance
\begin{equation*}
\bar\Theta_0=\sigma_0^{-2}\Theta_0 = I_p+\sigma_0^{-2}\Lambda_0\Lambda_0^\top,
\end{equation*}
which is a rank-$k$ perturbation of $I_p$. 
Its nontrivial eigenvalues are
\begin{equation*}
 \gamma_l=1+\frac{s_l(\Lambda_0)^2}{\sigma_0^2} 
 ,\qquad (l=1,\dots,k),
\end{equation*}
so $ \gamma_l^2\to\infty$ and the spikes are uniformly supercritical \citep{lee_lee_24}. 
Denote by $\widetilde\gamma_1\ge\cdots\ge \widetilde\gamma_n\ge 0$ the eigenvalues of $S = \frac{1}{n \sigma_0^2} Y Y^\top$, and note that for each $l=1, \dots, n$, we have $d_l^2 = n \sigma_0^2 \widetilde \gamma_l$. Similarly, let $\widetilde\gamma^{(0)}_1\ge\cdots\ge \widetilde\gamma^{(0)}_n\ge0$ be the eigenvalues of $
\widetilde S_0=\frac{1}{n \sigma_0^2}EE^\top $.
Theorem 2.7 of \citet{Bloemendal_16} with $n \asymp p$ gives 
\begin{equation*}\label{eq:eigenvalues_stiking}
\widetilde\gamma_{k+r}=\widetilde\gamma^{(0)}_{r} + \mathcal O(n^{-2}),
\end{equation*}
with probability at least $1-o(1)$.

Let $\gamma_l^{(0)}$ denote the $l/p$-th  upper-tail quantile of $\rho_\gamma(\cdot)$, defined as in Eq.~(2.17) of \citet{bloemenda_et_al}, where $\rho_\gamma(\cdot)$ is the Marchenko-Pastur law \citep{MP_law}, for instance, defined in Eq.~(2.4) of \citet{bloemenda_et_al}. Then, for each fixed $r$, by Theorem 2.10 of \citet{bloemenda_et_al}, with probability at least $1-o(1)$,
\begin{equation*}
\widetilde\gamma^{(0)}_{r}=\gamma_r^{(0)} + \mathcal O\!\big(n^{-2/3}r^{-1/3}\big),
\end{equation*}
and also $\widetilde\gamma^{(0)}_{1}=\gamma_1^{(0)} + \mathcal O(n^{-2/3})$.
Therefore, with probability at least $1-o(1)$, 
\begin{equation*}
\label{eq:null-gap}
\widetilde\gamma^{(0)}_{1}-\widetilde\gamma^{(0)}_{r}
= (\gamma_1^{(0)}-\gamma_r^{(0)})+\mathcal O(n^{-2/3})\end{equation*}
From the explicit formula of $\rho_\gamma(\cdot)$, near the upper edge $\gamma_+$ the density
vanishes like a square root, so there exists $C<\infty$ such that for all sufficiently small $\delta>0$,
\begin{equation*}
\label{eq:mp-tail-bound}
\int_{\gamma_+-\delta}^{\gamma_+}\rho_{\mathrm{\gamma}}(x)\,dx \le C\,\delta^{3/2}.
\end{equation*}
By definition of $\gamma_r^{(0)}$ as an upper-tail quantile,
\begin{equation*}
\frac{r}{p}=\int_{\gamma_r^{(0)}}^{\gamma_+}\rho_\gamma(x)\,dx
\le \int_{\gamma_+-(\gamma_+-\gamma_r^{(0)})}^{\gamma_+}\rho_\gamma(x)\,dx
\le C(\gamma_+-\gamma_r^{(0)})^{3/2},
\end{equation*}
so $\gamma_+-\gamma_r^{(0)} = O\big((r/n)^{2/3}\big)$ and hence
\begin{equation*}
\gamma_1^{(0)}-\gamma_r^{(0)} = O\big((r/n)^{2/3}\big).
\end{equation*}
Combining all of the above, with probability at least $1-o(1)$, we get 
\begin{equation*}
d_{k+1}^2-d_{k+r}^2
= n \sigma_0^2(\widetilde\gamma_{k+1}-\widetilde\gamma_{k+r})
= \mathcal O\!\big(n^{1/3}r^{2/3}\big)
= \mathcal O(n^{3/5}),
\end{equation*}
In particular, if $r \asymp n^{2/5}$, with probability at least $1-o(1)$,
\begin{equation*}
d_{k+1}^2-d_{k+r}^2
= \mathcal O(n^{3/5}),
\end{equation*}
\end{proof}

\begin{lemma}\label{lemma:hyperparams}
    Suppose $H>k$, $H = \mathcal{O}(n^{2/5})$ and Assumptions \ref{assumption:model}--\ref{assumption:homoscedasticity} hold. Then, for each $ l \leq k$, with probability at least $1-o(1)$, we have 
    \begin{equation*}
        \min_{j=1, \dots, p} \hat \psi_l^2 \hat \tau_j^2 \gtrsim 1,  
    \end{equation*}
     where $\hat \psi_l^2$ and $\hat \tau_j^2$ are defined in Section \ref{subsec:hyperparams}. 
\end{lemma}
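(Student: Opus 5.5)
The plan is to bound $\hat\psi_l^2$ and $\hat\tau_j^2$ from below separately for $l\le k$ and all $j$, and then multiply the two bounds. By \eqref{eq:hyperparams},
\[
\hat\psi_l^2\hat\tau_j^2=\frac{d_l^2-d_{H+1}^2}{\sum_{m=1}^H(d_m^2-d_{H+1}^2)}\cdot\frac{v_{j,1:H}^\top(D_{1:H}^2-d_{H+1}^2I_H)v_{j,1:H}}{n\hat\sigma_j^2}.
\]

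\textbf{The $\hat\psi_l^2$ factor.} By Lemma \ref{lemma:sv_Y} and Lemma \ref{lemma:sv_signal}, for $l\le k$ we have $d_l^2\asymp np\asymp n^2$. By Lemma \ref{lemma:sv_Y} together with Lemma \ref{lemma:E}, $d_{H+1}^2\lesssim n$. Hence the numerator satisfies $d_l^2-d_{H+1}^2\gtrsim n^2$. For the denominator, the first $k$ terms contribute $\lesssim kn^2$. The remaining terms are each $\lesssim n^{3/5}$ by Lemma \ref{lemma:sv_Y_diff}, as in the proof of Proposition \ref{prop:psi_l}, so together they contribute at most $Hn^{3/5}\lesssim n$. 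The denominator is therefore $\asymp n^2$, and $\hat\psi_l^2\gtrsim 1$.

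\textbf{Lower bound on the $\hat\tau_j^2$ numerator.} All terms in the quadratic form are nonnegative, so
\[
v_{j,1:H}^\top(D_{1:H}^2-d_{H+1}^2I_H)v_{j,1:H}\ \ge\ (d_k^2-d_{H+1}^2)\,\|v_{j,1:k}\|^2\ \gtrsim\ n^2\|v_{j,1:k}\|^2.
\]
I need $\min_j\|v_{j,1:k}\|^2\gtrsim 1/p$, i.e. every row of the top-$k$ right singular vectors is delocalized from below. Write $V_{1:k}D_{1:k}=Y^\top U_{1:k}$. Then row $j$ equals $U_{1:k}^\top y^{(j)}=U_{1:k}^\top M\lambda_{0j}+U_{1:k}^\top e^{(j)}$.

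\emph{Signal part.} By Lemma \ref{lemma:U}, $U_{1:k}^\top U_{0,1:k}$ is close to orthogonal. Since $M=U_{0,1:k}R$ with $s_k(R)=s_k(M)\asymp\sqrt n$ (Lemma \ref{lemma:sv_F}), the signal part has norm $\gtrsim\sqrt n\,\|\lambda_{0j}\|\ge\sqrt n\,c_\Lambda$ by Assumption \ref{assumption:Lambda}.

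\emph{Noise part.} This requires $\max_j\|U_{1:k}^\top e^{(j)}\|=o(\sqrt n)$. Crude bounds via $\|e^{(j)}\|\lesssim\sqrt n$ are too weak. I would use the leave-one-out projections $P^{(-j)}$ from Lemma \ref{lemma:P_stability}. Since $P^{(-j)}$ is independent of $e^{(j)}$, $\|P^{(-j)}e^{(j)}\|^2/\sigma_0^2\sim\chi^2_k$, so the maximum over $j$ is $\lesssim\sqrt{\log p}$. The correction term satisfies $\|(P-P^{(-j)})e^{(j)}\|\lesssim n^{-1}\sqrt n$. Here I would need Lemma \ref{lemma:P_stability} uniformly in $j$, via a union bound, which I expect holds because its ingredients (Lemma \ref{lemma:y_j} and Lemma \ref{lemma:sv_Y}) are already uniform.

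\emph{Conclusion for the row.} Combining, $\|D_{1:k}v_{j,1:k}\|\gtrsim\sqrt n$ uniformly in $j$. Since $d_1\lesssim\sqrt{np}$, this gives $\|v_{j,1:k}\|^2\gtrsim n/(np)=1/p$. Plugging back, the numerator of $\hat\tau_j^2$ is $\gtrsim n^2/p\asymp n$.

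\textbf{Upper bound on $\hat\sigma_j^2$ and conclusion.} I need $\max_j\hat\sigma_j^2\lesssim 1$. Because the projection is a contraction, $\hat\sigma_j^2\le\|y^{(j)}\|^2/(n-H)\vee c_\sigma$, and this is $\lesssim 1$ uniformly by Lemma \ref{lemma:y_j}. Hence $\hat\tau_j^2\gtrsim n/(nH)=1/H$.

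This naive combination gives only $\hat\psi_l^2\hat\tau_j^2\gtrsim 1/H$, not $\gtrsim 1$. \textbf{This is the main obstacle:} the normalization $\mathrm{tr}(\Psi)=H$ introduces a factor $H$ that must cancel with the $1/H$ in $\hat\tau_j^2$. I would redo the $\hat\psi_l^2$ bound keeping the factor $H$ in its numerator: $\hat\psi_l^2=H(d_l^2-d_{H+1}^2)/\sum_m(\cdots)\gtrsim H$, since the denominator is $\asymp n^2$ and the numerator is $\gtrsim Hn^2$. The product is then $\gtrsim H\cdot(1/H)\asymp 1$, uniformly in $j$, on the intersection of the finitely many high-probability events above.
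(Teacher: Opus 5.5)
Your proof is correct. Its skeleton matches the paper's: the orders of $d_l^2$ control $\hat\psi_l^2$, you get $\max_j\hat\sigma_j^2\lesssim 1$ from $\|y^{(j)}\|^2$, and you lower-bound $y^{(j)\top}U_{1:k}U_{1:k}^\top y^{(j)}=\|D_{1:k}v_{j,1:k}\|^2$ using $\min_j\|\lambda_{0j}\|\ge c_\Lambda$. Where you differ is in how you treat the numerator of $\hat\tau_j^2$.

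The paper splits $v_{j,1:H}^\top(D_{1:H}^2-d_{H+1}^2I_H)v_{j,1:H}$ into three pieces: the top-$k$ form, the $d_{H+1}^2$ correction, and the components $k+1,\dots,H$. It shows the first is $n\|\lambda_{0j}\|^2+O(\sqrt{n\log n})$ by writing $P=P_0+(P-P_0)$, with $P=U_{1:k}U_{1:k}^\top$ and $P_0=U_{0,1:k}U_{0,1:k}^\top$. The $P_0$ part is explicit because $U_{0,1:k}$ depends only on $M$ and is independent of $E$. The remainder is handled crudely, since $\|y^{(j)}\|^2\|P-P_0\|\lesssim\sqrt n=o(n)$. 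The other two pieces are then shown to be $o(n)$, the last via Lemma~\ref{lemma:sv_Y_diff}. This gives the sharp limit $\|\lambda_{0j}\|^2$, which the paper reuses elsewhere (e.g.\ in the proof of Theorem~\ref{thm:bvm}).

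You only need a lower bound, so you drop components $k+1,\dots,H$ by nonnegativity and absorb $d_{H+1}^2$ through $(d_k^2-d_{H+1}^2)/d_1^2\gtrsim 1$. This is cleaner and avoids Lemma~\ref{lemma:sv_Y_diff} in the numerator. You also do not need that lemma in the $\hat\psi_l^2$ denominator: $d_m^2\le d_{k+1}^2\lesssim n$ for $m>k$ already gives a tail sum $\lesssim Hn=o(n^2)$.

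Your leave-one-out control of the noise is valid. The uniform-in-$j$ version of Lemma~\ref{lemma:P_stability} holds deterministically on the single event where $\max_j\|y^{(j)}\|^2\lesssim n$ and $d_k^2-d_{k+1}^2\gtrsim n^2$. But it is more machinery than you need. Comparing with $P_0$ gives $\|Pe^{(j)}\|\le\|U_{0,1:k}^\top e^{(j)}\|+\|P-P_0\|\,\|e^{(j)}\|=O(\sqrt{\log p})+O(1)$ uniformly in $j$, which is already $o(\sqrt n)$. Only the very crude bound $\|Pe^{(j)}\|\le\|e^{(j)}\|$ is too weak.

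Finally, the ``main obstacle'' is not one. Your opening display already shows the factors of $H$ cancel. The spurious $1/H$ appeared only because you first dropped the $H$ in $\hat\psi_l^2$, so state $\hat\psi_l^2\asymp H$ from the outset.
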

\begin{proof}[Proof of Lemma \ref{lemma:hyperparams}]
   Let $l \leq k$, by Lemma \ref{lemma:sv_Y}, we have $d_l^2 \asymp n^2 $, $d_{H+1}^2 \asymp n$, and $\sum_{l=1}^H(d_l^2 - d_{H+1}^2) \asymp k n p \asymp n^2$, with probability at least $1-o(1)$, implying $(d_l^2 -d_{H+1}^2) / \sum_{l'=1}^H(d_{l'}^2 - d_{H+1}^2) \asymp 1$. Moreover, we have $\max_{j = 1, \dots, p} \hat \sigma_j^2 \leq \max\{||y^{(j)}||^2 / (n-H), c_{\sigma}\} \lesssim 1$, with probability at least $1-o(1)$, by Lemma \ref{lemma:y_j}.
   Moreover, let $v_{j, 1:k} \in \mathbb R^k$, $v_{j,(k+1):H} \in \mathbb R^{H-k}$ be the vectors obtained by taking leading $k$ entries and the remaining $H-k$ ones of $v_{j, 1:H}$ respectively, and $D_{1:k} \in \mathbb R^{k \times k}$, $D_{(k+1):H} \in \mathbb R^{H-k \times H-k}$ be the diagonal matrices with the leading $k$ singular values and the remaining $H-k$ of $Y$ and note
\begin{equation*}
\begin{aligned}
      v_{j, 1:H}^\top (D_{1:H}^2 - d_{H+1}^2 I_H) v_{j, 1:H} =&  v_{j, 1:k}^\top D_{1:k}^2 v_{j, 1:k} - v_{j, 1:k}^\top d_{H+1}^2 I_k v_{j, 1:k} \\
      &+ v_{j, k+1:H}^\top (D_{(k+1): H}^2- d_{H+1}^2 I_{H-k}) v_{j, k+1:H}.
\end{aligned}
\end{equation*}
First,  \begin{equation*}
     v_{j, 1:k}^\top D_{1:k}^2 v_{j, 1:k} = y^{(j)\top} U_{1:k} U_{1:k}^\top y^{(j)} = y^{(j)\top} U_{0,1:k} U_{0, 1:k}^\top y^{(j)} + y^{(j)\top} \big(U_{1:k} U_{1:k}^\top -U_{0,1:k} U_{0, 1:k}^\top\big) y^{(j)},
     \end{equation*}   
For the first term, we have 
\begin{equation*}
\begin{aligned}
    y^{(j)\top} U_{0,1:k} U_{0, 1:k}^\top y^{(j)} =& \lambda_{0j}^\top M^\top M \lambda_{0j} + \epsilon_{j}^\top  M \lambda_{0j} + \lambda_{0j}^\top M^\top \epsilon_{j} +  \epsilon_{j}^\top U_{0,1:k} U_{0, 1:k}^\top \epsilon_{j} \\
    & =n\lambda_{0j}^\top \lambda_{0j} + \lambda_{0j}^\top (M^\top M - nI_k) \lambda_{0j} +\epsilon_{j}^\top  M \lambda_{0j} + \lambda_{0j}^\top M^\top \epsilon_{j} +  \epsilon_{j}^\top U_{0,1:k} U_{0, 1:k}^\top \epsilon_{j} 
\end{aligned} 
\end{equation*}
and, with probability at least $1-o(1)$,
\begin{equation}
    \frac{1}{n}||\lambda_{0j}^\top (M^\top M - nI_k) \lambda_{0j}|| \lesssim \frac{1}{n} ||\lambda_{0j}||^2 || M^\top M - nI_k|| \lesssim \frac{\sqrt{ n\log n}}{n} = \sqrt{\frac{\log n}{n}},
\end{equation}
since, with probability at least $1-o(1)$, $|| M^\top M - nI_k|| \lesssim \sqrt{n \log n}$, by Lemma \ref{lemma:sv_F}.
\begin{equation*}
   \max_{j=1, \dots, p} \frac{1}{n}|\epsilon_{j}^\top  M \lambda_{0j}| =  \frac{1}{n}|\lambda_{0j}^\top M^\top \epsilon_{j} | = \frac{1}{n}  \sum_{i=1}^n \lambda_{0j}^\top (\eta_i^\top \epsilon^{(j)}) \lesssim 1.
\end{equation*}
Note that , uniformly over \(j=1,\ldots,p\), conditional on \(M_0\), $e^{(j)\top}M_0\lambda_{0j}\sim N\!\left(0,\sigma_0^2\|M_0\lambda_{0j}\|^2\right)$. Since, $\max_{j = 1, \dots, p}\|M_0\lambda_{0j}\|^2\lesssim n$, because $||M_0|| \lesssim \sqrt{n}$, by Lemma \ref{lemma:sv_F} and $\max_{j = 1, \dots, p} ||\lambda_{0j}||$ by Assumption \ref{assumption:Lambda}, then, for any \(t>0\),
\[
\mathbb P\left(\max_{j = 1, \dots, p}\left|\frac{1}{n}e^{(j)\top}M_0\lambda_{0j}\right|>t\mid M_0\right)
\le
2p\exp\left(-\frac{nt^2}{2\sigma_0^2 C_1}\right),
\]
for some finite constant $C_1$. Taking \(t=C_2\sqrt{\log p/n}\) with \(C_2\) large enough gives
\begin{equation*}
\max_{j = 1, \dots, p}\left|\frac{1}{n}e^{(j)\top}M_0\lambda_{0j}\right|
=
\mathcal O_{pr}\left(\sqrt{\frac{\log p}{n}}\right) = o_{pr}(1).
\end{equation*}
Moreover, 
\begin{equation}
  \max_{j=1, \dots, p}   \epsilon_{j}^\top U_{0,1:k} U_{0, 1:k}^\top \epsilon_{j}  =  \max_{j=1, \dots, p} ||U_{0, 1:k}^\top \epsilon_{j}||^2 \lesssim \log n,
\end{equation}
by Lemma \ref{lemma:sv_F}, since $U_{0, 1:k}^\top \epsilon_{j} \sim N_k (0, \sigma_0^2 I_k)$. 
For the second term, with probability at least $1-o(1)$, we have
\begin{equation*}
  \max_{j=1, \dots, p}  |y^{(j)\top} \big(U_{1:k} U_{1:k}^\top -U_{0,1:k} U_{0, 1:k}^\top\big) y^{(j)}  | \leq  \max_{j=1, \dots, p}  ||y^{(j)}||^2 || U_{1:k} U_{1:k}^\top -U_{0,1:k} U_{0, 1:k}^\top || \lesssim \sqrt{n},
\end{equation*}
 since, with probability at least $1-o(1)$, $\max_{j=1, \dots, p} ||y^{(j)}|| \lesssim \sqrt{n}$ by Lemma \ref{lemma:y_j}, $|| U_{1:k} U_{1:k}^\top -U_{0,1:k} U_{0, 1:k}^\top || \lesssim \frac{1}{\sqrt{n}}$.
 Thus, with probability at least $1-o(1)$, for each $j=1, \dots, p$, 
 
\begin{equation*}
 \frac{1}{n}  v_{j, 1:k}^\top D_{1:k}^2 v_{j, 1:k}  = ||\lambda_{0j}||^2 + \mathcal O\big(\sqrt \frac{\log n}{n}\big).
\end{equation*}
Thus, since the diagonal entries of $D_{1:K}^2$ are $\asymp n^2$, with probability at least $1-o(1)$, by Lemma \ref{lemma:sv_Y}, the entries of $v_{j, 1:k}$ are $\asymp \frac{1}{\sqrt n}$ with probability at least $1-o(1)$. Hence, with probability at least $1-o(1)$,
\begin{equation*}
    \frac{1}{n}  v_{j, 1:k}^\top d_{H+1}^2 I_{k} v_{j, 1:k} \lesssim \frac{1}{n},
\end{equation*}
since $d_{H+1} \lesssim \sqrt{n}$, with probability at least $1-o(1)$, by Lemma \ref{lemma:sv_Y}. Finally,  with probability at least $1-o(1)$, 
$$
 \frac{1}{n}  v_{j, k+1:H}^\top (D_{(k+1): H}^2- d_{H+1}^2 I_{H-k}) v_{j, k+1:H} \lesssim \frac{n^{3/5}}{n} = n^{-2/5},
$$
since $||D_{(k+1): H}^2- d_{H+1}^2|| \lesssim n^{3/5}$, with probability at least $1-o(1)$, by Lemma \ref{lemma:sv_Y_diff}. 
Combining all of the above, together with $\max_{j=1, \dots, p} ||\lambda_{0j}|| \leq C_\Lambda k$, since $\|\Lambda\|_\infty \leq C_\Lambda$ by Assumption \ref{assumption:Lambda}, we prove the desired result. 
\end{proof}

\begin{lemma}\label{lemma:pi_sigma}
    Suppose $H \lesssim n^{2/5}$ and Assumptions \ref{assumption:model}--\ref{assumption:homoscedasticity}. Then, with probability at least $1-o(1)$, we have 
    \begin{equation*}
        \Pi\big(||\tilde \Sigma|| < C\big) \geq 1 - o(1),
    \end{equation*}
    for some finite constant $C$, where $\Pi(\cdot)$ denotes the probability measure induced by (\ref{eq:posterior_update}, \ref{eq:posterior_update_cc}).
\end{lemma}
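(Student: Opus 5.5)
The plan is to use the inverse-gamma form of the marginal posterior of each $\tilde\sigma_j^2$ in \eqref{eq:posterior_update}. The inflation in \eqref{eq:posterior_update_cc} only affects $\tilde\lambda_j$, so it plays no role here. Set $v_n = v+n$ and $b_j = \{vs^2 + \|y^{(j)} - \hat M\hat\lambda_j\|^2\}/2$. Then $\tilde\sigma_j^2 \sim IG(v_n/2, b_j)$ independently across $j$, which gives the representation
\[
\tilde\sigma_j^2 = \frac{2 b_j}{v_n}\cdot \frac{v_n/2}{G_j}, \qquad G_j \sim \mathrm{Gamma}(v_n/2,1).
\]
The first factor equals $\delta_j^2 (v_n-2)/v_n$. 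Since $\|\tilde\Sigma\| = \max_j \tilde\sigma_j^2$, it suffices to show two things:
\begin{itemize}
\item a data bound: with probability $1-o(1)$, $\max_j \delta_j^2 \lesssim 1$;
\item a posterior bound: the independent Gamma variables satisfy $\Pi\bigl(\min_j G_j/(v_n/2) \ge 1/2\bigr) \ge 1-o(1)$.
\end{itemize}
Together these give $\|\tilde\Sigma\| < C$ for a constant $C$ of order $4\max_j\delta_j^2$.

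\textbf{Data bound.} I would not go through Lemma \ref{lemma:delta_j}. The crude bound $\|y^{(j)} - \hat M\hat\lambda_j\| \le \|y^{(j)}\| + \|\hat M\hat\lambda_j\|$ is enough. From \eqref{eq:posterior_params},
\[
\hat M\hat\lambda_j = U_{1:H}\, n\Psi_{n,j}^{-1} D_{1:H} v_{j,1:H},
\]
and $n\Psi_{n,j}^{-1}$ is diagonal with entries in $(0,1]$. Hence
\[
\|\hat M\hat\lambda_j\| \le \|D_{1:H} v_{j,1:H}\| = \|U_{1:H}^\top y^{(j)}\| \le \|y^{(j)}\|.
\]
So $\|y^{(j)} - \hat M\hat\lambda_j\|^2 \le 4\|y^{(j)}\|^2$. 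Lemma \ref{lemma:y_j} gives $\max_j \|y^{(j)}\|^2 \lesssim n$ with probability $1-o(1)$, and therefore $\max_j \delta_j^2 \le \{vs^2 + 4\max_j\|y^{(j)}\|^2\}/(v+n-2) \lesssim 1$. This step is deterministic given Lemma \ref{lemma:y_j} and does not need the hyperparameter estimates at all.

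\textbf{Posterior bound.} This is the main (though mild) obstacle, because of the union bound over $p \asymp n$ coordinates. I would use the standard lower-tail Chernoff bound for a Gamma variable with shape $a = v_n/2$:
\[
\Pi(G_j \le a/2) \le \exp\{-a(\tfrac12 - \log 2 \cdot \tfrac12 \cdot \ldots)\},
\]
or more simply $\Pi(G_j \le a(1-t)) \le \exp(-a t^2/2)$, which is the Laurent--Massart type bound for $\chi^2$ and also appears in \citet{zhang_zhou_20}. Taking $t=1/2$ and a union bound gives
\[
\Pi\Bigl(\min_j G_j \le a/2\Bigr) \le p\, e^{-v_n/16} \to 0,
\]
since $p \asymp n$ and $v_n \asymp n$. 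Alternatively, Lemma E.7 of \citet{fable}, already used for the $\Delta_j$ terms, gives $\max_j |\Delta_j|/v_n \to 0$ in $\Pi$-probability, and this yields the same conclusion.

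\textbf{Combining.} On the data event of probability $1-o(1)$, $\Pi(\|\tilde\Sigma\| < C) \ge 1 - p e^{-v_n/16} = 1-o(1)$ with $C = 4\sup_n \max_j \delta_j^2 + 1$, which proves the statement.
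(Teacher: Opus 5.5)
Your proof is correct and follows the same route as the paper. It bounds the inverse-gamma scale by $\max_j\|y^{(j)}\|^2\lesssim n$ via Lemma \ref{lemma:y_j}, then applies a Gamma lower-tail bound with a union bound over $j$, which the paper delegates to Theorem 5 of \citet{zhang_zhou_20}. The paper uses the slightly sharper $\|y^{(j)}-\hat M\hat\lambda_j\|\le\|y^{(j)}\|$; this also follows from your observation that $n\Psi_{n,j}^{-1}$ has entries in $(0,1]$. Tidy up two points: the incomplete first Chernoff display, and stating $C$ as the deterministic constant from the data event rather than as a $\sup_n$ of random quantities.
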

\begin{proof}[Proof of Lemma \ref{lemma:pi_sigma}]
    Note that $||\tilde \Sigma|| = \max_{j=1, \dots, p} \tilde \sigma_j^2$, where $ \tilde \sigma_j^2 \mid Y, \hat M, \tau_j, \Psi \sim IG\Big(\frac{v + n}{2}, \frac{v s^2 + ||y^{(j)} - \hat M \hat \lambda_j||^2}{2}\Big)$. Note that $\max_{j = 1, \dots, p} \frac{v s^2 + ||y^{(j)} - \hat M \hat \lambda_j||^2}{2} \asymp \max_{j = 1, \dots, p}||y^{(j)} - \hat M \hat \lambda_j||^2 \leq \max_{j = 1, \dots, p} ||y^{(j)}||^2 \lesssim n$,  with probability at least $1-o(1)$, by Lemma \ref{lemma:y_j}. Theorem 5 of \citet{zhang_zhou_20} concludes the proof. 
\end{proof}

\begin{lemma}\label{lemma:delta_j}
    Let $\delta_j^2 = \frac{v s^2 + ||y^{(j)} - \hat M \hat \lambda_j||^2}{v + n}$. Under Assumptions \ref{assumption:model}--\ref{assumption:homoscedasticity}, we have 
\begin{equation*}
    \delta_j^2 = \frac{s_j^2}{n} + F_j,
    \end{equation*}
    where $\frac{s_j^2}{\sigma_0^2} \sim \chi_{n-k}^2$ and $\max_{j=1, \dots, p}|F_j| \lesssim \frac{1}{n^{2/5}}$, with probability at least $1-o(1)$.
\end{lemma}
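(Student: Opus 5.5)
The plan is to decompose the residual sum of squares $\|y^{(j)} - \hat M\hat\lambda_j\|^2$ into the oracle residual $\|(I_n - U_{0,1:k}U_{0,1:k}^\top)\epsilon^{(j)}\|^2$ plus perturbations. We then set $s_j^2 = \|(I_n - P_0)\epsilon^{(j)}\|^2$, where $P_0 = U_{0,1:k}U_{0,1:k}^\top$. Because $\epsilon^{(j)} \sim N_n(0,\sigma_0^2 I_n)$ is independent of $M$, and $P_0$ depends only on $M$ (its column space equals $\operatorname{col}(M)$, as shown in the proof of Lemma \ref{lemma:U}), we get $s_j^2/\sigma_0^2 \sim \chi^2_{n-k}$ exactly. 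Everything else is absorbed into $F_j$, which we must show is $\lesssim n^{-2/5}$ uniformly in $j$ with probability $1-o(1)$. The factor $n/(v+n)$ and the term $vs^2/(v+n)$ contribute only $O(1/n)$, since $\max_j s_j^2 \lesssim n$.

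\textbf{Step 1: write the fitted value exactly.} Since $\hat M = \sqrt n U_{1:H}$ and $\hat\lambda_j = \sqrt n \Psi_{n,j}^{-1} D_{1:H} v_{j,1:H}$, we have $\hat M\hat\lambda_j = \sum_{l\le H} w_{jl}\, u_l u_l^\top y^{(j)}$ with weights $w_{jl} = n/(n+\hat\tau_j^{-2}\hat\psi_l^{-2})$. Hence
\begin{equation*}
\|y^{(j)}-\hat M\hat\lambda_j\|^2 = \|(I_n-P_H)y^{(j)}\|^2 + \sum_{l\le H}(1-w_{jl})^2 (u_l^\top y^{(j)})^2 ,
\end{equation*}
where $P_H = U_{1:H}U_{1:H}^\top$. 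We split the sum into $l\le k$ and $l>k$, and rewrite the first term as $\|(I_n-P)y^{(j)}\|^2 - \sum_{k<l\le H}(u_l^\top y^{(j)})^2$, with $P = U_{1:k}U_{1:k}^\top$.

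\textbf{Step 2: bound the shrinkage pieces.} For $l\le k$, Lemma \ref{lemma:hyperparams} gives $1-w_{jl}\lesssim 1/n$ uniformly in $j$. Combined with $(u_l^\top y^{(j)})^2\le\|y^{(j)}\|^2\lesssim n$ (Lemma \ref{lemma:y_j}), this term is $O(1/n)$ after dividing by $n$. For $l>k$, Proposition \ref{prop:psi_l} gives $w_{jl}\lesssim n^{-2/5}$, so $(1-w_{jl})^2$ is close to $1$ and nearly cancels the $-\sum_{k<l\le H}(u_l^\top y^{(j)})^2$ from Step 1. The net contribution is at most $2\sum_{k<l\le H} w_{jl}(u_l^\top y^{(j)})^2 \le 2n^{-2/5}\|y^{(j)}\|^2$, which is $\lesssim n^{-2/5}$ after dividing by $n$. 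This is precisely where the rate $n^{-2/5}$ originates.

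\textbf{Step 3: compare the oracle residual with $s_j^2$.} Using $(I_n-P_0)m^{(j)}=0$, we write
\begin{equation*}
\|(I_n-P)y^{(j)}\|^2 - s_j^2 = y^{(j)\top}(P_0-P)y^{(j)} + \big[\text{cross terms in } m^{(j)},\epsilon^{(j)}\big].
\end{equation*}
The crude bound $\|y^{(j)}\|^2\|P-P_0\|\lesssim \sqrt n$ from Lemmas \ref{lemma:y_j} and \ref{lemma:U} only yields $n^{-1/2}$ after dividing by $n$, which is already smaller than $n^{-2/5}$. So no leave-one-out refinement is needed, and the bound holds uniformly in $j$ because both lemmas are uniform.

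The main obstacle is Step 2's cancellation for the superfluous components. Their projections $(u_l^\top y^{(j)})^2$ are not individually small, and we must check that the shrunk-away components are returned to the residual with the correct sign, rather than bounding them crudely. The weights are also data dependent through $\hat\tau_j$, which requires Proposition \ref{prop:psi_l} to hold simultaneously over $j$ — and it is stated with a max over $j$, so this is fine. A secondary subtlety is keeping every bound uniform over $j=1,\dots,p$, which the cited lemmas already provide.
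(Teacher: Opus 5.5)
Your proof is correct and follows essentially the same route as the paper. Both arguments isolate the oracle residual $y^{(j)\top}(I_n-P_0)y^{(j)}\sim\sigma_0^2\chi^2_{n-k}$, bound $y^{(j)\top}(P_0-P)y^{(j)}$ crudely by $\|y^{(j)}\|^2\|P-P_0\|\lesssim\sqrt n$, and obtain the $n^{-2/5}$ rate from the shrinkage discrepancy via Proposition~\ref{prop:psi_l} and Lemma~\ref{lemma:hyperparams}. Your exact expansion $\|(I_n-P_H)y^{(j)}\|^2+\sum_{l\le H}(1-w_{jl})^2(u_l^\top y^{(j)})^2$ is in fact slightly more careful than the paper's version, which writes the residual sum of squares as $y^{(j)\top}(I_n-U_{1:H}B_jU_{1:H}^\top)y^{(j)}$ and thereby tacitly treats $B_j$ as idempotent. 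One small simplification: in Step 3 the difference is exactly $y^{(j)\top}(P_0-P)y^{(j)}$, with no additional cross terms.
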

  \begin{proof}[Proof of Lemma \ref{lemma:delta_j}]
As in the proof of Theorem \ref{thm:clt}, let $B_j = n \Psi_{n, j}^{-1} = \mathrm{diag}\!\big( \frac{n}{n + \hat \tau_j^{-2} \hat \psi_1^{-2}}, \dots, \frac{n}{n + \hat \tau_j^{-2} \hat \psi_H^{-2}} \big)$. 
Consider the singular value decomposition of the true signal $M \Lambda_0^\top = U_{0, 1:k}D_{0, 1:k}V_{0, 1:k}^\top$ and let $P_0 = U_{0, 1:k}U_{0, 1:k}^\top$ denote the projection onto its columns space. 
Let $\delta_j^2 = \frac{v s^2 + ||y^{(j)} - \hat M \hat \lambda_j||^2}{v_n} = \frac{v s^2 +y^{(j)\top} y^{(j)}
-
y^{(j)\top} U_{1:H} B_j U_{1:H}^\top y^{(j)},}{v_n}$, where $v_n = v + n$. 
Note that 
\begin{equation*}
\begin{aligned}
        y^{(j)\top} (I_n - U_{1:H} B_j U_{1:H}^\top) y^{(j)}  =& y^{(j)\top} (I_n - P_0) y^{(j)} + y^{(j)\top} (P_0 - U_{1:k} U_{1:k}^\top ) y^{(j)}\\ &  + y^{(j)\top} (U_{1:k} U_{1:k}^\top - U_{1:H} B_j U_{1:H}^\top) y^{(j)}  
\end{aligned}
\end{equation*}
Moreover, we have 
\begin{equation*}
Z_j = y^{(j)\top} (I_n - P_0) y^{(j)}
=
\epsilon^{(j)\top} (I_n - P_0) \epsilon^{(j)} \sim  \sigma_0^2 \chi^2_{n-k},
\end{equation*}
since $I_n - P_0$ is an orthogonal projection of rank $n-k$, and, with probability at least $1- o(1)$, we have
     $$\max_{j=1, \dots, p}| y^{(j)\top} (P_0 - U_{1:k} U_{1:k}^\top ) y^{(j)}| \leq \max_{j=1, \dots, p}||y^{(j)}||^2 ||P_0 - U_{1:k} U_{1:k}^\top || \lesssim \sqrt{n}, $$
    and
     \begin{equation*}
    \begin{aligned}
        \max_{j=1, \dots, p} ||y^{(j)\top} (U_{1:k} U_{1:k}^\top - U_{1:H} B_j U_{1:H}^\top) y^{(j)}   || &\leq \max_{j=1, \dots, p}||y^{(j)}||^2 || U_{1:k} U_{1:k}^\top - U_{1:H} B_j U_{1:H}^\top||\\
        &\lesssim n^{3/5},
    \end{aligned}   
    \end{equation*}
since, with probability at least $1- o(1)$, $\max_{j=1, \dots, p} ||y^{(j)}|| \lesssim \sqrt{n}$ by Lemma \ref{lemma:y_j}, $||P_0 - U_{1:k} U_{1:k}^\top || \lesssim \frac{1}{\sqrt{n}}$, by Lemma \ref{lemma:U}, and $||U_{1:k} U_{1:k}^\top - U_{1:H} B_j U_{1:H}^\top|| \lesssim   \frac{1}{n^{2/5}} $, since $||B_j - B || \lesssim  \frac{1}{n^{2/5}}$ by Proposition \ref{prop:psi_l} and Lemma \ref{lemma:hyperparams}, where $B$ is defined in \eqref{eq:B}. 
Combining the results above with $\frac{1}{v_n} = \frac{1}{n} + O\big(\frac{1}{n^2}\big)$, we obtain the desired result.

\end{proof}

\section{Derivation of the shrinkage hyperparameters}
\label{sec:hyp_derivation}

We provide the derivation of the empirical Bayes shrinkage hyperparameters used in Section \ref{subsec:hyperparams}. 
Since \(H\) is intentionally chosen as an upper bound on the true factor dimension, the ordinary spectral loading estimate \(V_{1:H}D_{1:H}/\sqrt n\) contains contributions from overfitted noise directions. We therefore correct for such contribution and start from the initial loading estimate
\begin{equation*}
\bar\Lambda=\frac{1}{\sqrt n}V_{1:H}\left(D_{1:H}^2-d_{H+1}^2I_H\right)^{1/2},
\label{eq:bc_loading_estimate}
\end{equation*}
so that \(\bar\Lambda\bar\Lambda^\top=n^{-1}V_{1:H}(D_{1:H}^2-d_{H+1}^2I_H)V_{1:H}^\top\). The subtraction of \(d_{H+1}^2\) removes the empirical bulk-noise contribution associated with overfitted components.

We use \(\bar\Lambda\) to match prior expected loading magnitudes. Under the prior \(\widetilde\lambda_j \mid \widetilde\sigma_j^2,\tau_j^2,\Psi \sim N_H(0,\widetilde\sigma_j^2\tau_j^2\Psi)\), with \(\Psi=\operatorname{diag}(\psi_1^2,\ldots,\psi_H^2)\)
\begin{equation}
E\{\|\widetilde\lambda_j\|^2 \mid \widetilde\sigma_j^2,\tau_j^2,\Psi\}=\widetilde\sigma_j^2\tau_j^2\operatorname{tr}(\Psi)=\widetilde\sigma_j^2\tau_j^2H,
\label{eq:row_prior_moment}
\end{equation}
where the last equality is due to the normalization condition $\operatorname{tr}(\Psi)=H$. The corresponding spectral estimate of \(\|\lambda_j\|^2\) is the squared norm of the \(j\)-th row of \(\bar\Lambda\), namely
\begin{equation*}
\|\bar\lambda_j\|^2=\frac{1}{n}v_{j,1:H}^\top\left(D_{1:H}^2-d_{H+1}^2I_H\right)v_{j,1:H}.
\label{eq:row_spectral_moment}
\end{equation*}
Matching this quantity with \eqref{eq:row_prior_moment}, after replacing \(\widetilde\sigma_j^2\) by \(\hat \sigma_j^2=\|(I_n-U_{1:H}U_{1:H}^\top)y^{(j)}\|^2/(n-H)\vee c_\sigma\), gives
\begin{equation*}
\widehat \tau_j^2=\frac{v_{j,1:H}^\top\left(D_{1:H}^2-d_{H+1}^2I_H\right)v_{j,1:H}}{nH\hat \sigma_j^2}, \qquad j=1,\ldots,p .
\label{eq:tau_derivation}
\end{equation*}

Similarly, matching the squared norm of the \(\ell\)-th column of \(\bar\Lambda\)  \(\|\widehat\lambda^{(\ell)}\|^2=(d_\ell^2-d_{H+1}^2)/n\) with \(E[\|\widetilde\lambda^{(\ell)}\|^2 \mid \{\widetilde\sigma_j^2,\tau_j^2\}_{j=1}^p,\psi_\ell^2]=\psi_\ell^2\sum_{j=1}^p\widetilde\sigma_j^2\tau_j^2\) and replacing the \(\widetilde\sigma_j^2\)'s and $\tau_j^2$'s with \(\hat \sigma_j^2\)'s and \(\widehat \tau_j\)'s, we obtain 
\begin{equation*}
\widehat\psi_\ell^2=H\frac{d_\ell^2-d_{H+1}^2}{\sum_{m=1}^H(d_m^2-d_{H+1}^2)}, \qquad \ell=1,\ldots,H .
\label{eq:psi_derivation}
\end{equation*}
Thus \(\widehat\psi_\ell^2\) controls global component-wise shrinkage according to the bias-corrected strength of the \(\ell\)-th empirical factor, while \(\widehat\tau_j^2\) controls outcome-wise shrinkage according to the signal-to-noise ratio of the \(j\)-th variable.

\section{Additional details about the numerical experiments and the application}

To estimate $\sigma_0^2$ for constructing the approximation of the confidence and credible intervals, provided in equations \eqref{eq:conf_int} and \eqref{eq:cred_int} respectively, we take $\hat \sigma^2 = \frac{1}{p} \sum_{j=1}^p \delta_j$, where the $\delta_j$'s are defined in Section \ref{subsec:general_methodology}.
For \texttt{MGSP}, we obtain 3000 Monte Carlo samples and discard the first 1000 as burn-in. We fit \texttt{ROTATE} and \texttt{MGSP} with default hyperparameters. For methods that do not provide an estimate for the residual error variances, we estimate them by the empirical variance of the fitted residuals obtained by subtracting the estimated signal from the data matrix. For \texttt{BC}, we estimate $\Lambda \Lambda^\top$ via $\hat X \hat X^\top /n$, where $\hat X$ is the estimate of the low-rank signal $M \Lambda^\top$. 
Coverage results presented in Table \ref{tab:uq} refer to a random $100 \times 100$ submatrix of $\Theta_0$.  

For the application reported in Section \ref{sec:application}, we pre-processed the data by transforming each variable as $\tilde y_{ij} = \Phi^{-1}(\hat F_{j}(y_{ij}^{m}))$, where $\hat F_{mj}(\cdot)$ is the empirical cumulative distribution function of the $j$-th variable. 

All experiments were run on a laptop with 11th Gen Intel(R) Core(TM) i7-1165G7 @ 2.80GHz and 16GB RAM.

\end{document}